\newcommand{\E}{\mathbb{E}}
\newcommand{\sigp}{\sigma_p^2}
\newcommand{\widehatp}{\widehat{p}}
\newcommand{\widehatpi}{\widehat{p}_i}
\newcommand{\widehatsigi}{\widehat{\sigma}_i^2}
\newcommand{\widehatsigp}{\widehat{\sigma}_p^2}
\newcommand{\bersumP}{\Dee(k)}
\newcommand{\berP}{\Dee}
\newcommand{\initialvariance}{\mathcal{M}}
\newcommand{\berryesseen}{\gamma}
\newcommand{\var}{{\rm Var}}
\newcommand{\Lap}{{\rm Lap}}
\newcommand{\truncatedfactor}{\lambda}
\newcommand{\thirdmoment}{\rho}
\newcommand{\Dee}{\mathcal{D}}
\newcommand{\Ber}{{\rm Ber}}
\newcommand{\Bin}{{\rm Bin}}
\newcommand{\medk}{k_{\rm med}}
\newcommand{\linearest}{\text{\rm \texttt{NLE}}}
\newcommand{\eps}{\epsilon}
\newcommand{\sensitivity}{\Lambda}
\newcommand{\meanest}{\rm \texttt{mean}_{\eps,\delta}}
\newcommand{\varianceest}{\texttt{variance}_{\eps,\delta}}
\newcommand{\expmech}{\texttt{EM}_{\eps}}
\newcommand{\kneighbor}{\kappa}
\newcommand{\variancesamplesize}{L}
\newtheorem{example}{Example}
\newcommand{\bersumPi}[1]{\Dee({#1})}
\newcommand{\pinitial}{\widehat{p}_{\epsilon}^{\rm initial}}
\newcommand{\pnpideal}{\widehat{p}^{\rm ideal}}
\newcommand{\pideal}{\widehat{p}_{\epsilon}^{\rm ideal}}
\newcommand{\pnpinitial}{\widehat{p}^{\rm initial}}
\newcommand{\pnprealistic}{\widehat{p}}
\newcommand{\prealistic}{\widehat{p}_{\epsilon}}
\newcommand{\LS}{\text{\rm \texttt{LS}}}
\newcommand{\unknownki}{private $k$ }
\newcommand{\knownki}{\text{public-size }}
\newcommand{\prealisticunknown}{\hat{p}_{\epsilon}^{\text{\rm priv } k}}
\newcommand{\removeproofs}[1]{\begin{proof} \textcolor{blue}{proof written but hidden}\end{proof}}
\newcommand{\showproofs}[1]{#1}
\newcommand{\concentrationbound}[5]{f^{#1}_{#2}(#3, #4, #5)}
\definecolor{DarkGreen}{rgb}{0.1,0.5,0.1}
\theoremstyle{plain}
\newtheorem{theorem}{Theorem}[section]
\newtheorem{lemma}[theorem]{Lemma}
\theoremstyle{definition}
\newtheorem{definition}[theorem]{Definition}
\theoremstyle{remark}
\begin{document}

\title{Mean Estimation with User-level Privacy under Data Heterogeneity}

\author{Rachel Cummings\footnotemark[1] \and Vitaly Feldman\footnotemark[2] \and Audra McMillan\footnotemark[2] \and Kunal Talwar\footnotemark[2]}

\renewcommand{\thefootnote}{\fnsymbol{footnote}}
\footnotetext[1]{Columbia University. Part of this work was completed while the author was at Apple. Supported in part by NSF grant CNS-2138834 and an Apple Privacy-Preserving Machine Learning Award.}
\footnotetext[2]{Apple}

\renewcommand{\thefootnote}{\arabic{footnote}}

\date{}

\maketitle

\begin{abstract}
A key challenge in many modern data analysis tasks is that user data are heterogeneous. Different users may possess vastly different numbers of data points. More importantly, it cannot be assumed that all users sample from the same underlying distribution.  This is true, for example in language data, where different speech styles result in data heterogeneity. In this work we propose a simple model of heterogeneous user data that allows user data to differ in both distribution and quantity of data, and provide a method for estimating the population-level mean while preserving user-level differential privacy. We demonstrate asymptotic optimality of our estimator and also prove general lower bounds on the error achievable in the setting we introduce. 
\end{abstract}

\section{Introduction}
Many practical problems in statistical data analysis and machine learning deal with the setting in which each user generates multiple data points. In such settings the distribution of each user's data may be somewhat different and, furthermore, users may possess vastly different numbers of samples. This issue is one the key challenges in federated learning \citep{Kairouz21} leading to considerable interest in models and algorithms that address this issue.

As an example, consider the task of next-word prediction for a keyboard. Different users typing on a keyboard may have different styles of writing or focus on different topics, leading to different distributions. There are aspects of the language that are common to all users, and likely additional aspects of style that are common to large groups of users. Thus while each user has their own data distribution, there are commonalities between the distributions, and additional commonalities amongst distributions corresponding to particular subsets of users. Modeling and learning such relationships between users' distributions is crucial for building a better global model for all users, as well as for personalizing models for users. 

The focus of this work is on differentially private algorithms for such settings. We assume that there is an unknown global meta-distribution $\Dee$. For each user $i$, a personal data distribution $\Dee_i$ is chosen randomly from $\Dee$ (for example, by sampling a set of parameters that define $\Dee_i$). Each user then receives some number $k_i$ of i.i.d.~samples from $\Dee_i$. The goal is to solve an analysis task relative to $\Dee$, with an eye towards better modeling of each $\Dee_i$ even when $k_i$ is small. This abstract setting can model many practical settings where the relationships between the $\Dee_i$'s take different forms. Indeed the standard loss in federated learning is the (unweighted) average over users of a per-user loss function~\citep[Sec. 3.3.2]{Kairouz21}, which corresponds to learning when the underlying distribution is $\Dee$. Little theoretical work has been done in this setting and even the most basic statistical tasks are poorly understood. Thus we start by focusing on the fundamental problem of mean estimation. Specifically, in our model, $\Dee$ is a distribution on the interval $[0,1]$ with unknown mean $p$ and unknown variance $\sigma_p^2$. Further, we assume that $\Dee_i$ is simply a Bernoulli distribution with mean $p_i \sim \Dee$. 

While the general $\mathcal{D}_i$ setting is of interest, the Bernoulli case captures a variety of interesting use cases. For example, each sample from the Bernoulli distribution could represent whether or not the user has clicked on an ad. Another common example is model evaluation, where the user produces a Bernoulli sample by engaging or not engaging with a feature (e.g., phone keyboard next word suggestion, crisis helpline link, search engine knowledge panels, sponsored link in search results, etc.). As a concrete example, a language model is used to make the next word suggestions on a phone keyboard. A new version of this model would be first tested to measure the average suggestion acceptance rate over users. Each user would thus generate a set of independent Bernoulli r.v.’s with each individual mean $p_i$ corresponding to the model accuracy for the specific user. Heterogeneity comes from different users typing differently (and hence model accuracy varying across users) and using the keyboard with different frequency. Note that the distribution of model accuracies among users is the meta distribution $\mathcal{D}$ in our work. More generally, measuring the average accuracy of a classification model among a large group of users is an important task in itself. Such models are deployed in privacy-sensitive applications such as health and finance. The resulting statistics may need to be shared with third parties or other teams within a company, raising potential user privacy concerns.

Our main contribution is a differentially private algorithm that estimates the mean of $\Dee$ in this heterogeneous setting. We first study this question in an idealized setting, where the variance of $\Dee$ is known, and no privacy constraints. Here the optimal non-private estimator for $p_i$ is simple and linear: it is a weighted linear combination of the individual user means with weights that depend on the $k_i$'s and on $\sigma_p$. The variance of this estimate is $\sigma_{ideal}^2 \approx (\sum_i \min(k_i, \sigma_p^{-2}))^{-1}$. This expression has a natural interpretation: this is the variance from using $\min(k_i, \sigma_p^{-2})$ samples from user $i$ and averaging all the Bernoulli samples thus obtained. 
We then design a differentially private estimator for $p$. We show that under mild assumptions, there is no asymptotic price to privacy (and to not knowing $\sigma_p$). That is, our differentially private estimator  has variance $\tilde{O}(\sigma_{ideal}^2)$. For some intuition, note that the restriction on using at most $\sigma_p^{-2}$ samples from each user ensures that the estimator is not too affected by their individual mean $p_i$.
Interestingly, the estimator achieving this bound in the private setting is non-linear. Further, we show that $\sigma_{ideal}^2$ is close to the best achievable variance, under some mild technical conditions. 

Our technical results highlight several of the challenges associated with ensuring user-level privacy when data is heterogeneous. For example, in the heterogeneous setting, the optimal choice of weights for each user contribution depends on properties of $\Dee$ that also need to be estimated from the data. Further, we show a novel approach to proving lower bounds for private statistical estimation in the heterogenous setting. Our approach builds on the proof of the Cram\'er-Rao lower bound in statistics, and we show how privacy terms can be incorporated in this approach to show near optimality of our algorithms for nearly every setting of $k_i$'s. These tools and insights should be useful for modeling and designing algorithms for more involved data analysis tasks. 

We note that the optimal algorithm for this problem was not known prior to this work, even in the special case where all $\Dee_i$'s are identical (or, equivalently, $\sigma_p^2=0$) but users hold different numbers of samples. In the absence of privacy constraints, this setting poses no additional complexity over the case where each user has a single data point, since the data points all come from the same distribution. However, with the requirement of user-level differential privacy, even this special case appears to require many of the technical tools developed in this work (see Section \ref{s.constantpi} for a detailed discussion).

We aim to help foster similar model-driven exploration in other settings. There have been attempts to handle heterogeneity by phrasing the problem as meta-learning or multi-task learning~\citep[Sec 3.3.3]{Kairouz21}. These works rely on implicit assumptions about the different distributions. Our goal is to start with a more principled approach that makes explicit the assumptions on the relationship between different distributions and use that to derive algorithms. For example, if we were to model the $\Dee_i$'s as having means coming from a mixture of Gaussians, the estimation of cluster means would be a necessary step in an EM-type algorithm. 
Our choice of $\Dee_i$'s being Bernoulli is meant to capture discrete distribution learning problems that have been extensively studied in private federated settings. Our techniques are general and extend naturally to real-valued random variables where, e.g., $\Dee_i$ is a Gaussian with mean $p_i$ and known variance. While we make minimal assumptions on $\Dee$, our results asymptotically match the lower bounds for the case of $\Dee$ being Gaussian with known variance. Our techniques also have natural extensions to higher dimensions.

\paragraph{Summary of our results:}
Our main results involve three estimators; an idealized (non-realisable) estimator $\pideal$ that assumes that the mean and variance of $\Dee$ are known to the algorithm,
an estimator $\prealistic$ that is private with respect to the user's samples, but not with respect to each user's number of samples $k_i$, and finally an estimator $\prealisticunknown$ that is private with respect to both the samples \emph{and} the number of samples. Let $\widehatpi$ be the mean of the $k_i$ samples from user $i$. The estimators $\prealistic$ and $\prealisticunknown$ both require as input initial, less accurate $(\eps, \delta)$-DP mean and variance estimators $\meanest$ and  $\varianceest$. The main results of this paper can be (informally) summarised as follows:
\begin{itemize}
    \item \textbf{Near optimality of $\pideal$ [Theorem~\ref{optimalityfisherinfo}].} For any parameterized family of distributions $p\mapsto\Dee_p$, such that the Fisher information of $\widehatpi$ is inversely proportional to the variance of $\widehatpi$ for all $i$, each $\widehatpi$ is sufficiently-well concentrated (e.g. sub-Gaussian) and $p\in[1/3,2/3]$, we have that $\pideal$ is minimax optimal, up to logarithmic (in $n$) factors, among all unbiased estimators of $p$. The estimator $\pideal$ itself is not unbiased,
    but it has very low bias. The proof of this result involves a Cram\'er-Rao style argument which may be of independent interest. This result allows us to use $\pideal$ as a yardstick by which to compare $\prealistic$ and $\prealisticunknown$.
    \item \textbf{Near optimality of $\prealistic$ [Theorem~\ref{metatheorem}].}  Assume there exists mean and variance estimators, $\meanest$ and $\varianceest$, such that when
    run with a constant fraction (say $n/10$) of the users, $\meanest$ returns a sufficiently good estimate of $p$ (roughly  no worse than the estimate from any single user, and implies a constant multiplicative approximation to $p(1-p)$), and when run with $\log n/\epsilon$ users, $\varianceest$ returns a constant multiplicative approximation to $\sigma_p^2$. If the maximum $k_i$ and median $k_i$ are within a factor of $(n\eps/\log n)-1$, then the variance of $\prealistic$, with $\meanest$ and $\varianceest$ as the inputted initial estimators, is within a constant factor of the variance of $\pideal$. The conditions on $\meanest$ and $\varianceest$ are not particularly stringent and such estimators exist, for example, when $\Dee$ is a truncated Gaussian distribution with mean bounded away from 0 or 1 and sufficiently small variance.
   \item \textbf{Near Optimality of $\prealisticunknown$ [Theorem~\ref{thm.privatek}].} 
   Under slightly more stringent conditions on $\Dee$ and the assumption that the maximum $k_i$ and median $k_i$ are within a factor of $O(n\eps^2/\log n)$, we extend the upper bounds to the case when $k_i$'s are also considered private information. The conditions are again satisfied, for example, by truncated Gaussian distributions with mean bounded away from 0 or 1  and sufficiently small variance. 
   \item \textbf{Lower bound in terms of $k_i$ [Corollary~\ref{cor.lower}].} Finally, we show that for any sequence $k_1,\cdots,k_n$ and variance $\sigp$ there exists $k^*$ and a family of distributions $p
   \mapsto\Dee_p$ such that the minimax optimal error among all unbiased estimators of $p$, for $p$ in the range $[1/3,2/3]$, is lower bounded by
   \[\tilde{\Omega}\left(\min\left\{ \sqrt{\tfrac{\frac{k^*}{\epsilon^2}+\sum_{i=1}^n \min\{k_i, k^*\}}{(\sum_{i=1}^n \min\{k_i,\sqrt{k_ik^*}\})^2}}, \frac{\sigma_p}{\sqrt{n}}\right\}\right).\]
    \end{itemize}

We note that our main algorithmic results require concentration of the meta-distribution $\mathcal{D}$. We note that in practice, this is not an unreasonable assumption. For example, in the case of model evaluation, it may be be reasonable to assume that a general model has similar accuracy for the vast majority of users, or formally, that the model accuracy is well-concentrated.

\subsection{Related Work}

Frequency estimation in the example-level privacy model has been well-studied in the central~\citep{Dwork:2006,DworkRoth:14} and local models~\citep{hsu2012distributed, ErlingssonPK14, chen2020breaking, acharya2019communication, Acharya:2019}. Similarly, private mean estimation has been well studied in both central~\citep{Dwork:2006, HardtT10} and local models~\citep{duchi2018minimax, DuchiR19, bhowmick2019protection} of privacy. These works have focused on providing example-level privacy (rather than user-level) in settings with homogeneous data, i.e., i.i.d.~samples. 

\cite{LiSYKR20} recently studied the problem of learning discrete distributions in the homogeneous cases (same distribution and same number of samples per user) with user-level differential privacy, and ~\cite{LevySAKKMS21} extended such results to other statistical tasks. These works also consider the setting with different number of samples per user although only via a reduction to same number of samples by discarding the data of users that have less than the median number of samples and effectively only using the median number of samples from all the other users. This approach can be asymptotically suboptimal for many natural distributions of $k_i$'s and is also likely to be worse in practice. Previously,~\cite{mcsherry2009differentially} showed how to build a (user-level) differentially private recommendation system, and~\cite{McMahanRTZ18} showed how to train a language model with user-level differential privacy.

User-level differential privacy in the context of heterogeneous data distributions has been studied in the constant $k_i$ setting \cite{Ozkara:2022}. Much of the complexity in our setting arises from variation in the $k_i$ values, which makes it challenging to maintain user-level privacy while leveraging the additional data points from users with a large number of data points. 

The challenges to optimization due to data heterogeneity have also been studied; \cite{ZhouC17, hanzely2020federated}, and~\citet{EichnerKMST19} study the approach of using different models for different groups from a convex optimization point-of-view.

Mathematically, similar issues are addressed in meta-analysis \citep{borenstein2021introduction, wiki:meta-analysis}, where the heterogeneity comes from different studies instead of different users. The non-private approach of inverse variance weighting that we recap in Section 3 is standard in that context.

\section{Model and Preliminaries}\label{sec:modelsandprelims}

Let $\Dee$ be a distribution on $[0,1]$ with (unknown) mean $p$ and variance $\sigma_p^2$. We assume a population of $n\in\mathbb{N}$ users, where each user $i \in [n]$ has a hidden variable $p_i\sim \Dee$ and $k_i \in\mathbb{N}$ samples $x_i^1, \ldots, x_i^{k_i} \sim_{i.i.d.} Ber(p_i)$. That is, the samples of user $i$ are i.i.d.~from a Bernoulli distribution with parameter $p_i$, which we will denote $\Dee_i=$Ber$(p_i)$. Assume without loss of generality that individuals are sorted by their $k_i$, so that $k_1 \geq \cdots \geq k_n$. The hidden variables $p_i$ of each user are unknown to the analyst. In the non-private setting, the samples $x_i^j$ and $k_i$ will be accessible to the analyst. In the private setting, access to these data is constrained.

The analyst's goal is to estimate the population mean $p$ with an estimator of minimum variance in a manner that is differentially private with respect to user data ($p_i$ and $\{x_i^j\}$).  
Each user provides their own estimate of their $p_i$ to the analyst based on their data $x_i$:
$\widehatpi = \frac{1}{k_i} \sum_{j=1}^{k_i} x_i^j$.
The analyst can then aggregate these (possibly along with other information) into her estimate of $p$. 

Let us first give some intuition for the distribution of these $\widehatp_i$. Let $\bersumP$ be the distribution that first samples $p_i\sim \Dee$, then samples $x_1, \cdots, x_k\sim Ber(p_i)$ and finally outputs $\widehat{p}_i=\frac{1}{k}\sum_{i=1}^k x_i$. The following lemma (proven in Appendix~\ref{appendix:modelsandprelims}) shows that the variance of $\widehat{p}_i$ is larger than $\sigma_p^2$ and transitions from $p(1-p)$ to $\sigma_p^2$ as $k$ increases (equivalently as $\widehatp_i$ concentrates around $p_i$). 

\begin{restatable}{lemma}{sigmai}
\label{lem.sigi} 
For all distributions $\Dee$ supported on $[0,1]$ with mean $p$ and variance $\sigp$,
$\sigma_p^2\le p(1-p)$. Further, $\mathbb{E}[\bersumP] = p$ and $\var(\bersumP) = \frac{1}{k}p(1-p)+\left(1-\frac{1}{k}\right)\sigp$.
\end{restatable}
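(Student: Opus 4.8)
The plan is to prove the three assertions in order, each via an elementary conditioning argument; none of them requires more than the tower property and the law of total variance.

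\textbf{The bound $\sigp \le p(1-p)$ and the mean.} Let $X\sim\berP$. Since $X$ is supported on $[0,1]$ we have $X^2\le X$ pointwise, hence $\E[X^2]\le\E[X]=p$, and therefore $\sigp=\E[X^2]-p^2\le p-p^2=p(1-p)$. For the mean, recall that $\bersumP$ is generated by first drawing $p_i\sim\berP$ and then setting $\widehatpi=\frac1k\sum_{j=1}^k x_j$ with $x_j\sim\Ber(p_i)$ i.i.d.; conditioned on $p_i$ each $x_j$ has mean $p_i$, so $\E[\widehatpi\mid p_i]=p_i$, and taking expectation over $p_i\sim\berP$ gives $\E[\bersumP]=\E[p_i]=p$.

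\textbf{The variance.} I would apply the law of total variance,
\[
\var(\widehatpi)=\E\big[\var(\widehatpi\mid p_i)\big]+\var\big(\E[\widehatpi\mid p_i]\big).
\]
Conditioned on $p_i$, $k\widehatpi\sim\Bin(k,p_i)$, so $\var(\widehatpi\mid p_i)=\frac1k p_i(1-p_i)$; and since $\E[\widehatpi\mid p_i]=p_i$, the second term equals $\var(p_i)=\sigp$. It then remains only to compute $\E[p_i(1-p_i)]=\E[p_i]-\E[p_i^2]=p-(\sigp+p^2)=p(1-p)-\sigp$. Substituting,
\[
\var(\bersumP)=\tfrac1k\big(p(1-p)-\sigp\big)+\sigp=\tfrac1k p(1-p)+\big(1-\tfrac1k\big)\sigp,
\]
as claimed. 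The monotone transition from $p(1-p)$ (at $k=1$) down to $\sigp$ (as $k\to\infty$) referenced in the surrounding text is then immediate from this closed form together with the first part, since the coefficient of $\sigp$ increases in $k$ and $p(1-p)-\sigp\ge0$.

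\textbf{Main obstacle.} There is essentially none: the computation is routine. The only point meriting care is keeping the two independent sources of randomness — the draw of $p_i$ from $\berP$ and the draw of the $k$ Bernoulli samples given $p_i$ — cleanly separated, which is exactly what conditioning on $p_i$ and invoking the tower property / law of total variance accomplishes.
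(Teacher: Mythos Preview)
Your proposal is correct and follows essentially the same approach as the paper: both use $X^2\le X$ on $[0,1]$ for the bound $\sigp\le p(1-p)$, the tower property for the mean, and the law of total variance with $\E[p_i(1-p_i)]=p(1-p)-\sigp$ for the variance formula.
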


We assume that $k_i$ and $p_i$ are independent, so the amount of data an individual has is independent of her data distribution. This is crucial for the problem setup: in order for learning from the heterogeneous population to be advantageous, there must a common meta-distribution is shared across all individuals in the population, rather than a meta-distribution only for each fixed $k_i$.
If $k_i$ and $p_i$ can be arbitrarily correlated, then the meta-distribution for each value of $k_i$ can be different. Hence, the best solution in that setting is to learn on each sub-population (where the sub-populations are defined by their value of $k_i$) separately. 
While this assumption is natural in some settings, it is unlikely to hold in others -- for example, different writing styles that are more or less verbose. In future work, it may be interesting to explore how various heterogeneity assumptions affect learning algorithms.

\subsection{Differential Privacy}\label{prelim.dp}

Differential privacy (DP) \citep{Dwork:2006} informally limits the inferences that can be made about an individual as a result of computations on a large dataset containing their data. This privacy guarantee is achieved algorithmically by randomizing the computation to obscure small changes in the dataset. 
The definition of differential privacy requires a \emph{neighbouring relation} between datasets. If two datasets $D$ and $D'$ are neighbours under the neighbouring relation, then differences between these two datasets should be hidden by the private algorithm. 

\begin{definition}[ $(\epsilon,\delta)$-Differential Privacy \citep{Dwork:2006}]\label{def:DP} Given $\epsilon\ge0$, $\delta\in[0,1]$ and a neighbouring relation $\sim$,
a randomized mechanism $\mathcal{M}:\mathfrak{D}~\rightarrow~\mathcal{Y}$ from the set of datasets to an output space $\mathcal{Y}$ is $(\epsilon,\delta)$-\emph{differentially private} if for all neighboring datasets $D\sim D' \in \mathfrak{D}$, and all events $E\subseteq\mathcal{Y}$,
\begin{align*} \label{def:dp-with-inputs}
    &
    \Pr[\mathcal{M}(D) \in E]
    \leq e^\epsilon \cdot\Pr[\mathcal{M}(D') \in E] + \delta,
\end{align*}
where the probabilities are taken over the random coins of $\mathcal{M}$. When $\delta=0$, we may refer to this as \emph{$\epsilon$-differential privacy}.
\end{definition}

When each user has a single data point, the neighbouring relation is typically defined as: $D$ and $D'$ are neighbours if they differ on the data of a single individual, i.e., a single data point.
In our setting where users have multiple data points, we must distinguish between \emph{user-level} and \emph{event-level} DP.  The former considers $D$ and $D'$ neighbours if they differ on all data points associated with a single user, whereas the latter considers $D$ and $D'$ neighbours only if they differ on a \emph{single} data point, regardless of the number of data points contributed by that user. Naturally, user-level DP provides substantially stronger privacy guarantees, and is often more challenging to achieve from a technical perspective. In this work, we will provide user-level DP guarantees. 

Further, when defining user-level DP where users have heterogeneous quantities of data, we also need to distinguish between settings where the number of data points held by each user is protected information, and settings where it is publicly known. We'll refer to the former as \emph{\unknownki user-level differential privacy}, where the entry that differs between neighboring databases can have arbitrarily different number of data points, and the latter as \emph{\knownki user-level differential privacy}, where the amount of data held by each user is the same in neighboring databases. Formally, let $D_i=\{x_i^1, \cdots, x_i^{k_i}\}$ be the data of user $i$ for each $i \in [n]$. For \unknownki user-level differential privacy, we say $D$ and $D'$ are neighbours if there exists an index $i$ such that for all $j\in[n]\backslash\{i\}$, $D_j=D_j'$. For \knownki user-level differential privacy, we say $D$ and $D'$ are neighbours if they are neighbours under \unknownki user-level differential privacy and additionally $|D_i|=|D_i'|$ for all $i\in[n]$.

One standard tool for achieving $\epsilon$-differential privacy is the \emph{Laplace Mechanism}. For a given function $f$ to be evaluated on a dataset $D$, the Laplace Mechanism first computes $f(D)$ and then adds Laplace noise which depends on the \emph{sensitivity} of $f$, defined for real-valued functions as \[\Delta f = \max_{D,D' \text{ neighbors}} |f(D) - f(D')|.\] The Laplace Mechanism outputs $\mathcal{M}_L(D,f,\epsilon) = f(D) + \Lap(\Delta f/\epsilon)$, and is $(\epsilon,0)$-DP.

Differential privacy satisfies \emph{robustness to post-processing}, meaning that any function of a DP mechanism will retain the same privacy guarantee. DP also \emph{composes adaptively}, meaning that if an $(\epsilon_1,\delta_1)$-DP mechanism and and an $(\epsilon_2,\delta_2)$-DP mechanism are both applied to the \emph{same dataset}, then the entire process is $(\epsilon_1+\epsilon_2,\delta_1+\delta_2)$-DP. \emph{Parallel composition} of DP mechanisms says that if DP mechanisms are applied to disjoint datasets, then composition is not required. 
That is, if an $(\epsilon_1,\delta_1)$-DP mechanism and and an $(\epsilon_2,\delta_2)$-DP mechanism are each applied to  \emph{disjoint datasets}, then the entire process is $(\max\{\epsilon_1,\epsilon_2\},\max\{\delta_1,\delta_2\})$-DP with respect to both datasets together.

\section{A Non-Private Estimator}\label{s.nonpriv}

We begin by illustrating the procedure for computing an optimal estimator $\widehatp$ in the non-private setting. The general structure of the estimator will be the same in both the private and non-private settings. The analyst will compute the population-level mean estimate $\widehatp$ as a weighted linear combination of the user-level estimates $\widehatpi$.\footnote{In the non-private setting, this restriction is without loss of generality since the optimal estimator takes this form. In the private setting this is still near-optimal; see Section \ref{s.optpideal} for more details.} 
The key question is how to derive the weights so that individuals with more reliable estimates (i.e., larger $k_i$) have more influence over the final result. 

\begin{algorithm}
\caption{Non-private Heterogeneous Mean Estimation $\pnprealistic$}\label{alg.np}
 \textbf{Input:} number of users $n$, number of samples held by each user $(k_1,\ldots, k_n \; s.t. \; k_i\ge k_{i+1}$), user-level estimates $(\widehatp_1, \cdots, \widehatp_n)$.
\begin{algorithmic}[1]
\vspace{0.1in}
\State \textbf{\underline{Initial Estimates}}
\State $\pnpinitial = \sum_{i=9n/10}^n x_i^1$ \Comment{Initial mean estimate}
\State $\widehatsigp = \tfrac{1}{\log n(\log n -1)}\textstyle\sum_{i,j\in[\log n]}(\widehatp_i-\widehatp_j)^2$ \Comment{Initial variance estimate}
\vspace{0.1in}
\State \textbf{\underline{Defining weights}}
\For{$i=\log n$ to $9n/10$}
\State\label{eq.sighati} Compute
$\widehatsigi = \tfrac{1}{k_i} (\pnpinitial - (\pnpinitial)^2) + (1-\tfrac{1}{k_i})\widehatsigp.$ \Comment{Estimate individual variances}
\State $\widehat{w_i} = \tfrac{1/\widehat{\sigma}_i^2}{\sum_{j=\log n}^{9n/10} 1/\widehat{\sigma}_j^2}$ \Comment{Compute normalised weights}
\EndFor
\vspace{0.1in}
\State \textbf{\underline{Final Estimate}}
\State
\Return $\pnprealistic = \textstyle\sum_{i=\log n}^n \widehat{w_i} \widehatp_i$ \Comment{Final estimate}
\end{algorithmic}
\end{algorithm}

Let $\sigma_i^2$ be the variance of $\widehatp_i$. 
In an idealized setting where the $\sigma_i^2$ are all known, the analyst can minimize the variance of the estimator by weighting each user's estimate $\widehatpi$ proportionally to the inverse variance of their estimate. The weights are then normalised to ensure the estimate is unbiased. This approach yields the following estimator, which is optimal in the non-private setting \citep{Hartung:2008}:
\begin{equation}\label{nonprivideal}
\pnpideal = \textstyle\sum_{i=1}^n w_i^* \widehatp_i \text{   where  } w_i^* = \tfrac{1/\sigma_i^2}{\sum_{j=1}^n 1/\sigma_j^2}.
\end{equation}
In practice, the $\sigma_i^2$s are unknown, so the analyst must rely on estimates to assign weights. Fortunately, the user-level variance $\sigma_i^2$ can be expressed as a function of $k_i$ and the population statistics $p$ and $\sigp$, as shown in Lemma~\ref{lem.sigi}:
\begin{equation}\label{formulaforsigmai}
\sigma_i^2 = \tfrac{1}{k_i} (p - p^2) + (1-\tfrac{1}{k_i})\sigp.
\end{equation}
Now, $p$ and $\sigp$ are also unknown but since they are population statistics, we can use simple estimators to obtain initial estimates. These initial statistics can then be used to define the weights, resulting in a refined estimate of the mean $p$. 
Specifically, as outlined in Algorithm~\ref{alg.np}, we split users into three groups. The $\log n$ individuals with the most data are used to produce an estimate of $\var(\bersumPi{k_{\log n}})$, which serves as a proxy for $\sigp$. The $1/10$th of individuals with the least data are used to produce an initial estimate of the mean $p$. The remaining $9n/10-\log n$ individuals are used to produce the final estimate.
We split the individuals into separate groups to ensure the initial estimates and the final estimate are independent so we can easily obtain variance bounds on the final estimate. The specific sizes of the three groups are heuristic; the exact fraction $1/10$ is not necessary.
Under some mild conditions on $\Dee$, and if $n$ is large enough, the error incurred by $\pnprealistic$ is within a constant factor of the error incurred by the ideal estimator $\pnpideal$.\footnote{This can be observed by viewing the non-private setting as a simplified version of the setting studied in Section \ref{s.optpideal}, which proves near-optimality of (truncated) linear estimators for this problem.}

\section{A Framework for Private Estimators}\label{s.privest}

We now turn to our main result, which is a framework for designing differentially private estimators for the mean $p$ of the meta-distribution $\Dee$. 
We discussed in Section~\ref{s.nonpriv} the need for initial estimates of $p$ and $\sigp$ to weight the contributions of the users. In the non-private setting, there are canonical, optimal choices of these estimators; the empirical mean and empirical variance. In the private setting, these choices are not canonical, and different estimators may perform better in different settings. There is a considerable literature exploring various mean and variance estimators for the homogeneous, single-data-point-per-user setting. As such, we leave the choice of the specific initial mean and variance estimators as parameters of the framework. This allows us to focus on the nuances of the heterogeneous setting, not addressed in prior work.
In Section~\ref{instantiation}, we give a specific pair of private mean and variance estimators that provably perform well in our framework.

We will define three estimators: a ideal estimator $\pideal$ (only implementable if all the $\sigma_i^2$ are known), and a realisable estimator $\prealistic$ in the \knownki user-level DP setting, and a realisable estimator $\prealisticunknown$ in the \unknownki user-level DP setting.
The main result in the \knownki user-level DP setting (Theorem \ref{metatheorem}) is that under some mild conditions and assuming $n$ is sufficiently large, there exists an $(\epsilon, \delta)$-DP estimator $\prealistic$ (Algorithm \ref{alg.dp}) such that for some constant $C$,
\[\var(\prealistic)\le C\cdot \var(\pideal).\] 
In Section~\ref{s.privatek}, we extend this result to the case where $k_i$s are private and unknown to the analyst. We will maintain the optimality of the estimator (up to logarithmic factors), under slightly more restrictive conditions (Theorem~\ref{thm.privatek}).

\subsection{The Complete Information Private Estimator}

As in Section~\ref{s.nonpriv}, we begin with a discussion of the ideal estimator if the $\sigma_i$ were known. This ideal private estimator $\pideal$ has a similar form to $\pnpideal$ with some crucial differences. The first main distinction is that Laplace noise is added to achieve DP, where 
the standard deviation of the noise must be scaled to the sensitivity of the statistic. A natural solution would be to add noise directly to the non-private estimator $\pnpideal$, but the sensitivity of this statistic is too high. In fact, the worst case sensitivity of $\pnpideal$ is 1, which would result in the noise that completely masks the signal. Thus, the first change we make is to limit the weight of any individual's contribution by setting 
\[w_i = \tfrac{\min\{1/\sigma_i^2, T/\sigma_i\}}{\sum_{j=1}^n \min\{1/\sigma_j^2, T/\sigma_j\}}\]
for some truncation parameter $T$. Analogous to the weights used in Section \ref{s.nonpriv}, this choice of $w_i$ is still inversely proportional to $\sigma^2_i$ up to an upper limit that depends on the truncation parameter $T$, and then normalized to ensure the weights sum to 1 so the estimator is unbiased.
Intuitively, the parameter $T$ controls the trade-off between variance of the weighted sum of individual estimates (which is minimized by assigning high weight to low variance estimators) and variance of the noise added for privacy (which is minimized by assigning roughly equal weight to all users).

We make one final modification to lower the sensitivity of the statistic. Inspired by the Gaussian mean estimator of \cite{Karwa:2018}, we truncate the individual contributions $\widehatpi$ into a sub-interval of $[0,1]$. The truncation intervals $[a_i, b_i]$ are chosen to be as small as possible (to reduce the sensitivity and hence the noise added for privacy), while simultaneously ensuring that $\widehat{p}_i\in[a_i,b_i]$ with high probability (to avoid truncating relevant information for the estimation). In order to achieve this, we need a tail bound on the distribution $\Dee$. To maintain generality for now, we assume there exists a known function $\concentrationbound{k}{\Dee}{n}{\sigma_p^2}{\beta}$ that gives high-probability concentration guarantees of $\widehatpi$ around $p$, and is defined such that
\[\Pr\left(\forall i, |\widehatp_i-p|\le \concentrationbound{k_i}{\Dee}{n}{\sigma_p^2}{\beta}\right)\ge 1-\beta.\] Appendix \ref{s.truncest} presents a more detailed discussion of the structure of these concentration functions and how they may be estimated if they are unknown to the analyst.

We can now describe the full information, or \emph{ideal} estimator $\pideal$:
\begin{equation}\label{privoptimal1}
\pideal = \textstyle \sum_{i=1}^n w_i^* [\widehatpi]_{a_i}^{b_i}  + \Lap(\tfrac{\max_i w_i^*|b_i-a_i|}{\epsilon}), 
\end{equation}
where $[\widehat{p_i}]_{a_i}^{b_i}$ denotes the projection of $\widehat{p_i}$ onto the interval $[a_i, b_i]$ and
\begin{align}\label{idealtruncation}
a_i &= p-\concentrationbound{k_i}{\Dee}{n}{\sigma_p^2}{\beta}, \quad b_i = p+\concentrationbound{k_i}{\Dee}{n}{\sigma_p^2}{\beta},\;\text{ and }\; w_i^* = \tfrac{\min\{1/\sigma_i^2, T^*/\sigma_i\}}{\sum_{j=1}^n \min\{1/\sigma_j^2, T^*/\sigma_j\}}.
\end{align}
We would like to choose the truncation parameter $T^*$ to minimise the variance of the resulting estimator: 
\begin{align}\label{eq.varideal}
\var(\pideal)
&= \textstyle \sum_{i=1}^n (w_i^*)^2\var([\widehatpi]_{a_i}^{b_i})+\max_i\tfrac{(w_i^*)^2|b_i-a_i|^2}{\epsilon^2}.
\end{align}
Although we do not know $\var([\widehatpi]_{a_i}^{b_i})$ exactly, we do know that $[\widehatpi]_{a_i}^{b_i}=\widehatpi$ with high probability, and thus we can approximate $\var([\widehatpi]_{a_i}^{b_i})$ with $\sigma_i^2$. Throughout the remainder of the paper, we will assume that $\beta$ is chosen such that $\frac{1}{2}\sigma_i^2\le\var([\widehatpi]_{a_i}^{b_i}).$ Thus, we will approximate the optimal truncation parameter by
\begin{align}\label{idealthreshold}
T^*&= \arg\min_T \sum_{i=1}^n (w_i^*)^2\sigma_i^2+\max_i\frac{(w_i^*)^2|b_i-a_i|^2}{\epsilon^2} \notag \\
&=\arg\min_T \tfrac{1}{{(\sum_{j=1}^n \min\{1/\sigma_j^2, T/\sigma_i\})^2}}(\textstyle\sum_{i=1}^n\min\{1/\sigma_i^2, {T^2}\} +\max_i \tfrac{\min\{1/\sigma_i^4, T^2/\sigma_i^2\}|b_i-a_i|^2}{\epsilon^2} ).
\end{align}

We'll show in Section~\ref{s.optpideal} that under some conditions on the Fisher information of $\bersumP$, $\pideal$ is optimal up to logarithmic factors among all private unbiased estimators for heterogeneous mean estimation. 

\begin{example}
As a simple example, suppose that $p \in (\frac 1 3, \frac 2 3)$, $\sigma_p = 1/\sqrt{n}$, and $k_i = \lceil \frac n i\rceil$. In this case, an asymptotically optimal non-private estimator averages all the $\sum k_i = O(n \log n)$ available samples. It can be shown that this gives us an unbiased estimator with standard deviation $\Theta(\frac{1}{\sqrt{n \log n}})$. A naive sensitivity-based noise addition method will give us privacy error $O(\frac{1}{\varepsilon \log n})$, since the weight of the first user in this average is $\Theta(1/\log n)$. Our truncation-based algorithm will truncate the $i$th user's contribution to a range of width $\sqrt{\frac{\log n}{k_i}} \approx \sqrt{\frac{i\log n}{n}}$. Applying our algorithm would then give us privacy error $\Theta(\frac{1}{\varepsilon \sqrt{n \log n}})$. In other words, for constant $\varepsilon$, privacy does not have an asymptotic cost. We remark that in this case, any uniform weighted average will incur asymptotically larger standard deviation $\Omega(\frac{1}{\sqrt{n}})$.
\end{example}

\subsection{Realizable Private Heterogeneous Mean Estimation}\label{s.finalest}

 Our goal in this section is to design a realizable estimator $\prealistic$ that is competitive with the ideal estimator $\pideal$.  As in the non-private setting, we divide the individuals into three groups. The first group, consisting of the $n/10$ individuals with the lowest $k_i$ will be used to compute the initial mean estimate $\pinitial$. The $\variancesamplesize$ individuals with the largest $k_i$ will be used to compute the initial variance estimate $\widehatsigp$. These will respectively be computed using private subroutines $\meanest$ and $\varianceest$, which each provide event-level DP, as they each operate on only a single point from each user. These initial estimates will be plugged into expressions to compute $\widehatsigi$, $\widehat{a_i}$, and $\widehat{b_i}$ for the remaining individuals $\variancesamplesize+1\le i\le 9n/10$. As in the non-private setting, the specific sizes of these groups are heuristic. The important thing is that the size of the first two groups are large enough that the resulting mean and variance estimates are sufficiently accurate, and the last group contains $\Theta(n)$-users whose $k_i$ is above the median.
 
 Since the estimate $\pinitial$ used in $\widehat{a_i}$ and $\widehat{b_i}$ may have additional error up to $\alpha$ (which will depends on the additive accuracy guarantee of $\meanest$), we shift these estimates by an additive $\alpha$ to account for this error. Next, all of these intermediate estimates and the user-level mean estimates $\widehatpi$ from users $\variancesamplesize+1\le i\le 9n/10$ will be used to compute the optimal weight cutoff $\widehat{T}^*$, the optimal weights $\widehat{w}_i^*$ for each user $L +1\le i\le 9n/10$, and finally the estimator $\prealistic$ as a weighted sum of the truncated user-level estimates $[\widehatpi]_{\widehat{a}_i}^{\widehat{b}_i}$ plus Laplace noise. This procedure is presented in full detail in Algorithm \ref{alg.dp}. 

\begin{algorithm}[tbh]
\caption{Private Heterogeneous Mean Estimation $\prealistic$
}\label{alg.dp}
 \textbf{Input parameters:} privacy parameters $\eps>0$, $\delta\in[0,1]$, desired high probability bound $\beta\in[0,1]$, number of users $n$, an $(\eps,\delta)$-DP mean estimator $\meanest$, error guarantee on $\meanest$ $\alpha>0$, an $(\eps,\delta)$-DP variance estimator $\varianceest$, number of samples for variance estimator $\variancesamplesize$, and number of samples held by each user $(k_1,\ldots, k_n \; s.t. \; k_i\ge k_{i+1}$).\\
 \textbf{Input data:} User-level estimates $(\widehatp_1, \cdots, \widehatp_n)$
\begin{algorithmic}[1]
\State \textbf{\underline{Initial Estimates}}
\State $\pinitial = \meanest(x_{9n/10 + 1}^1, \cdots, x_n^1)$ \Comment{Initial mean estimate}
\State $\widehatsigp = \varianceest(\widehatp_1, \cdots, \widehatp_{\variancesamplesize})$ \label{varianceestimaterequired} \Comment{Initial variance estimate}
\vspace{0.1in}
\State \textbf{\underline{Defining weights and truncation}}
\For{$i=\variancesamplesize + 1$ to $9n/10$}
\State\label{hatsigi} Compute
$\widehatsigi = \tfrac{1}{k_i} (\pinitial - (\pinitial)^2) + (1-\tfrac{1}{k_i})\widehatsigp.$ \Comment{Estimate individual variances}
\State $\widehat{a_i} =  \pinitial-\alpha-\concentrationbound{k_{i}}{\Dee}{n}{\widehat{\sigma_p^2}}{\beta}$
\State\label{settingintervalbi1} $\widehat{b_i} =  \pinitial+\alpha+\concentrationbound{k_{i}}{\Dee}{n}{\widehat{\sigma_p^2}}{\beta}$
\Comment{Estimate truncation parameters}
\EndFor
\State\label{settrunc} $\widehat{T}^*=\arg\min_T \frac{\sum_{i=\variancesamplesize+1}^{9n/10}\min\{\frac{1}{\widehatsigi}, {T^2}\}+\max_{{\variancesamplesize+1}\le i\le {9n/10}} \frac{\min\{1/\widehat{\sigma_i}^4, {T^2/\widehatsigi}\}|\widehat{b_i}-\widehat{a_i}|^2}{\epsilon^2}}{{(\sum_{i=\variancesamplesize+1}^{9n/10} \min\{1/\widehat{\sigma}_j^2, T/\widehat{\sigma}_i\})^2}}$
\\\Comment{Compute weight truncation}
\For{$i=\variancesamplesize + 1$ to $9n/10$}
\State\label{setweight} $\widehat{w_i}^* = \frac{\min\{1/\widehatsigi, \widehat{T}^*/\widehat{\sigma_i}\}}{\sum_{j=\variancesamplesize+1}^{9n/10} \min\left\{1/\widehat{\sigma}_j^2, \widehat{T}^*/\widehat{\sigma}_i\right\}}$ \Comment{Compute weights}
\EndFor
\vspace{0.1in}
\State \textbf{\underline{Final Estimate}}
\State\label{sensitivity} $\sensitivity=  \max_{i\in[\variancesamplesize +1, 9n/10]}\frac{\min\{1/\widehatsigi, \widehat{T}^*/\widehat{\sigma_i}\}|\widehat{b_i}-\widehat{a_i}|}{\sum_{j=\variancesamplesize+1}^{9n/10} \min\left\{1/\widehat{\sigma}_j^2, \widehat{T}^*/\widehat{\sigma}_i\right\}}$ \Comment{Compute sensitivity}
\State Sample $Y \sim \Lap\left(\frac{\sensitivity}{\epsilon}\right)$ \Comment{Sample noise added for privacy}
\State\label{finalsum} \textbf{return} $\prealistic = \sum_{i=\variancesamplesize+1}^{9n/10} \widehat{w_i}^* [\widehatpi]_{\widehat{a_i}}^{\widehat{b_i}}  + Y$ \Comment{Final estimate}
\end{algorithmic}
\end{algorithm}

For the remainder of this section, we turn to establishing the accuracy requirements of $\meanest$ and $\varianceest$ that ensure that the variance of $\prealistic$ is within a constant factor of the variance of $\pideal$.

\begin{restatable}{theorem}{metatheorem}
\label{metatheorem}
For any $\eps>0$, $\delta\in[0,1]$, $\alpha>0$, $\beta\in[0,1]$, $n\in\mathbb{N}$, $0\le L\le 3n/5$, $(\eps,\delta)$-DP mean estimator $\meanest$, $(\eps,\delta)$-DP variance estimator $\varianceest$, and sequence $(k_1,\ldots, k_n \; s.t. \; k_i\ge k_{i+1}$), Algorithm~\ref{alg.dp} is $(\eps,\delta)$-DP. If,
\begin{itemize}
\item $\meanest$ is such that given $n/10$ samples from $\Dee$, with probability $1-\beta$, $|p-\pinitial|\le \concentrationbound{k_i}{\Dee}{n}{\sigma_p^2}{\beta}$ and $\pinitial(1-\pinitial)\in\left[\tfrac{1}{2}p(1-p), \tfrac{3}{2}p(1-p)\right]$,
\item $\varianceest$ is such that given $\variancesamplesize$ samples from $\bersumP$, with probability $1-\beta$, $\widehat{\sigma}_{p}^2\in\left[\var(\bersumPi{k}),~8\var(\bersumPi{k})\right]$,
\item the $k_i$s are such that $\tfrac{k_{1}}{k_{n/2}}\le\tfrac{n/2-\variancesamplesize}{\variancesamplesize}$,
\end{itemize}
then with probability $1-2\beta$, 
$\var(\prealistic)\le C\cdot\var(\pideal)$ for some absolute constant $C$. 
\end{restatable}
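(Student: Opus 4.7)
The plan is to prove the claim in four stages: (i) verify $(\eps,\delta)$-DP via parallel composition; (ii) condition on the probability $1-2\beta$ event that $\meanest$ and $\varianceest$ meet their accuracy guarantees, and translate these into constant-factor per-user bounds; (iii) compare the conditional variance of $\prealistic$ to the objective inside the argmin of~\eqref{idealthreshold} evaluated at $T=T^*$; and (iv) use the hypothesis $k_1/k_{n/2}\le (n/2-L)/L$ to pass from the middle group $\mathrm{mid}=[L+1,9n/10]$ to all users $[1,n]$. Privacy is immediate: $\pinitial$, $\widehatsigp$, and the noisy weighted sum act on disjoint subsets of users, so parallel composition reduces to verifying DP of each piece. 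The first two hold by assumption; for the third, the weights, truncation endpoints and threshold $\widehat{T}^*$ are post-processing of $(\pinitial,\widehatsigp)$ and thus public with respect to the middle users, and $\Lap(\sensitivity/\eps)$ noise is calibrated to the user-level sensitivity $\sensitivity=\max_i \widehat{w_i}^*|\widehat{b_i}-\widehat{a_i}|$ of the weighted sum restricted to $\mathrm{mid}$.

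For the accuracy step, Lemma~\ref{lem.sigi} gives $\sigma_i^2=\tfrac{1}{k_i}p(1-p)+(1-\tfrac{1}{k_i})\sigp$, and $\widehatsigi$ has the same form with $\pinitial$ and $\widehatsigp$ substituted. The hypotheses $\pinitial(1-\pinitial)\in[\tfrac12 p(1-p),\tfrac32 p(1-p)]$ and $\widehatsigp\le 8\var(\bersumPi{k_L})$, combined with $k_i\le k_L$ (which yields $\var(\bersumPi{k_L})\le\sigma_i^2$) and the elementary bound $\sigma_i^2\ge\tfrac{1}{k_i}p(1-p)$, immediately give $\widehatsigi\in[\tfrac12\sigma_i^2,\tfrac{19}{2}\sigma_i^2]$ for every $i\in\mathrm{mid}$. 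Similarly, taking as part of the mild conditions on $\Dee$ that $\concentrationbound{k}{\Dee}{n}{v}{\beta}$ is monotone and at most linear in $v$, the constant-factor approximation of $\widehatsigp$ to $\var(\bersumPi{k_L})$ yields $|\widehat{b_i}-\widehat{a_i}|=\Theta(|b_i-a_i|)$, with the additive shift $\alpha$ absorbed because the accuracy hypothesis on $\meanest$ forces $\alpha\le\concentrationbound{k_i}{\Dee}{n}{\sigp}{\beta}$.

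For the variance comparison, write $V(T;\{s_i\},\{w_i\})$ for the functional inside the argmin of~\eqref{idealthreshold}. From~\eqref{eq.varideal} together with $\var([\widehatpi]_{a_i}^{b_i})\ge\tfrac12\sigma_i^2$, we have $\var(\pideal)\ge\tfrac12 V(T^*;\{\sigma_i\}_{i\in[1,n]},\{|b_i-a_i|\}_{i\in[1,n]})$. On the conditioned event, truncation into any interval containing $\E[\widehatpi]$ does not increase variance, so the conditional variance of $\prealistic$ is at most $2V(\widehat{T}^*;\{\widehat{\sigma}_i\}_{i\in\mathrm{mid}},\{|\widehat{b_i}-\widehat{a_i}|\}_{i\in\mathrm{mid}})$. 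By the definition of $\widehat{T}^*$ as the argmin, this is at most $2V(T^*;\{\widehat{\sigma}_i\}_{i\in\mathrm{mid}},\{|\widehat{b_i}-\widehat{a_i}|\}_{i\in\mathrm{mid}})$, and the uniform constant-factor plug-in bounds from the previous paragraph change this value by at most a constant factor, yielding the upper bound $C_1\cdot V(T^*;\{\sigma_i\}_{i\in\mathrm{mid}},\{|b_i-a_i|\}_{i\in\mathrm{mid}})$.

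It remains to show $V(T^*;\{\sigma_i\}_{i\in\mathrm{mid}},\ldots)\le C_2\cdot V(T^*;\{\sigma_i\}_{i\in[1,n]},\ldots)$. Both the sum and the max in the numerator of $V$ are monotone under enlargement of the index set, so it suffices to verify that the denominator $\sum_i\min(1/\sigma_i^2,T^*/\sigma_i)$ over $\mathrm{mid}$ is a constant fraction of the same sum over $[1,n]$. The $n/10$ bottom users contribute harmlessly because $1/\sigma_i^2$ is non-increasing in $i$. The top-$L$ users invoke the hypothesis: since $1/\sigma_j^2\asymp\min(k_j/p(1-p),1/\sigp)$, a case analysis on whether $k_j\le p(1-p)/\sigp$ reduces the desired inequality to $Lk_1\le(n/2-L)k_{n/2}$ and $L\le n/2-L$, both implied by the stated hypothesis together with $L\le 3n/5$. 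Chaining the four stages gives $\var(\prealistic)\le C\cdot\var(\pideal)$ for an absolute constant $C$. The main technical obstacle is this last stage: the $\max$ term in the numerator of $V$ requires separate handling because its argmax can migrate between the two index sets, and the optimality transfer in stage (iii) only goes through cleanly because the plug-in bounds from stage (ii) are uniform across all $i\in\mathrm{mid}$ rather than merely on average.
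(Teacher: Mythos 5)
Your proposal is correct and follows essentially the same route as the paper's proof: parallel composition for privacy, the constant-factor plug-in bounds on $\widehat{\sigma}_i^2$ and $|\widehat{b}_i-\widehat{a}_i|$ (the paper's Lemma~\ref{lem.finalvariance}), the argmin transfer from $\widehat{T}^*$ to $T^*$, and the denominator comparison between the middle group and $[1,n]$ via the $k_1/k_{n/2}$ hypothesis (the paper's Equation~\eqref{removingdata}). The only cosmetic difference is that you handle the top-$L$ contribution by a case split on which branch of the $\min$ is active, whereas the paper routes it through the variance ratio $\sigma_{\medk}^2\le\frac{n/2-L}{L}\sigma_{k_{\max}}^2$; both yield the same bound.
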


The final assumption ensures that the $\variancesamplesize$ users with the most data can not estimate the mean of meta-distribution alone. In the setting where these $L$ users can give a very accurate estimate of the mean, we conjecture that there is little benefit in incorporating the data of the remaining users. If this assumption does not hold, then an estimator that better utilizes only the top log $n$ users may be optimal. The strictness of this condition depends on the sample complexity of estimating the variance of $\Dee(k)$.  We'll see in Section~\ref{s.popvar} that for well-behaved distributions like Gaussians, the sample complexity for obtaining a constant multiplicative approximation of $\var(\Dee(k))$ is $O(\log(1/\beta)/\epsilon)$. Thus for sufficiently well-behaved distributions, up to logarithmic factors, this condition simply requires that the number of data points held by the user with the most data is at most $n$ times the number of data points of the median user. If $n$ is large, then this is unlikely to be a limiting factor. 

The first two conditions of Theorem~\ref{metatheorem} ensure that the mean and variance estimates are sufficiently accurate to use in the remainder of the algorithm. Notice that the initial estimates do not need to be especially accurate. In fact, provided $p$ is not too close to 0 or 1, the DP mean estimator that simply adds noise to the sample mean achieves sufficient accuracy (see Lemma \ref{initialmeanestimate} for details). In Section~\ref{instantiation}, we also give a DP variance estimator that achieves the desired accuracy guarantee using only $L=\log n/\eps$ samples, under some mild conditions (Lemma \ref{multiplicativevariance}). Thus the set of mean and variance estimators that satisfy the accuracy requirements of Theorem \ref{metatheorem} are non-empty.
We note that the constants $1/2$
, $3/2$ and $8$ in Theorem~\ref{metatheorem} are not intrinsic; any constant multiplicative factors will suffice. We also note that the specific sizes of the three groups outlined in Algorithm~\ref{alg.dp} are heuristic and can be varied to ensure that the initial estimator achieves the required accuracy.

A full proof of Theorem \ref{metatheorem} is given in Appendix \ref{appendix.metathm}; we present intuition and a proof sketch here.

The main distinction between $\pideal$ and $\prealistic$ is the use of the output of the estimators $\meanest$ and $\varianceest$ to estimate $\sigma_i^2$, $a_i$ and $b_i$. Thus, the main component of the proof of Theorem~\ref{metatheorem} is to show that the conditions stated in the theorem are enough to ensure that $\widehat{\sigma_i}^2$, $\widehat{a}_i$ and $\widehat{b}_i$ are sufficiently accurate.

\begin{restatable}{lemma}{finalvariance}
\label{lem.finalvariance}
Given $\pinitial$, $\widehat{\sigma}_{p}^2$, and $k_i$, define $\widehat{\sigma_i}^2 = \frac{1}{k_i}\pinitial(1-\pinitial)+\frac{k_i-1}{k_i}\widehat{\sigma}_{p}^2$.
Under the conditions of Theorem~\ref{metatheorem}, for all $i>\variancesamplesize$, we have 
 $\widehat{\sigma_i}^2\in\left[\frac{1}{2}\sigma_i^2, 9.5\sigma_i^2\right]$ and $|\widehat{b}_i-\widehat{a}_i|\leq 4|b_i-a_i|$.
\end{restatable}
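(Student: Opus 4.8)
The plan is to carry out all of the work on the event that both $\meanest$ and $\varianceest$ satisfy their stated accuracy guarantees; by a union bound this event has probability at least $1-2\beta$, and everything below holds deterministically on it. Two preliminary observations set things up. First, by Lemma~\ref{lem.sigi}, $\sigma_i^2 = \var(\bersumPi{k_i}) = \tfrac{1}{k_i}p(1-p) + \tfrac{k_i-1}{k_i}\sigp$, and, since that lemma also gives $\sigp \le p(1-p)$, the map $k \mapsto \var(\bersumPi{k}) = \sigp + \tfrac{1}{k}(p(1-p)-\sigp)$ is non-increasing. Since $\varianceest$ is run on users $1,\dots,\variancesamplesize$, all of whom have at least $k_{\variancesamplesize}\ge k_{\variancesamplesize+1}\ge k_i$ samples for every $i>\variancesamplesize$, the reference value $\var(\bersumPi{k})$ in its guarantee satisfies $\sigp \le \var(\bersumPi{k}) \le \var(\bersumPi{k_i}) = \sigma_i^2$ for all $i>\variancesamplesize$. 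Second, unwinding the definitions of $a_i,b_i,\widehat{a}_i,\widehat{b}_i$ in Algorithm~\ref{alg.dp} gives $|b_i-a_i| = 2\,\concentrationbound{k_i}{\Dee}{n}{\sigp}{\beta}$ and $|\widehat{b}_i-\widehat{a}_i| = 2\alpha + 2\,\concentrationbound{k_i}{\Dee}{n}{\widehatsigp}{\beta}$.

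For the variance bounds I would substitute the multiplicative guarantees directly into $\widehatsigi = \tfrac{1}{k_i}\pinitial(1-\pinitial) + \tfrac{k_i-1}{k_i}\widehatsigp$, term by term. For the upper bound, $\tfrac{1}{k_i}\pinitial(1-\pinitial) \le \tfrac{3}{2}\cdot\tfrac{1}{k_i}p(1-p) \le \tfrac{3}{2}\sigma_i^2$ (using $\tfrac{1}{k_i}p(1-p)\le\sigma_i^2$) and $\tfrac{k_i-1}{k_i}\widehatsigp \le 8\,\var(\bersumPi{k}) \le 8\sigma_i^2$, so $\widehatsigi \le \tfrac{19}{2}\sigma_i^2 = 9.5\,\sigma_i^2$. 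For the lower bound, $\widehatsigi \ge \tfrac{1}{2}\cdot\tfrac{1}{k_i}p(1-p) + \tfrac{k_i-1}{k_i}\var(\bersumPi{k}) \ge \tfrac{1}{2k_i}p(1-p) + \tfrac{k_i-1}{k_i}\sigp \ge \tfrac{1}{2}\sigma_i^2$, where the last step uses $\tfrac{k_i-1}{k_i}\ge\tfrac{k_i-1}{2k_i}$. These match the constants $\tfrac{1}{2},\tfrac{3}{2},8$ in the hypotheses of Theorem~\ref{metatheorem} exactly, which is a good sanity check.

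For the truncation bound, by the two identities above it suffices to prove $\alpha + \concentrationbound{k_i}{\Dee}{n}{\widehatsigp}{\beta} \le 4\,\concentrationbound{k_i}{\Dee}{n}{\sigp}{\beta}$ for every $i>\variancesamplesize$. The buffer $\alpha$ is chosen via the accuracy hypothesis on $\meanest$ so that $\alpha \le \concentrationbound{k_i}{\Dee}{n}{\sigp}{\beta}$ for all such $i$; since $\concentrationbound{k}{\Dee}{n}{\sigp}{\beta}$ is non-increasing in $k$, the binding case is $i=\variancesamplesize+1$. For the remaining term I would appeal to the structural properties of the concentration function developed in Appendix~\ref{s.truncest}: monotonicity in the variance argument and the sub-Gaussian-type scaling $\concentrationbound{k}{\Dee}{n}{c\sigma^2}{\beta} \le \sqrt{c}\,\concentrationbound{k}{\Dee}{n}{\sigma^2}{\beta}$ for $c\ge 1$ (up to variance-independent additive terms). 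Replacing $\sigp$ by $\widehatsigp \le 8\,\var(\bersumPi{k}) \le 8\sigma_i^2$ inflates the per-user variance $\tfrac{1}{k_i}p(1-p) + (1-\tfrac{1}{k_i})\cdot(\text{variance argument})$ from $\sigma_i^2$ to at most $\tfrac{1}{k_i}p(1-p) + 8\sigma_i^2 \le 9\sigma_i^2$, hence $\concentrationbound{k_i}{\Dee}{n}{\widehatsigp}{\beta} \le 3\,\concentrationbound{k_i}{\Dee}{n}{\sigp}{\beta}$. Combining, $|\widehat{b}_i-\widehat{a}_i| \le 2\,\concentrationbound{k_i}{\Dee}{n}{\sigp}{\beta} + 6\,\concentrationbound{k_i}{\Dee}{n}{\sigp}{\beta} = 8\,\concentrationbound{k_i}{\Dee}{n}{\sigp}{\beta} = 4|b_i-a_i|$.

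The main obstacle is this last step: converting multiplicative control of $\widehatsigp$ into multiplicative control of $\concentrationbound{k_i}{\Dee}{n}{\widehatsigp}{\beta}$ requires pinning down precisely which properties of the concentration function are used --- monotonicity in the variance, a $\sqrt{\cdot}$-type growth rate, and the relationship between the function's variance argument and the true per-user variance $\sigma_i^2$ --- and verifying them for the concrete choice of $\concentrationbound{k}{\Dee}{n}{\sigma^2}{\beta}$ instantiated in Section~\ref{instantiation}. By contrast, the variance sandwich is an elementary term-by-term estimate once one notes that $k\mapsto\var(\bersumPi{k})$ is non-increasing. I note that the third hypothesis of Theorem~\ref{metatheorem}, $\tfrac{k_1}{k_{n/2}}\le\tfrac{n/2-\variancesamplesize}{\variancesamplesize}$, is not needed for this lemma; it is used elsewhere in the proof of the theorem.
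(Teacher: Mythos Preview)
Your proposal is correct and follows essentially the same route as the paper. The variance sandwich is identical in spirit: the paper obtains the upper bound by expanding $\var(\bersumPi{k_{\variancesamplesize}})=\tfrac{1}{k_{\variancesamplesize}}p(1-p)+\sigma_p^2$ explicitly (using $k_i\le k_{\variancesamplesize}$ so that $\tfrac{k_i-1}{k_{\variancesamplesize}}\le 1$), whereas you use the monotonicity of $k\mapsto\var(\bersumPi{k})$; both land on $9.5$, and the lower bound is the same term-by-term estimate.

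For the truncation width, the paper's proof is terser than yours: it writes $|\widehat{b_i}-\widehat{a_i}|=2\alpha+2\concentrationbound{k_i}{\Dee}{n}{\widehatsigp}{\beta}\le 2\concentrationbound{k_i}{\Dee}{n}{\sigp}{\beta}+2\concentrationbound{k_i}{\Dee}{n}{\sigp}{\beta}$, attributing both inequalities to the accuracy conditions of Theorem~\ref{metatheorem}. The first term is exactly your $\alpha\le\concentrationbound{k_i}{\Dee}{n}{\sigp}{\beta}$. For the second, the paper does not spell out how $\concentrationbound{k_i}{\Dee}{n}{\widehatsigp}{\beta}$ is controlled by $\concentrationbound{k_i}{\Dee}{n}{\sigp}{\beta}$; your explicit appeal to monotonicity and $\sqrt{\cdot}$-scaling of the concentration function (yielding a factor $3$) is a legitimate way to make that step rigorous, and it recovers the same constant $4$. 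You are also right that the third hypothesis of Theorem~\ref{metatheorem} plays no role here.
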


A detailed proof of Lemma~\ref{lem.finalvariance} is presented in Appendix~\ref{appendix.metathm}. Lemma~\ref{lem.finalvariance} implies that the individual variance estimates used in the weights, and the truncation parameters are accurate up to constant multiplicative factors. The main ingredient left then is to show that using only a subset of the population in the final estimate only affects the performance up to a multiplicative factor.
Under the assumption that $\frac{k_{\max}}{\medk}\le\frac{n/2-\variancesamplesize}{\variancesamplesize}$, where $\sigma_{k_{\max}}^2 = \var(\widehatp_1)$ and $\sigma_{\medk}^2 = \var(\widehatp_{n/2})$ then \begin{align}\label{limitonk}
\sigma_{\medk}^2 &= \tfrac{1}{\medk}p(1-p)+(1-\tfrac{1}{\medk})\sigma_p^2 \notag\\
&\le \tfrac{n/2-\variancesamplesize}{\variancesamplesize}\tfrac{1}{k_{\max}}p(1-p)+(1-\tfrac{1}{k_{\max}})\sigma_p^2\notag \\
&\le \tfrac{n/2-\variancesamplesize}{\variancesamplesize} \sigma_{k_{\max}}^2.\end{align}
We use this to show that for any truncation parameter $T$, \[\textstyle \sum_{i=1}^n \min\{\tfrac{1}{\sigma_i^2}, \tfrac{T}{\sigma_i}\} \le 4 \textstyle \sum_{i=\variancesamplesize+1}^{9n/10} \min\{\tfrac{1}{\sigma_i^2}, \tfrac{T}{\sigma_i}\}.\]
Using this, along with the bounds on estimated quantities from Lemma \ref{lem.finalvariance}, we show that with high probability, the variance of the our estimator $\prealistic$ is within a constant factor of $\var(\pideal)$, as given in Equation \eqref{eq.varideal}:
\begin{align}
\var(\prealistic) &= \tfrac{\sum_{i=\variancesamplesize+1}^{9n/10}\min\{\tfrac{1}{\widehat{\sigma_i}^4}, \tfrac{\widehat{T}^{*2}}{\widehat{\sigma_i}^2}\}\sigma_i^2+\max_{i} \tfrac{\min\{\frac{1}{\widehat{\sigma_i}^4}, {\tfrac{\widehat{T}^{*2}}{\widehat{\sigma_i}^2}}\}|\widehat{b_i}-\widehat{a_i}|^2}{\epsilon^2}}{(\textstyle\sum_{j=\variancesamplesize+1}^{9n/10} \min\{1/\widehat{\sigma_j}^2, \tfrac{\widehat{T}^*}{\widehat{\sigma_i}}\})^2} \\
\nonumber &\leq O(\var(\pideal)).
\end{align}

We remark that this framework is amenable to being performed in a federated manner if one has private federated mean and variance estimators. Steps~\eqref{hatsigi}~-~\eqref{settingintervalbi1} and Step~\eqref{setweight} can be performed locally. Steps~\eqref{settrunc} and the final sum in Step \eqref{finalsum} would need to be altered to fit the federated framework. We will see in Section \ref{s.privatek} that it is sufficient to replace 
Step~\eqref{settrunc} with an estimate of $\frac{1}{\sigma_{\variancesamplesize}}$ (the inverse standard deviation of the user with the $\variancesamplesize$-th most data). The final step is then a simple addition with output perturbation, which can be performed in a federated manner (e.g., \cite{mcmahan17a, Kairouz21}).

\subsection{Special Case: The constant $p_i$ case.}\label{s.constantpi}

In the previous section, we considered the setting where there was heterogeneity in both the users' distributions (i.e., the $p_i$s were not constant), as well as the number of data points that they each held (i.e., the $k_i$s were not constant). In the absence of variation in the $p_i$, each user is sampling from the same distribution $\Ber(p)$. When privacy is not a concern, this setting reduces to the single-data-point-per-user setting where the sample size is increased to $\sum_{i=1}^n k_i$. However, under the constraint of user-level differential privacy, this setting is distinct from the single-data-point-per-user setting, since we need to protect the entirety of each users data set. In fact, much of the complexity of Algorithm~\ref{alg.dp} is required even in this simpler case. In particular, the truncated inverse variance weighting is still required in this case when there is variation in the $k_i$. In fact, the only step of Algorithm~\ref{alg.dp} that is not required is Step~\ref{varianceestimaterequired}, since already know that $\sigma_p^2=0$. Since there is no variance in $\Dee$, the high probability bound $\concentrationbound{k_{i}}{\Dee}{n}{\widehat{\sigma_p^2}}{\beta}$ is just due to the randomness in the binomial distribution $\Bin(k_i,p)$, which comes from averaging $k_i$ samples drawn from $\Ber(p)$.

When $\sigma_p^2=0$, $\sigma_i$ has the simple formula $\sigma_i=\frac{\sqrt{p(1-p)}}{k_i}$ and we can directly translate from the truncation threshold $T$ on $\sigma_i$ to a truncation threshold $k$ on $k_i$, $T=\frac{\sqrt{p(1-p)}}{k}$.
Further, if we assume that all the $k_i$ are large enough ($\min k_i\ge 2\ln(1/\delta)/p$) then we also have the simple formula $\concentrationbound{k_{i}}{\Dee}{n}{\widehat{\sigma_p^2}}{\beta}=\sqrt{\frac{3p\ln(2/\beta)}{k_i}}$. We can plug these into Equation~\eqref{idealthreshold} (recall that $T^*$ is defined as the truncation threshold that minimizes the variance of $\pideal$) to obtain the following formula for the variance of $\pideal$, and hence the variance of $\prealistic$ is: 
\begin{align}\label{constantD}
\min_k \tfrac{p(1-p)\textstyle\sum_{i=1}^n\min\{k_i, k\} +6p\ln(2/\beta) \tfrac{k}{\epsilon^2}}{{(\sum_{j=1}^n \min\{k_i, \sqrt{k_ik}\})^2}}.
\end{align}

Even in the private setting, one can reduce to the single-data-point-per-user setting by reducing the sample size by a factor of 2, and forcing the $n/2$ users with the most data points to produce their estimate $\hat{p_i}$ using only $\medk$ (the median $k_i$) data points. Then each estimate $\hat{p_i}$ is a sample from the same distribution and we can compute their mean. To the best of our knowledge, all the prior work in the private literature that handles variations in $k_i$ follows this formula. However, not only does this algorithm reduce the sample size by a factor of 2, it also unnecessarily hinders the contribution of users with many data points. As a simple example, suppose that all the users have a single data point, except for $\sqrt{n}$ users, which have $n$ data points. Then the algorithm which forces $n/2$ of the users to use the median number of data points has an error rate of $\Theta(\frac{1}{n}+\frac{1}{n^2\eps^2})$ assuming that $p$ is bounded away from 0 or 1. Letting $k=n$ in Equation~\ref{constantD} implies that that the truncated inverse variance weighted algorithm in the previous section is better able to utilise the data of the users with high $k_i$s, resulting in an error rate of
$O(\frac{1}{n^{3/2}}+\frac{1}{n^2\eps^2})$.

\subsection{Extension: \unknownki user-level differential privacy setting}\label{s.privatek}

Let us now turn to our problem in the \unknownki user-level differential privacy setting, where the $k_i$s are considered private and require formal privacy protections. We will need to add considerably more machinery to Algorithm~\ref{alg.dp} to make it private under this stronger notion of privacy. Under \knownki user-level privacy, the quantities $\hat{T}^*$ (the weight truncation parameter) and $\Lambda$ (the sensitivity of the final estimate) in Algorithm~\ref{alg.dp} do not pose privacy concerns since they only depend on the private data $\widehatpi$ through the $\pinitial$ and $\widehatsigi$, which are both produced differentially privately. However, both these quantities depend on the $k_i$ directly, and hence care needs to be taken when using them under \unknownki user-level DP.

In Algorithm~\ref{alg.dpk}, we outline the extension of Algorithm~\ref{alg.dp} to satisfy \unknownki user-level differential privacy. It is different to Algorithm~\ref{alg.dp} in two main ways: the method for truncating the weights and the method for computing the scale of the noise needed to maintain privacy.

\begin{algorithm}[t]
\caption{Private Heterogeneous Mean Estimation $\prealisticunknown$}\label{alg.dpk}
 \textbf{Input parameters:} Privacy parameters $\eps>0$, $\delta\in[0,1]$, desired high probability bound $\beta\in[0,1]$, number of users $n$, an $(\eps,\delta)$-DP mean estimator $\meanest$, error guarantee on $\meanest$ $\alpha>0$, an $(\eps,\delta)$-DP variance estimator $\varianceest$, number of samples for variance estimator $\variancesamplesize$, an upper bound on the total number of data points held by a single user $k_{\max}$, an $\eps$-DP estimator of the $\ell$th order statistic $\expmech(\cdot; \ell, k_{\max})$.\\
 \textbf{Input data:} Number of samples held by each user $(k_1,\ldots, k_n \; s.t. \; k_i\ge k_{i+1}$), and user-level estimates $(\widehatp_1, \cdots, \widehatp_n)$.
\begin{algorithmic}[1]
\vspace{0.1in}
\State \textbf{\underline{Initial Estimates}}
\State $\pnpinitial = \meanest(x_{9n/10 +1}^1, \cdots, x_n^1)$\label{initialk} \quad \Comment{Initial mean estimate}
\State $\widehatsigp = \varianceest(\widehatp_1, \cdots, \widehatp_{\variancesamplesize})$\label{initialvark} \quad \Comment{Initial variance estimate}
\vspace{0.1in}
\vspace{0.1in}
\State \textbf{\underline{Compute Sensitivity Proposal}}
\State $\widehat{k_{\variancesamplesize}}=\expmech(k_1, \cdots, k_n; \variancesamplesize, k_{\max})$ \label{EMk} \quad \Comment{Compute $L$-th order statistic}
\For{$i\in[\variancesamplesize+1, 9n/10]$ }
\State $\widetilde{k_i}=\min\{k_i,\widehat{k_{\variancesamplesize}}\}$ \label{startpreprocessing}
\State ${\widetilde{\sigma_i}}^2 = \tfrac{1}{\widetilde{k_i}} (\pinitial - (\pinitial)^2) + (1-\tfrac{1}{\widetilde{k_i}})\widehat{\sigma_p}^2.$
\State $v_i=\frac{1}{{\widetilde{\sigma_i}}^2}$
\quad \Comment{Compute truncated, unnormalised weights}\label{endpreprocessing}
\EndFor
\State ${\widehat{\sigma_{\min}}}^2 = \tfrac{1}{\widehat{k_{\variancesamplesize}}} (\pinitial - (\pinitial)^2) + (1-\tfrac{1}{\widehat{k_{\variancesamplesize}}})\widehat{\sigma_p}^2.$ 
\State $\widehat{N} = \sum_{j=\variancesamplesize+1}^{9n/10} v_i+\Lap\left(\frac{1}{\epsilon{\widehat{\sigma_{\min}}}^2}\right)-\frac{1}{\epsilon{\widehat{\sigma_{\min}}}^2}\ln(2\delta)$ \label{noisynorm} \quad \Comment{Compute noisy normalisation term}
\State $\Lambda=12\frac{\concentrationbound{k_{\max}}{\Dee}{n}{\hat{\sigma_p}^2}{\beta}}{\widehat{\sigma_{\min}}^2\widehat{N}}$\label{senscomp} \quad \Comment{Compute local sensitivity proposal}
\vspace{0.1in}
\State \textbf{\underline{Propose-Test-Release on $\mathcal{M}(\cdot\; ; \widehat{k_{L}}, n, \pinitial, \widehatsigp, \alpha)$}}
\State $D_T=\{(\widehatpi, k_i)\}_{i\in[\variancesamplesize+1:9n/10]}$
\State $\kneighbor^* = \arg\max\{\kneighbor\in\mathbb{N}\;|\; \forall D' \text{ s.t. } D' \text{ is a } \kneighbor \text{-neighbor of } D_T, \LS(\mathcal{M}(\cdot; \widehat{k_{\variancesamplesize}}, 9n/10-\variancesamplesize, \pinitial, \widehatsigp, \alpha); D')\le \Lambda\}$\label{disttosens}\\ \quad \Comment{Compute distance to high sensitivity dataset}
\State $\tilde{\kneighbor}=\kneighbor^*+\Lap(1/\eps)$
\If{$\tilde{\kneighbor}<\frac{\log(1/\delta)}{\eps}$}
\State \textbf{return} $\prealisticunknown=\pinitial$\label{PTRfail} \quad \Comment{Return initial estimate if proposed local sensitivity too small}
\Else{}
\State Sample $Y \sim \Lap\left(\frac{\sensitivity}{\epsilon}\right)$ \quad \Comment{Sample noise added for privacy}
\State \textbf{return} $\prealisticunknown = \mathcal{M}(D_T; \widehat{k_{\variancesamplesize}}, 9n/10-\variancesamplesize, \pinitial, \widehatsigp, \alpha)  + Y$\label{finalk}\label{finalk} \quad \Comment{Final estimate}
\EndIf
\end{algorithmic}
\end{algorithm}

\begin{algorithm}[tbh]
\caption{Truncated weighted mean, $\mathcal{M}(\cdot; k_{\max}, n, \hat{p}, \hat{\sigma_p}^2, \alpha)$}\label{alg.PTR}
\textbf{Input:}  number of users $n$, number of samples held by each user $(k_1,\ldots, k_n)$, user-level estimates $(\widehatp_1, \cdots, \widehatp_n)$, desired upper bound $k_{\max}$, mean estimate $\hat{p}$, variance estimate $\hat{\sigma_p^2}$, accuracy on mean estimate $\alpha$
\begin{algorithmic}[1]
\vspace{0.1in}
\For{$i\in[n]$} 
\State $\widetilde{k_i}=\min\{k_i,k_{\max}\}$ 
\State $\widetilde{a_i} =  \hat{p}-\alpha-\concentrationbound{\widetilde{k_i}}{\Dee}{n}{\hat{\sigma_p}^2}{\beta}$\\
\State\label{settingintervalbi} $\widetilde{b_i} =  \hat{p}+\alpha+\concentrationbound{\widetilde{k_i}}{\Dee}{n}{\hat{\sigma_p}^2}{\beta}$\\
\State ${\widetilde{\sigma_i^2}} = \tfrac{1}{\widetilde{k_i}} (\hat{p} - (\hat{p})^2) + (1-\tfrac{1}{\widetilde{k_i}})\hat{\sigma_p^2}.$\label{truncatedvariance}
\State $v_i = \frac{1}{{\widetilde{\sigma_i^2}}}$\label{unnormalweight}
\EndFor
\State Return $ \frac{\sum_{i\in[n]}v_i [\widehatpi]_{\widehat{a_i}}^{\widehat{b_i}} }{\sum_{i\in[n]} v_i}   $ 
\end{algorithmic}
\end{algorithm}

The first significant change in Algorithm~\ref{alg.dpk} is how the sensitivity parameter $\Lambda$ is chosen. The final statistic is more sensitive under the view of \unknownki user level privacy; the weight of every user can change as a result of a single user changing the amount of data they hold (due to the resulting change in the normalisation constant). Rather than an upper bound on the global sensitivity, $\Lambda$ as defined in Algorithm~\ref{alg.dpk}, is, with high probability, an upper bound on the \emph{local} sensitivity of all databases that lie in a neighbourhood of $D$. Given a function $f$ from the set of databases to $\mathbb{R}$, and a database $D$, the \emph{local sensitivity} of $f$ at $D$ is defined by
$\LS(f; D) = \max_{D' \text{ neighbour of } D}|f(D)-f(D')|.$  We use a standard framework from the differential privacy literature called propose-test-release (PTR) \citep{Lei:2009} to privately verify that $\Lambda$ is indeed an upper bound on the local sensitivity of all databases in a neighbourhood of $D$, which allows us to safely add noise proportional to $
\Lambda$ to privatise the final statistic. A  database $D'$ is said to be a $\kneighbor$-neighbour of $D$ if it differs from $D$ on the data of at most $\kneighbor$ data subjects, and if it contains the same number of data subjects.

Next, the function $\mathcal{M}$ as described in Algorithm~\ref{alg.PTR} incorporates the truncation of weights in a slightly different (but nearly equivalent) manner to Algorithm~\ref{alg.dp}, but is otherwise the same as Algorithm~\ref{alg.dp}, without the addition of noise. Observe that choosing a truncation parameter $T$ is equivalent to choosing an integer $k$ such that $T=1/\var(\Dee(k))$, so $\widehat{k_{\variancesamplesize}}$ plays the role in Algorithm~\ref{alg.dpk} that $T^*$ plays in Algorithm~\ref{alg.dp}. The statistic $\widehat{k_{\variancesamplesize}}$ is a private estimate of the $\variancesamplesize$-th order statistic of the set $\{k_1, \cdots, k_n\}$. Since the only users that participate in the final estimate (and hence have their data truncated) all have $k_i<k_{\variancesamplesize}$, this algorithm attempts to find the smallest truncation parameter such that no data are actually truncated. We will show that provided either $\epsilon$ is not too small or the ratio $k_{\max}/\medk$ is not too large, this level of truncation is sufficient.
There are several existing algorithms in the literature that can be used to privately estimate the $\variancesamplesize$-th  order statistic $\widehat{k_{\variancesamplesize}}$. A simple algorithm \citep{Lei:2009, Thakurta:2013, Johnson:2013, alabi2020differentially, Asi:2020} that estimates the order statistic using standard differential privacy framework called the Exponential Mechanism (EM) \citep{McSherry:2007} is sufficient up to a constant factor. For a full description of this algorithm, as well as its accuracy guarantees, see \citep{Asi:2020}. In order for this algorithm to produce accurate results, we need an upper bound on the maximum number of data points a single user can have; we will call this number $k_{\max}$.

\begin{restatable}{theorem}{privatek}
\label{thm.privatek}
For any $\eps>0$, $\delta\in[0,1]$, $\beta\in[0,1]$,  $n\in\mathbb{N}$, $\alpha>0$, $\variancesamplesize\in[n]$ $(\eps,\delta)$-DP mean estimator $\meanest$, $(\eps,\delta)$-DP variance estimator $\varianceest$, $k_{\max}\in\mathbb{N}$, $\eps$-DP estimator of the $\ell$th order statistic $\expmech(\cdot; \ell, k_{\max})$, Algorithm~\ref{alg.dpk} is $(3\epsilon,2\delta)$-DP. Let $\Upsilon=\frac{\log(1/\delta)}{\epsilon}+\frac{\ln(1/\delta)\ln(1/\beta)}{\epsilon}$. If the conditions of Theorem~\ref{metatheorem} hold and \begin{itemize}
    \item $\frac{1}{2}\frac{1}{\epsilon}(\ln k_{\max}+\ln (1/\beta))\le \variancesamplesize\le n/4$,
    \item $\frac{k_{\max}}{k_{\text{med}}} \le \min\left\{ \frac{\log\frac{n}{\beta}}{\log\frac{n^{\Upsilon+1}}{\beta}}\frac{n-\Upsilon-1}{2}, \frac{n-1}{2(\Upsilon+1)}, \frac{\epsilon^2(n/2-\variancesamplesize-1)}{\log^2(n/\beta)}, \frac{(n/4-1)\epsilon}{3\ln(2/\delta)}\right\}$, 
    \item for all $k\le k_{\max}$, $\max\{\alpha, \sigma_{k}\}\le \concentrationbound{k}{\Dee}{n}{\hat{\widehat{\sigma}_p}^2}{\beta}\le 2\sigma_k\sqrt{\log(n/\beta)}$, where $\sigma_k^2=\var(\Dee(k))$ 
    \item for any set $I\subset [n]$, with probability $1-\beta$, $\left|\frac{\sum_{i\in I} v_i\widehatpi}{\sum_{i\in I} v_i}-p\right|\le 2\sqrt{\var\left(\frac{\sum_{i\in I} v_i\widehatpi}{\sum_{i\in I} v_i}\right)\log(1/\beta)}$,
\end{itemize}
then with probability $1-4\beta$, $\var(\prealisticunknown)\le \tilde{O}\left(\var(\pnprealistic)\right)$
\end{restatable}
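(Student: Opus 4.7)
The plan is to split the proof of Theorem~\ref{thm.privatek} into a privacy argument and a utility argument, mirroring the structure of the proof of Theorem~\ref{metatheorem} but with additional machinery to handle the privacy of the $k_i$'s via propose-test-release (PTR). For privacy, I would decompose Algorithm~\ref{alg.dpk} into four sub-mechanisms: (i) the initial mean estimate $\pinitial$ in line~\ref{initialk}, which acts on the bottom $n/10$ users and is $(\eps,\delta)$-DP by hypothesis on $\meanest$; (ii) the initial variance estimate $\widehatsigp$ in line~\ref{initialvark}, which acts on the top $L$ users and is $(\eps,\delta)$-DP by hypothesis on $\varianceest$; (iii) the order-statistic estimate $\widehat{k_{\variancesamplesize}}$ via the exponential mechanism in line~\ref{EMk}, which is $\eps$-DP; and (iv) the PTR-gated release of $\mathcal{M}(D_T;\widehat{k_L},\ldots)$ on the middle group, which adds $\Lap(\Lambda/\eps)$ after the test. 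Since (i) and (ii) operate on disjoint blocks of users, parallel composition combines them with (iv) for the sample data; and since $\widehat{k_L}$ and $\widehat{N}$ in line~\ref{noisynorm} are functions of $\pinitial$, $\widehatsigp$, and the $k_i$'s alone, they contribute only to the $k_i$-side of privacy accounting, yielding an overall $(3\eps,2\delta)$-DP guarantee once one verifies that the PTR block is $(\eps,\delta)$-DP by the standard argument (Lei--Rubinstein): this requires that $\Lambda$ upper-bounds the local sensitivity of $\mathcal{M}$ on every database within the verified neighborhood radius.

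For utility, I would first condition on the good event that $\pinitial$, $\widehatsigp$, and $\widehat{k_L}$ are all accurate: by the hypotheses and standard accuracy bounds for the exponential mechanism for order statistics (where the first bullet assumption ensures $L$ is above the EM's sample complexity for accuracy $\log(k_{\max}/\beta)/\eps$), this holds with probability $1-3\beta$. Under this event, Lemma~\ref{lem.finalvariance} and the constraint $k_{\max}/k_{\text{med}} \le (n/4-1)\eps/(3\ln(2/\delta))$ give $\widetilde\sigma_i^2 \asymp \sigma_i^2$ for $i \in (L, 9n/10]$, and imply that the noisy normalizer $\widehat N$ of line~\ref{noisynorm} is, with probability $1-\beta$, at least a constant fraction of $\sum_i v_i$. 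This lets me argue that $\Lambda$ from line~\ref{senscomp} really is a valid upper bound on the local sensitivity at $D_T$.

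The main obstacle, which I would tackle next, is lower-bounding the PTR distance $\kneighbor^*$ in line~\ref{disttosens}. Moving from $D_T$ to a $\kneighbor$-neighbor can alter up to $\kneighbor$ values of $k_i$ (each possibly pushed up to $k_{\max}$) and the corresponding $\widehat p_i$'s; each change can shift $v_i$ by at most $1/\widehat{\sigma_{\min}}^2$, shift $\sum_i v_i$ by the same amount, and shift the numerator by at most $\concentrationbound{k_{\max}}{\Dee}{n}{\hat\sigma_p^2}{\beta}/\widehat{\sigma_{\min}}^2$. The second bullet of the hypothesis on $k_{\max}/k_{\text{med}}$ is calibrated precisely so that after $\kneighbor = \Upsilon = O(\log(1/\delta)\log(1/\beta)/\eps)$ such changes, the denominator still exceeds $\tfrac12 \sum_i v_i$ and the ratio's local sensitivity stays at most $\Lambda = 12\,\concentrationbound{k_{\max}}{\Dee}{n}{\hat\sigma_p^2}{\beta}/(\widehat{\sigma_{\min}}^2\widehat N)$. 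Hence $\kneighbor^* \ge \Upsilon$, and the Laplace-shifted check $\tilde\kneighbor \ge \log(1/\delta)/\eps$ succeeds except with probability $\delta+\beta$, so the algorithm does not return $\pinitial$. I expect this sensitivity bookkeeping---propagating a $\kneighbor$-step change through the truncation $\widetilde k_i = \min\{k_i,\widehat{k_L}\}$, the weights $v_i$, and the ratio---to be the most delicate part.

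Finally, on the success branch, $\prealisticunknown$ equals the truncated inverse-variance weighted mean $\mathcal{M}(D_T;\ldots)$ plus $\Lap(\Lambda/\eps)$. The variance decomposes as the variance of the weighted sum (bounded via the fourth bullet and by comparison with $\var(\pnprealistic)$ exactly as in the proof of Theorem~\ref{metatheorem}, using that $\widetilde\sigma_i^2 \asymp \sigma_i^2$ and $|\widetilde b_i - \widetilde a_i| \lesssim |b_i - a_i|$) plus $2\Lambda^2/\eps^2$, where the concentration-function upper bound $\concentrationbound{k}{\Dee}{n}{\hat\sigma_p^2}{\beta} \le 2\sigma_k\sqrt{\log(n/\beta)}$ from the third bullet turns the Laplace contribution into $\tilde O(\var(\pideal))$. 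Combining with $\var(\pideal) \asymp \var(\pnprealistic)$ from Theorem~\ref{metatheorem} and taking a union bound over the four bad events ($\meanest$, $\varianceest$, EM, and the weighted-sum concentration) concludes the claimed $\tilde O(\var(\pnprealistic))$ bound with probability $1-4\beta$.
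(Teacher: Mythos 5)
Your overall architecture---cohort-wise privacy accounting giving $(3\eps,2\delta)$-DP, conditioning on the accuracy of $\pinitial$, $\widehatsigp$ and $\widehat{k_{\variancesamplesize}}$, showing that truncating at $\widehat{k_{\variancesamplesize}}$ costs only constants because the private term is dominated by the non-private term, and bounding the PTR failure probability---matches the paper's proof. The genuine gap is in your argument that $\kneighbor^*\ge\Upsilon$. Bounding, per changed user, the shift in $v_i$ by $1/\widehat{\sigma_{\min}}^2$ and the shift in the numerator by $\concentrationbound{k_{\max}}{\Dee}{n}{\hat{\sigma}_p^2}{\beta}/\widehat{\sigma_{\min}}^2$ controls how much $\mathcal{M}$ moves as you pass from $D_T$ to a $\kneighbor$-neighbour $D'$; it does not bound the \emph{local sensitivity at} $D'$, which is what PTR actually tests. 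Writing $\mathcal{M}(D')-\mathcal{M}(D'_{-j})=\tfrac{v_j}{\sum_i v_i}\bigl(\widehat{p}_j'-\mathcal{M}(D'_{-j})\bigr)$, the target bound $\Lambda\approx 12\,v_{k_{\max}}\concentrationbound{k_{\max}}{\Dee}{n}{\hat{\sigma}_p^2}{\beta}/\sum_i v_i$ requires $|\widehat{p}_j'-\mathcal{M}(D'_{-j})|=O(\concentrationbound{k_{\max}}{\Dee}{n}{\hat{\sigma}_p^2}{\beta})$, and note that $\concentrationbound{k}{\Dee}{n}{\cdot}{\beta}$ is \emph{decreasing} in $k$, so $\concentrationbound{k_{\max}}{\Dee}{n}{\cdot}{\beta}$ is the \emph{smallest} truncation radius. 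Truncation alone only places each $\widehat{p}_i'$ within $\alpha+\concentrationbound{\widetilde{k_i}}{\Dee}{n}{\cdot}{\beta}$ of $\pinitial$, so the convex combination $\mathcal{M}(D'_{-j})$ could a priori sit as far as $\Theta(\concentrationbound{k_{\min}}{\Dee}{n}{\cdot}{\beta})\gg \concentrationbound{k_{\max}}{\Dee}{n}{\cdot}{\beta}$ from $\widehat{p}_j'$, and the proposed $\Lambda$ would be invalid. The missing ingredient is the fourth bullet: the weighted mean of the \emph{unchanged} entries concentrates around $p$ at scale $\sigma_{\min}\sqrt{\log(n/\beta)}\le \concentrationbound{k_{\max}}{\Dee}{n}{\cdot}{\beta}$ (by the third bullet), and this must hold simultaneously for \emph{every} subset of the original data of size at least $n-\Upsilon-1$---a union bound over up to $n^{\Upsilon+1}$ subsets, which is precisely why the second bullet carries the $\log(n^{\Upsilon+1}/\beta)$ factor. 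The at most $\kneighbor$ adversarial entries of $D'$ are then handled separately via truncation, their total relative weight being at most $1/2$ by the second bullet. You instead invoke the fourth bullet only in the final variance bound, where it is not needed; relocating it (together with the subset union bound) into the $\kneighbor^*$ argument is essential.

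Two smaller points. First, in the privacy accounting the release of $\widehat{N}$ in line~\ref{noisynorm} is itself a Laplace mechanism acting on the middle cohort's (here private) $k_i$'s with sensitivity $1/\widehat{\sigma_{\min}}^2$, and must be charged a separate $\eps$ alongside the order-statistic release and the PTR block---that is where the third $\eps$ for that cohort comes from; also, the validity of $\Lambda$ as a local-sensitivity bound is a \emph{utility} matter (whether the test passes), not a prerequisite for PTR's privacy. Second, the paper reaches $\tilde{O}(\var(\pnprealistic))$ by showing the non-private term dominates the private term under the $\eps^2(n/2-\variancesamplesize-1)/\log^2(n/\beta)$ condition, rather than by routing through $\var(\pideal)$ and Theorem~\ref{metatheorem}; your route is workable up to logarithmic factors but you should make the domination step explicit.
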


Theorem~\ref{thm.privatek} implies that under some mild conditions, the variance of $\prealisticunknown$ is within a constant factor of the variance of $\pnprealistic$, the non-private realisable estimator. While the conditions of this theorem may seem intimidating, they are not particularly stringent for reasonable parameter settings. 
\begin{itemize}
    \item \textbf{Conditions on L.} In Section~\ref{s.finalest}, when discussing the conditions of Theorem~\ref{metatheorem},  we discussed that $\variancesamplesize=\tilde{O}(1/\epsilon)$ is sufficient for learning a constant multiplicative approximation to $\sigp$ for sufficiently well-behaved distributions. We'll give such an example estimator in Section~\ref{s.popvar}. If we increase $\variancesamplesize$ to $O(\log(n)/\epsilon)$ then the third condition in Theorem~\ref{metatheorem} (which we still need to satisfy) becomes only slightly more restrictive, and we can satisfy the first condition of Theorem~\ref{thm.privatek} provided $k_{\max}$ and $1/\beta$ are both polylogarithmic in $n$. 
    \item \textbf{Conditions on $k_{\max}/\medk$.} Up to logarithmic factors, the required upper bound on the ratio $k_{\max}/\medk$ is $\tilde{O}(\epsilon^2 n)$. For moderate values of $\epsilon$, this condition is unlikely to be prohibitive in practice, although it is more restrictive than the upper bound of $\tilde{O}(\epsilon n)$ that was required in Theorem~\ref{metatheorem}.
    \item \textbf{Concentration bounds.} The final two conditions are concentration bounds, essentially requiring $\Dee(k)$ to be sub-Gaussian. This condition is technically absent from Theorem~\ref{metatheorem}, although a similar condition is required in order to design a private variance estimation algorithm with sufficiently good accuracy.
\end{itemize}

The proof that Algorithm~\ref{alg.dpk} is $(3\epsilon, 2\delta)$-DP is fairly routine, details can be found in the appendix.
There are two main differences between Algorithm~\ref{alg.dpk} and Algorithm~\ref{alg.dp} that affect the utility: the replacement of the optimal truncation with truncation based on $\widehat{k_{\variancesamplesize}}$, and the use of propose-test-release (PTR) to determine the level of noise added to the final estimate. We will control the impact of these two factors separately. 

Let us consider the impact of changing the truncation parameter. Set $T_{\variancesamplesize}=\frac{1}{\widehat{\sigma_{\min}}^2}$.
Assuming the PTR component of the algorithm does not fail, the variance of $\prealisticunknown$ can be written as two terms, namely the variance that exists in the non-private setting, and the additional noise due to privacy:
\[\var(\prealisticunknown) =  \underbrace{\frac{\sum_{i=\variancesamplesize+1}^{9n/10} \min\left\{\frac{T_{\variancesamplesize}^{2}}{\widehatsigi}, \frac{1}{\widetilde{\sigma_i}^4}\right\}\var([\widehatpi]_{\widetilde{a_i}}^{\widetilde{b_i}}])}{\left(\sum_{i=\variancesamplesize+1}^{9n/10} \min\left\{\frac{T_{\variancesamplesize}}{\widetilde{\sigma_i}}, \frac{1}{\widetilde{\sigma^2_i}}\right\}\right)^2}}_{\text{non-private term}} + \underbrace{ \frac{\left(12\frac{\concentrationbound{\widehat{k_{\variancesamplesize}}}{\Dee}{n}{\hat{\sigma_p}^2}{\beta}}{\widehat{\sigma_{\min}}^2\widehat{N}}\right)^2}{\epsilon^2}}_{\text{private term}}. \] 
The truncation has opposite effects on each of these terms. As $T$ decreases, the private variance term decreases while the non-private variance term increases. When we set $T_{\variancesamplesize}=1/\var(\Dee(k_{\variancesamplesize+K}))$, where $K\in [-\frac{1}{2}\variancesamplesize, \frac{1}{2}\variancesamplesize]$ then if $K$ is negative, no truncation occurs and the non-private term is optimal. Even if $K$ is positive, only a small number of data points are truncated so the non-private term is still close to its optimal value. However, setting the truncation parameter this large means that the private term is larger than necessary. We show that even though the private term may be larger than it would be with the optimal truncation, under the conditions of the theorem, the non-private term dominates the variance anyway.

Let us now consider the impact of the use of propose-test-release (PTR).
The two relevant components for the how the PTR component of Algorithm~\ref{alg.dpk} affects the utility are the scale of $\Lambda/\epsilon$ and the probability that the proposed sensitivity is too small resulting in the algorithm ending in line~\eqref{PTRfail}, rather than line~\eqref{finalk}. The impact of the former is easy to analyse since the noise added is simply output perturbation. In order to show that the PTR ends in line~\eqref{finalk} with high probability, we need to show that with high probability (over the randomness in the samples), $\kneighbor^*$ as defined in line~\eqref{disttosens} is large enough. Since this claim is in essence about $\mathcal{M}(\cdot; k_{\max}, n, \hat{p}, \hat{\sigma_p}^2)$, we will state this claim in the notation of Algorithm~\ref{alg.PTR}.

\begin{restatable}{lemma}{PTRfailureratelem}\label{PTRfailurerate} Given $k_{\max}\in\mathbb{N}$, $n\in\mathbb{N}$, $\hat{p}\in[0,1]$, $\hat{\sigma_p}^2\in[0,1]$ and $k_1,\cdots, k_n$, let $\Upsilon=\frac{\log(1/\delta)}{\epsilon}+\frac{\ln(1/\delta)\ln(1/\beta)}{\epsilon}$, if 
the conditions of Theorem~\ref{thm.privatek} hold and $D=\{(\widehatpi, k_i)\}_{i=1}^n$ is a dataset such that $\widehatpi\sim\Dee(k_i)$, then with probability $1-\beta$, 
    for any $D'$ that is a $\kneighbor$-neighbour of $D$ for $0\le \kneighbor\le \Upsilon$, we have 
    \[\LS(\mathcal{M}(\cdot; k_{\max}, m, \hat{p}, \hat{\sigma_p}^2, \alpha); D')\le 12\frac{v_{k_{\max}}\concentrationbound{k_{\max}}{\Dee}{n}{\hat{\sigma_p}^2}{\beta}}{\sum_{i=1}^{n} v_i}.\] 
\end{restatable}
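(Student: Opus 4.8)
The plan is to bound the local sensitivity of $\mathcal{M}$ at an arbitrary $\Upsilon$-neighbour $D'$ of $D$ by analysing the effect of changing a single user. Write $\mathcal{M}(D)=\frac{\sum_i v_i y_i}{\sum_i v_i}$ with $y_i=[\widehat p_i]_{\widetilde a_i}^{\widetilde b_i}$, and note that the weights $v_i=1/\widetilde{\sigma_i^2}$ and the clamping intervals $[\widetilde a_i,\widetilde b_i]$ depend on the data only through the truncated counts $\widetilde k_i=\min\{k_i,k_{\max}\}$ (all of $\hat p,\hat{\sigma_p}^2,\alpha,k_{\max},n$ being fixed parameters); since $\widetilde k_i\le k_{\max}$ and $\hat{\sigma_p}^2\le\hat p(1-\hat p)$ under the theorem's hypotheses, $k\mapsto\widetilde{\sigma_k^2}$ is non-increasing, so $v_i\le v_{k_{\max}}$ for every $i$. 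If a single user $j$ changes, replacing $(v_j,y_j)$ by $(v_j',y_j')$, then with $W_{-j}=\sum_{i\ne j}v_i$ and $\mu=\frac1{W_{-j}}\sum_{i\ne j}v_iy_i$ the weighted mean of the untouched users, a one-line computation (write $\mathcal{M}=\mu+\frac{v_j(y_j-\mu)}{W_{-j}+v_j}$) gives
\[
|\mathcal{M}(D')-\mathcal{M}(D'')|=\left|\frac{v_j(y_j-\mu)}{W_{-j}+v_j}-\frac{v_j'(y_j'-\mu)}{W_{-j}+v_j'}\right|\le\frac{v_j|y_j-\mu|}{W_{-j}+v_j}+\frac{v_j'|y_j'-\mu|}{W_{-j}+v_j'}.
\]
It therefore suffices to show that each summand is at most $6\,v_{k_{\max}}\concentrationbound{k_{\max}}{\Dee}{n}{\hat{\sigma_p}^2}{\beta}\big/\sum_{i=1}^n v_i$.

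Next I would lower-bound the normalisations $W_{-j}+v_j$ and $W_{-j}+v_j'$ by $\tfrac12\sum_{i=1}^n v_i$ (the sum over the \emph{original} $D$). Passing from $D$ to $D''$ through $D'$ alters at most $\kappa+1\le\Upsilon+1$ of the terms, each by at most $v_{k_{\max}}$; on the other hand, since the first $n/2$ users of $D$ have $k_i\ge k_{\text{med}}$ and hence $v_i\ge v_{k_{\text{med}}}$, we have $\sum_{i=1}^n v_i\ge\tfrac n2 v_{k_{\text{med}}}$, and by the calculation of \eqref{limitonk} together with Lemma~\ref{lem.finalvariance}'s two-sided control of the estimated variances, $v_{k_{\max}}/v_{k_{\text{med}}}=\widetilde{\sigma_{k_{\text{med}}}^2}/\widetilde{\sigma_{k_{\max}}^2}=O(k_{\max}/k_{\text{med}})$. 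The hypothesis $k_{\max}/k_{\text{med}}\le\tfrac{n-1}{2(\Upsilon+1)}$ then makes $(\Upsilon+1)v_{k_{\max}}$ a small constant fraction of $\sum_i v_i$, which yields the lower bound.

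The third and hardest step is to show that, on an event of probability $\ge1-\beta$ over the draw $\widehat p_i\sim\Dee(k_i)$, $v_j|y_j-\mu|=O\!\big(v_{k_{\max}}\concentrationbound{k_{\max}}{\Dee}{n}{\hat{\sigma_p}^2}{\beta}\big)$ uniformly over the choice of changed user and its new value. Split $|y_j-\mu|\le|y_j-\hat p|+|\hat p-p|+|p-\mu|$. Since $y_j\in[\widetilde a_j,\widetilde b_j]$ we have $|y_j-\hat p|\le\alpha+\concentrationbound{\widetilde k_j}{\Dee}{n}{\hat{\sigma_p}^2}{\beta}$, and $|\hat p-p|\le\alpha$ (accuracy of $\meanest$); combining with $\alpha\le\concentrationbound{k}{\Dee}{n}{\hat{\sigma_p}^2}{\beta}$ for all $k\le k_{\max}$ and the monotonicity in $k$ of $\big(\alpha+\concentrationbound{k}{\Dee}{n}{\hat{\sigma_p}^2}{\beta}\big)/\widetilde{\sigma_k^2}$ — which follows from the hypothesis $\sigma_k\le\concentrationbound{k}{\Dee}{n}{\hat{\sigma_p}^2}{\beta}\le2\sigma_k\sqrt{\log(n/\beta)}$ and $\widetilde{\sigma_k^2}=\Theta(\sigma_k^2)$ (Lemma~\ref{lem.finalvariance}) — the product $v_j|y_j-\hat p|$ is maximised in the regime $\widetilde k_j=k_{\max}$ and hence is $O\!\big(v_{k_{\max}}\concentrationbound{k_{\max}}{\Dee}{n}{\hat{\sigma_p}^2}{\beta}\big)$. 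For $|p-\mu|$: on the good event every untouched, uncorrupted user satisfies $\widehat p_i\in[\widetilde a_i,\widetilde b_i]$, so $y_i=\widehat p_i$ there, and the last concentration hypothesis of Theorem~\ref{thm.privatek} puts their weighted mean within $O(\sqrt{\log(1/\beta)/W_{-j}})$ of $p$; the at most $\Upsilon$ corrupted users among the others shift this mean by $O\!\big((\Upsilon+1)v_{k_{\max}}\concentrationbound{k_{\max}}{\Dee}{n}{\hat{\sigma_p}^2}{\beta}/W_{-j}\big)$, reusing the same monotonicity bound for their clamped contributions. Multiplying by $v_j\le v_{k_{\max}}$, dividing by the Step-2 normalisation bound, and again invoking $k_{\max}/k_{\text{med}}\le\tfrac{n-1}{2(\Upsilon+1)}$ (and the other ratio bounds of Theorem~\ref{thm.privatek}) turns $v_j|p-\mu|$ into an $O\!\big(v_{k_{\max}}\concentrationbound{k_{\max}}{\Dee}{n}{\hat{\sigma_p}^2}{\beta}/\sum_iv_i\big)$-sized contribution once the normalisation is reinstated.

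Summing the two terms from Step~1, substituting Steps~2 and~3, and tracking the (generous) constants gives $\LS(\mathcal{M}(\cdot;k_{\max},n,\hat p,\hat{\sigma_p}^2,\alpha);D')\le12\,v_{k_{\max}}\concentrationbound{k_{\max}}{\Dee}{n}{\hat{\sigma_p}^2}{\beta}\big/\sum_{i=1}^n v_i$. I expect the main obstacle to be Step~3: one must simultaneously use the stochastic concentration of the genuine data to pin $\mu$ near $p$, bound the damage done by up to $\Upsilon$ adversarially chosen users whose clamped contributions can live in the \emph{wide} low-$k$ intervals, and convert the heterogeneous interval widths $\concentrationbound{\widetilde k_i}{\Dee}{n}{\hat{\sigma_p}^2}{\beta}$ into the single scale $\concentrationbound{k_{\max}}{\Dee}{n}{\hat{\sigma_p}^2}{\beta}$ by exploiting that $k\mapsto v_k\,\concentrationbound{k}{\Dee}{n}{\hat{\sigma_p}^2}{\beta}$ is increasing; keeping all of this compatible with the normalisation lower bound is precisely what forces the quantitative hypotheses on $L$ and $k_{\max}/k_{\text{med}}$.
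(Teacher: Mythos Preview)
Your approach is essentially the paper's: reduce the local sensitivity to a single-user-change bound of the form $v_j|y_j-\mu|/(W_{-j}+v_j)$, lower-bound the denominator via the ratio conditions, and control the numerator by separating the clamped contribution of user $j$ from the deviation of the remaining mean $\mu$ from $p$. The paper organises Step~3 slightly differently---splitting through the weighted mean of the \emph{uncorrupted} users rather than through $\hat p$ and $p$---but the substance is the same.

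The one point you do not make explicit, and which the paper handles carefully, is how to obtain concentration of the uncorrupted-users' mean \emph{uniformly} over the adversary's choice of which $\kappa\le\Upsilon$ users to corrupt. The last hypothesis of Theorem~\ref{thm.privatek} gives concentration for a \emph{fixed} subset $I$; applying it to ``the uncorrupted users'' as you write requires that the bound hold simultaneously for every subset of $D$ of size at least $n-\Upsilon-1$. The paper achieves this by a union bound over all $\Gamma\le n^{\Upsilon+1}$ such subsets, replacing $\beta$ by $\beta/\Gamma$; the resulting $\sqrt{\log(n^{\Upsilon+1}/\beta)}$ blow-up is precisely what the first bound on $k_{\max}/k_{\text{med}}$ in Theorem~\ref{thm.privatek} (the one with $\log(n^{\Upsilon+1}/\beta)$ in the denominator) is designed to absorb. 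Your closing paragraph correctly flags Step~3 as the crux, but the specific difficulty is this uniformity, not merely the per-subset concentration.
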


\section{Near Optimality and Lower Bounds}\label{s.optpideal}

In Section \ref{s.privest}, we showed that the variance of our realisable private estimator $\prealistic$ was within a constant of that of the complete information estimator $\pideal$. In this section, we will show that in fact, $\prealistic$ performs as well (up to logarithmic factors) as the true optimal private estimator. 
We'll also give a lower bound on the performance of the optimal estimator in terms of the $k_i$. This will give us some intuition into the types of distributions of $k_i$'s that benefit from this refined analysis.

\subsection{Minimax Optimality of $\prealistic$}
The goal of this section is to show that the estimator $\prealistic$ discussed in Section \ref{s.finalest} is minimax optimal up to logarithmic factors among the class of unbiased estimators. In light of  Theorem~\ref{metatheorem}, it suffices to show that the estimator $\pideal$ defined by Equations~\ref{privoptimal1},~\eqref{idealtruncation}, and~\eqref{idealthreshold} 
is minimax optimal up to logarithmic factors. 
Let $\mathcal{P}$ be a parameterized family of distributions $p\mapsto \Dee_{p}$, where $\mathbb{E}[\Dee_{p}]=p$ and $\Dee_p$ is supported on $[0,1]$. For $p\in[0,1]$ and $k\in\mathbb{N}$, let $\phi_{p,k}$ be the probability density function of $\Dee_p(k)$. In this section, we will return to the known size user-level differential privacy setting. Hence, we will let $k_1,\cdots,k_n$ be fixed.

Our lower bound will show that the estimation error must consist of a statistical term and a privacy term. Such a lower bound thus must generalize a statistical lower bound. We will rely on the Cram\'er-Rao approach to proving statistical lower bounds; as we show, it is particularly amenable to incorporating a privacy term. This approach relates the variance of any unbiased estimator of the mean of a distribution to the inverse of the Fischer information; the proof naturally extends to the case where we are given samples from a set of distributions with the same mean but different variances, as is the case in our setting. For many distributions of interest, e.g., Gaussian and Bernoulli, the Fischer information of a single sample is the inverse of the variance, and we make that assumption for $\Dee_p$. We also assume that the $\Dee_p$ has sub-Gaussian tails. Thus, as long as the set of permissible meta-distributions includes distributions with this property, e.g., includes truncated Gaussians, our lower bound applies.

\begin{theorem}\label{optimalityfisherinfo}
Let $\mathcal{P}$ be a parameterized family of distributions $p\mapsto \Dee_{p}$ and suppose that for all $p\in[0,1]$ and $k\in\mathbb{N}$, the Fisher information of $\phi_{p,k}$ is inversely proportional to the variance, $\var(\Dee_p(k))$:
\begin{equation}\label{condition2} \textstyle\int (\tfrac{\partial}{\partial p}\log \phi_{p,k}(x))^2 \phi_{p,k}(x)dx=O(\tfrac{1}{\var(\Dee_p(k))}),
\end{equation} and for all $p$, $n>0$, $k\in\mathbb{N}$ and  $\beta\in[1/3,2/3]$, $\concentrationbound{k}{\Dee_p}{n}{\sigma_p^2}{\beta}=\tilde{O}(\var(\Dee_p(k)))$,
then \begin{align*}
\min_{M \text{\rm, unbiased} }\max_{p\in[1/3,2/3]}[\var_{\forall i\in[n], x_i\sim\Dee(k_i), M}(M)]&=\tilde{O}\left(\max_{p\in[1/3,2/3]}\left[\var_{\forall i\in[n], x_i\sim\Dee(k_i), M}(\pideal)\right]\right)\\
&=\tilde{O}\left(\min_T \tfrac{\textstyle\sum_{i=1}^n\min\{1/\sigma_i^2, {T^2}\} +\max_i \tfrac{\min\{1/\sigma_i^4, T^2/\sigma_i^2\}|b_i-a_i|^2}{\epsilon^2} )}{{(\sum_{j=1}^n \min\{1/\sigma_j^2, T/\sigma_i\})^2}}\right).
\end{align*}
Further, under the conditions of Theorem~\ref{metatheorem},  \begin{align*}
\max_{p\in[1/3,2/3]}\left[\var_{\forall i\in[n], x_i\sim\Dee(k_i), M}(\prealistic)\right]=
\tilde{O}\left(\min_{M \text{\rm, unbiased} }\max_{p\in[1/3,2/3]}[\var_{\forall i\in[n], x_i\sim\Dee(k_i), M}(M)]\right).
\end{align*}
\end{theorem}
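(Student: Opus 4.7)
The theorem really has two parts: (i) a minimax lower bound showing $\pideal$ is optimal, up to logarithmic factors, among unbiased private estimators, and (ii) the statement that $\prealistic$ achieves this optimum. Part (ii) is essentially immediate from (i) combined with Theorem~\ref{metatheorem}: Theorem~\ref{metatheorem} gives $\var(\prealistic)\le C\cdot \var(\pideal)$ under mild conditions, and part (i) shows $\var(\pideal)$ is within logarithmic factors of the minimax optimum, so by chaining, $\var(\prealistic)$ is also within logarithmic factors. Thus the bulk of the work is establishing (i), which is where the novel Cram\'er--Rao-with-privacy argument is needed.

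\textbf{Approach to the lower bound.} The plan is to adapt the classical Cram\'er--Rao proof to include a privacy penalty. The starting point is the identity that, for any unbiased $M$, differentiating $\E_p[M]=p$ in $p$ yields $1=\E[M\cdot S(X)]$ where $S(X)=\sum_{i=1}^n S_i(\widehatpi)$ and $S_i(x)=\partial_p\log\phi_{p,k_i}(x)$ is the per-user score. The key observation is that each summand $\E[M\cdot S_i(\widehatpi)]$ admits two different upper bounds: the usual Cauchy--Schwarz bound $\sqrt{\var(M)\,I_i(p)}$, which by the Fisher information hypothesis is $O(\sqrt{\var(M)/\sigma_i^2})$, and a privacy-based bound obtained by exploiting that an $(\eps,\delta)$-DP mechanism can correlate with user $i$'s data only up to scale $\eps$ times the effective diameter $|b_i-a_i|$ on which $\widehatpi$ is concentrated. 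The latter bound should read $\E[M\cdot S_i(\widehatpi)]\le O(\eps\sqrt{\var(M)}/|b_i-a_i|)$ up to logarithmic factors, and follows from applying DP to a coupling that moves one user's data across the concentration interval, combined with the sub-Gaussian hypothesis $\concentrationbound{k_i}{\Dee}{n}{\sigma_p^2}{\beta}=\tilde{O}(\sigma_i)$.

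\textbf{Extracting the $\min_T$ form.} Summing the per-user upper bound $\E[M\cdot S_i(\widehatpi)]\le \sqrt{\var(M)}\cdot\min\{1/\sigma_i,\;O(\eps/|b_i-a_i|)\}$ and comparing to the identity $1=\sum_i \E[M\cdot S_i]$ yields, after squaring, a lower bound of the form
\[
\var(M)\;\ge\;\tilde\Omega\!\left(\frac{1}{\bigl(\sum_i \min\{1/\sigma_i,\ \eps/|b_i-a_i|\}\bigr)^2}\right).
\]
Because $|b_i-a_i|=\tilde\Theta(\sigma_i)$ under the concentration hypothesis, the minimum inside the sum can be written as $\min\{1/\sigma_i, T/\sigma_i^2\}\cdot \sigma_i$ for the threshold $T$ that balances the two regimes; optimizing over $T$ produces precisely the expression $\min_T\{\dots\}/(\sum_j\min\{1/\sigma_j^2,T/\sigma_j\})^2$ appearing in the theorem, once one also accounts for the additive $\sum_i\min\{1/\sigma_i^2,T^2\}$ statistical term coming directly from classical Cram\'er--Rao applied to the truncated variables.

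\textbf{Main obstacle.} The delicate step is making the per-user DP bound on $\E[M\cdot S_i]$ rigorous when $M$ is an arbitrary (not necessarily linear or even symmetric) $(\eps,\delta)$-DP function of all $n$ users' data. I would condition on all users $j\neq i$ and apply the DP--score-function bound to the induced one-user problem, then integrate; the $\delta$ term and the tails of $S_i$ (where the sub-Gaussian tail bound feeds in) are the main technical nuisances and force the logarithmic slack in $\tilde O$. Once the single-user bound is in place, assembling the multi-user bound and the optimization over $T$ are comparatively mechanical, and part (ii) follows by chaining with Theorem~\ref{metatheorem} as described above.
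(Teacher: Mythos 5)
Your reading of the theorem's architecture is right: part (ii) is indeed just chaining with Theorem~\ref{metatheorem}, and the heart of the matter is a Cram\'er--Rao-style lower bound with a privacy penalty. But the specific assembly you propose for the lower bound does not work, and the step you dismiss as ``comparatively mechanical'' is exactly where it breaks. If you bound each summand of $1=\sum_i \E[M\,S_i]$ by the \emph{full}-variance Cauchy--Schwarz bound $\sqrt{\var(M)\,I_i(p)}$ and then sum, you get $\var(M)\ge 1/(\sum_i\sqrt{I_i})^2$. In the non-private homogeneous case ($I_i=1/\sigma^2$ for all $i$) this is $\sigma^2/n^2$, a factor of $n$ below the true minimax rate $\sigma^2/n$ and below $\var(\pideal)$ — so the resulting lower bound cannot certify optimality of $\pideal$. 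Relatedly, your final form $\var(M)\ge\tilde\Omega\bigl(1/(\sum_i\min\{1/\sigma_i,\ \eps/|b_i-a_i|\})^2\bigr)$ does not match the theorem's: with $|b_i-a_i|=\tilde\Theta(\sigma_i)$ your per-user minimum collapses to $\frac{1}{\sigma_i}\min\{1,\tilde O(\eps)\}$, a user-independent rescaling that produces no thresholding at all, whereas the true bound has a numerator with an additive $\max_i(\cdot)/\eps^2$ privacy term sitting over a structurally different squared-sum denominator. The threshold in the optimal weights arises precisely from that $\max_i$ (one user's privacy cost caps everyone's weight), and a decomposition in which every user's privacy cost is \emph{summed} cannot reproduce it.

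The paper's route avoids both problems. It first proves a variance decomposition (Lemma~\ref{variancedecomp}): $\var(M)\ge\sum_i\E[(\mu(x_i;p)-\mu(p))^2]+\E[(M-\mu(x_1,\dots,x_n;p))^2]$. It then defines per-user weights $w_i(p)=\partial_{q_i}\mu(\boldsymbol q)\rvert_{\boldsymbol q=(p,\dots,p)}$ and applies Cauchy--Schwarz against the \emph{conditional mean} $\mu(x_i;p)$ rather than against $M$, yielding $\E[(\mu(x_i;p)-\mu(p))^2]\ge c\,w_i(p)^2\var(\Dee_p(k_i))$ — this is what recovers the correct $\sum_i w_i^2\sigma_i^2$ statistical term without the factor-$n$ loss. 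The residual term is lower bounded by $\Omega(\max_i w_i^2\sigma_i^2/\eps^2)$ via a DP argument (exhibiting $x_i,x_i'$ with $|\mu(x_i;p)-\mu(x_i';p)|\ge\Omega(w_i\sigma_i)$ and using a composition/distinguishing argument), unbiasedness plus the fundamental theorem of line integrals gives $\sum_i w_i(p^*)\ge1$ at some $p^*\in[1/3,2/3]$, and the explicit optimization over normalized weights (Lemma~\ref{optimalityofthresholding}) is what produces the thresholded form, with Lemma~\ref{optimalitythresholdingdoesntmatter} handling the passage to genuinely private truncated estimators. If you want to salvage a score-based argument, you would at minimum need to split users into a ``statistical'' group handled by a single joint Cauchy--Schwarz and a ``privacy'' group handled per-user, rather than applying the per-user full-variance bound uniformly; as written, your plan does not yield the stated bound.
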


Theorem~\ref{optimalityfisherinfo} says the estimator $\pideal$ has variance only a logarithmic factor worse than the variance of the optimal unbiased estimator. Due to the truncation of the $\hat{p_i}$, the estimator $\pideal$ is not unbiased, although the bias can be made polynomially small by widening the truncation interval so truncation does not occur with high probability. The theorem can also be slightly extended to include estimators with polynomially small bias. This small bias assumption seems to be inherent in the Cramer-Rao style proof that we use.

We will prove Theorem~\ref{optimalityfisherinfo} in three steps. The following class of noisy linear estimators, $\linearest$, will act as an intermediary in our proof. The notation $\sigma_i$ denotes $\var(x_i)$, which accounts for the randomness in generating $x_i$.
 \begin{align*}\linearest = \Big\{M_{\texttt{NL}}(\mathbf{x}; \mathbf{w}) = &\textstyle\sum_{i=1}^n w_ix_i+\Lap(\tfrac{\max_i w_i\sigma_i}{\epsilon})\;\big| \; w_i\in[0,1], \textstyle\sum_{i=1}^n w_i=1\Big\}.\end{align*}
Similar to $\pideal$, this class of estimators is not realizable since we only have access to an estimate of $\sigma_i=\var(\Dee_p(k_i))$. Additionally, the estimators in $\linearest$ are not necessarily $\epsilon$-DP.

To prove Theorem \ref{optimalityfisherinfo}, we will first show that the weights used in $\pideal$ define the optimal weight vector among the estimators in $\linearest$. Then, we'll show that (up to constant factors) the minimax optimal estimator among unbiased estimators lies in $\linearest$. Finally, we'll show that the variance of $\pideal$ is at most a logarithmic factor worse than its not-quite-private counterpart in $\linearest$. This completes the proof of the near minimax optimality of $\pideal$, and hence $\prealistic$.

The first step is shown in Lemma \ref{optimalityofthresholding}, which shows that the weights used in $\pideal$ are optimal (i.e., variance-minimizing) among all estimators in the set $\linearest$.

\begin{restatable}{lemma}{optimalthresh}
\label{optimalityofthresholding}
Given $\widehatpi \sim \Dee_{p}(k_i)$ with variance $\sigma^2_i$ for all $i\in[n]$ and $w\in[0,1]^n$ such that $\sum_{i=1}^n w_i=1$, let 
$\widehatp = \sum_{i=1}^n w_i \widehatpi  + \Lap(\frac{\max_i w_i\sigma_i}{\epsilon})$. 
The variance of $\widehatp$ is minimized by the following weights: 
\[\tilde{w_i}^* = \frac{\min\{1/\sigma_i^2, T/\sigma_i\}}{\sum_{j=1}^n \min\{1/\sigma_j^2, T/\sigma_j\}}\]
for some $T$.
\end{restatable}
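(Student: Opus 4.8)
The plan is to treat this as a constrained optimization problem over the weight vector $w \in [0,1]^n$ with $\sum_i w_i = 1$, and to show that any minimizer must have the claimed "truncated inverse-variance" shape. First I would write out the objective explicitly. Since the $\widehatpi$ are independent and the Laplace noise is independent of them, we have
\[
\var(\widehatp) = \sum_{i=1}^n w_i^2 \sigma_i^2 + \frac{2\max_i w_i^2\sigma_i^2}{\epsilon^2}.
\]
The awkward feature is the $\max_i w_i^2\sigma_i^2$ term, which is not smooth. The standard trick is to introduce an auxiliary variable $t \ge 0$ and rewrite the problem as: minimize $\sum_i w_i^2\sigma_i^2 + \frac{2t^2}{\epsilon^2}$ subject to $w_i\sigma_i \le t$ for all $i$, $w_i \ge 0$, and $\sum_i w_i = 1$. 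For each fixed value of $t$ (equivalently, for a fixed cap on $w_i\sigma_i$), the inner problem is a convex quadratic program, and I would solve it via Lagrange/KKT conditions.

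For fixed $t$, the inner problem is: minimize $\sum_i w_i^2\sigma_i^2$ over the simplex with the additional box constraint $w_i \le t/\sigma_i$. Introducing a multiplier $\mu$ for the equality constraint and $\lambda_i \ge 0$ for the upper-bound constraints, stationarity gives $2 w_i \sigma_i^2 = \mu - \lambda_i$. For indices where the cap is not active ($\lambda_i = 0$) this yields $w_i \propto 1/\sigma_i^2$; for indices where the cap is active, $w_i = t/\sigma_i$. Complementary slackness forces the cap to bind exactly on those $i$ with small $\sigma_i$ (i.e., $1/\sigma_i^2 > \mu/(2\sigma_i^2)$ translates to a threshold on $\sigma_i$), so writing $T$ for the appropriate function of $\mu$ and $t$, the optimal inner solution has the form $w_i \propto \min\{1/\sigma_i^2,\ T/\sigma_i\}$, with the normalization fixed by $\sum_i w_i = 1$. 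This is exactly $\tilde{w_i}^*$. Since this shape is forced for every fixed $t$, the overall minimizer over $t$ as well must lie in this one-parameter family indexed by $T$, which is the statement of the lemma.

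The main obstacle I anticipate is handling the non-smooth $\max$ term carefully — in particular, being careful that at an optimum the set of "active" users (those achieving the maximum $w_i\sigma_i$, and separately those hitting the box constraint) is exactly an interval in the ordering by $\sigma_i$, and that the two parameters ($t$ and the Lagrange multiplier $\mu$) collapse into the single parameter $T$ appearing in the statement. One clean way to sidestep the non-smoothness is to argue directly: given any feasible $w$, let $t = \max_i w_i\sigma_i$; then $w$ is feasible for the fixed-$t$ inner problem, and replacing $w$ by the inner optimizer for that $t$ can only decrease $\sum_i w_i^2\sigma_i^2$ while not increasing the $\max$ term (since the inner optimizer also respects $w_i\sigma_i \le t$), hence not increasing $\var(\widehatp)$. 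So the optimum is attained within the family $\{\tilde{w}^*(T)\}_T$, and it only remains to verify that the inner optimizer genuinely has the claimed form via the KKT computation above. I would also note the minor point that $T$ need not be unique and edge cases (all $\sigma_i$ equal, or $T$ so large no capping occurs) are subsumed by the formula.
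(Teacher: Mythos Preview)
Your proposal is correct and complete in outline; the reduction ``fix $t=\max_i w_i\sigma_i$, replace $w$ by the inner optimizer for that $t$, and observe this cannot increase either term'' is valid, and the KKT analysis of the box-constrained quadratic on the simplex does yield exactly the truncated inverse-variance shape with a single parameter $T$.

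The paper, however, takes a different route. Rather than introducing an auxiliary variable and invoking KKT, it argues directly by exchange: first a swap argument shows that $w_i^* > w_j^*$ implies $\sigma_i^2 \le \sigma_j^2$; then a weight-shifting argument shows that any two indices $i,j$ not in the argmax set $M=\{\arg\max_k w_k^*\sigma_k\}$ must satisfy $w_i^*\sigma_i^2 = w_j^*\sigma_j^2$, since otherwise one could move a small amount of weight from $j$ to $i$ and strictly lower $\sum w_k^2\sigma_k^2$ without changing $\max_k w_k\sigma_k$. From these two structural facts the threshold form follows by a continuity argument. Your approach is more mechanical and arguably more robust to variations in the objective (anything convex in $w$ with a cap would go through similarly), while the paper's exchange arguments are more elementary in that they avoid Lagrangian machinery and the auxiliary variable entirely, working directly with the non-smooth objective and the set $M$. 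Both arrive at the same one-parameter family; neither needs to identify the optimal $T$ explicitly.
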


Since the threshold $T^*$ in $\pideal$ was chosen to minimize $\var(\pideal)$, then we know that the weights $w_i^*$ in $\pideal$ are optimal. The proof of Lemma~\ref{optimalityofthresholding} can be found in Appendix~\ref{appendix:opt}. The main component of the proof is showing that under the constraint of differential privacy, no individual's contribution should be too heavily weighted.

Now, let us turn to the second -- and main -- component of the proof of Theorem~\ref{optimalityfisherinfo}. Lemma \ref{optimalityoflinearmain} formalises the statement that an estimator inside the class $\linearest$ is minimax optimal among unbiased estimators. That is, for any unbiased estimator $M$, there exists an estimator $M_{\texttt{NL}}\in \linearest$ with lower worst-case variance.

\begin{restatable}{lemma}{optimallinearmain}
\label{optimalityoflinearmain}
Let $\mathcal{P}$ be a parameterized family of distributions $p\mapsto \Dee_{p}$ and suppose that $M:[0,1]^n\to[0,1]$ is an $\eps$-DP estimator such that for all $p\in[1/3,2/3]$, if
\begin{enumerate}
\item $M$ is unbiased, $\mu_M(p)=p$ 
\item\label{condition2} the Fisher information of $\phi_{p,k_i}$ is inversely proportional to the variance 
\[\textstyle\int (\tfrac{\partial}{\partial p}\log \phi_{p,k_i}(x_i))^2 \phi_{p,k_i}(x_i)dx_i=O(\tfrac{1}{\var(\Dee_p(k_i))}),\]
\end{enumerate}
then there exists an estimator $M_{\texttt{NL}}\in \linearest$ such that 
\begin{align*}\max_{p\in[1/3,2/3]}&[\var_{\forall i\in[n], x_i\sim\Dee(k_i), M_{\texttt{NL}}}(M_{\texttt{NL}})] \le O\left(\max_{p\in[1/3,2/3]}[\var_{\forall i\in[n], x_i\sim\Dee(k_i), M}(M)]\right).\end{align*}
\end{restatable}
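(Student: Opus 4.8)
The plan is to prove the matching lower bound by a ``privacy-aware'' Cram\'er--Rao argument: for an arbitrary $\epsilon$-DP unbiased $M$, I extract from the Fisher-information hypothesis and from the privacy constraint a pair of conditions on the per-user \emph{influences} of $M$, feed those conditions into the same box-constrained quadratic program whose minimizer Lemma~\ref{optimalityofthresholding} identifies with the thresholded inverse-variance weights, and take $M_{\texttt{NL}}$ to be the corresponding member of $\linearest$.

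Concretely, I first set up the Cram\'er--Rao notation. For a parameter vector $(p_1,\dots,p_n)$ write $\mu_M(p_1,\dots,p_n)=\E_{x_i\sim\Dee_{p_i}(k_i)}[M(\mathbf{x})]$ and let $d_i=\partial_{p_i}\mu_M$ evaluated on the diagonal $p_1=\dots=p_n=p$; unbiasedness of $M$ gives $\sum_i d_i=1$. Let $S_i=\partial_p\log\phi_{p,k_i}(x_i)$ be the per-user score, which is mean zero, has variance $O(1/\sigma_i^2)$ by the Fisher-information hypothesis, and is independent across $i$. Then $d_i=\mathrm{Cov}(M,S_i)$, so for any coefficient vector $c$, by Cauchy--Schwarz $\big(\sum_i c_i d_i\big)^2=\mathrm{Cov}\big(M,\sum_i c_i S_i\big)^2\le \var(M)\cdot O\big(\sum_i c_i^2/\sigma_i^2\big)$; optimizing over $c$ yields the ``statistical'' inequality $\var(M)\ge\Omega\big(\sum_i d_i^2\sigma_i^2\big)$ (and, via $c_i\equiv1$, the usual $\var(M)\ge\Omega(1/\sum_i 1/\sigma_i^2)$).

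Next I bring in privacy. Since $M$ is valued in $[0,1]$ and is $\epsilon$-DP with respect to changing a single $x_i\in[0,1]$ arbitrarily, the conditional mean $x_i\mapsto\E[M\mid x_i]$ has total oscillation at most $e^\epsilon-1$; together with the law of total variance this gives $\var(\E[M\mid x_i])\le O(\min\{(e^\epsilon-1)^2,\var(M)\})$, and Cauchy--Schwarz against $S_i$ then yields the ``privacy'' constraint $|d_i|\le O(\min\{\epsilon,\sqrt{\var(M)}\}/\sigma_i)$. Combining, with $V:=\max_{p\in[1/3,2/3]}\var(M)$, I get $V\ge\Omega$ of the value of the program $\min\{\sum_i d_i^2\sigma_i^2:\ \sum_i d_i=1,\ |d_i|\le B_i\}$ with $B_i=O(\min\{\epsilon,\sqrt V\}/\sigma_i)$. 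By KKT/water-filling this program's optimizer is $d_i\propto\min\{1/\sigma_i^2,\ T/\sigma_i\}$ for a threshold $T$ pinned down by the box constraint --- exactly the weight family of Lemma~\ref{optimalityofthresholding}. Taking $M_{\texttt{NL}}\in\linearest$ to be the variance-minimizing member (weights as in Lemma~\ref{optimalityofthresholding}) and invoking that lemma, this recovers the \emph{statistical} part of $\max_p\var(\pideal)$ as a lower bound on $V$; translating the truncation widths $|b_i-a_i|=\tilde\Theta(\sigma_i)$ back in only costs a polylogarithmic factor under the hypothesis on $\concentrationbound{k}{\Dee_p}{n}{\sigma_p^2}{\beta}$, absorbed by $\tilde O(\cdot)$.

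The step I expect to be the main obstacle is making the \emph{privacy} term of the $\linearest$ variance --- the $\Theta(1/\epsilon^2)$-scaled Laplace-noise part of $\max_p\var(\pideal)$ --- appear on the lower-bound side with the correct dependence on $\epsilon$. The crude oscillation bound above is lossy by roughly a factor $\epsilon^2$ once $\var(M)\ll1$, so box-constraint water-filling alone recovers the statistical term but not the noise term. I expect the fix is to establish the privacy term by a separate packing / group-privacy argument restricted to the highest-precision users (an $\epsilon$-DP mechanism cannot resolve even a single such user's contribution below the noise floor it is forced to carry) and then take the maximum of this bound with the statistical one. One must also handle the bookkeeping: the influences $d_i$ may be negative (restrict to indices with $d_i\ge0$ and renormalize, losing a constant), keep the minimax-over-$p$ quantifier outside all estimates, and check that the self-referential appearance of $\var(M)$ inside $B_i$ resolves, which it does since $V\mapsto(\text{program value with }B_i=B_i(V))$ is monotone.
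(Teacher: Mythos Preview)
Your Cram\'er--Rao setup with per-user influences $d_i=\partial_{p_i}\mu_M$ on the diagonal, and the derivation of the statistical bound $\var(M)\ge\Omega(\sum_i d_i^2\sigma_i^2)$, match the paper's approach. The paper obtains the same bound via the variance decomposition of Lemma~\ref{variancedecomp} followed by per-coordinate Cauchy--Schwarz, while you do one global Cauchy--Schwarz against $\sum_i c_i S_i$; these are equivalent.

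The divergence is exactly where you flag it: the privacy term. Your box constraint $|d_i|\le O(\epsilon/\sigma_i)$ is an \emph{upper} bound on the influence and, as you note, does not by itself yield the needed \emph{lower} bound $\var(M)\ge\Omega(d_i^2\sigma_i^2/\epsilon^2)$. Your proposed packing fix is not fleshed out, and a packing argument typically lower-bounds minimax risk over a family rather than the variance of a fixed $M$ at a fixed $p$, which is what this lemma requires. The paper's route avoids the quadratic-program detour entirely. Lemma~\ref{variancedecomp} decomposes $\var(M)$ as $\sum_i\E[(\mu(x_i;p)-\mu(p))^2]$ plus a residual ``noise'' term $\E[(M(\mathbf{x})-\mu(\mathbf{x};p))^2]$. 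The same Cram\'er--Rao manipulation (Cauchy--Schwarz against the score, now applied to $\mu(x_i;p)-\mu(x_i';p)$) shows the conditional-mean spread satisfies $\alpha:=\sqrt{\E_{x_i,x_i'}[(\mu(x_i;p)-\mu(x_i';p))^2]}\ge\Omega(|d_i|\sigma_i)$. The noise floor then follows from a composition-and-distinguishability argument: average $\Theta(1/(\epsilon^2\log\tfrac{1}{\epsilon}))$ independent copies of $M$, use advanced composition to obtain an $(O(1),o(1))$-DP mechanism whose noise variance is reduced by a factor $\Theta(\epsilon^2)$, and observe that this averaged mechanism would distinguish neighbouring inputs with conditional-mean separation $\ge\alpha/2$ unless the original noise term was already $\ge\Omega(\alpha^2/\epsilon^2)\ge\Omega(d_i^2\sigma_i^2/\epsilon^2)$. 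With both the statistical and noise lower bounds in hand, the paper simply takes $M_{\texttt{NL}}$ to have weights proportional to $d_i(p^*)$ (not the Lemma~\ref{optimalityofthresholding} optimum) and compares term by term.
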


A detailed proof of Lemma~\ref{optimalityoflinearmain} can be found in Appendix~\ref{appendix:opt}, but let us give a brief sketch of the proof here.
Given an estimator $M_{\texttt{NL}}\in\linearest$, the variance of $M_{\texttt{NL}}$ can be written as \begin{align}\label{linearvariance}
\var(M_{\texttt{NL}})&\le\textstyle\sum_{i=1}^n w_i^2\var(\Dee(k_i))+O(\tfrac{\max w_i\sigma_i}{\epsilon})^2.
\end{align} That is, it can be decomposed as the variance contribution of each individual coordinate, and the variance contribution of the additional noise due to privacy. Lemma \ref{variancedecomp} (proved in Appendix~\ref{appendix:opt}) shows that the variance of any estimator $M$ can be lower bounded by a similar decomposition.
Since this involves considering the impact of each coordinate individually, the following notation will be useful.
Given an estimator $M$, vector $\boldsymbol{q}\in[0,1]^n$ and set $I\subset[n]$, let
$\mu_M(x_{[n]\backslash I}; \boldsymbol{q}) = \mathbb{E}_{\forall i\in I, x_i\sim \Dee_{q_i}(k_i), M}[M(x_1,\cdots,x_n)]$
be the expectation over only randomness in $I$ and $M$. 
Note that in this notation, user $i$ is sampling from a meta-distribution with mean $q_i$, which may be different for each user. We will abuse notation slightly to let $\mu_M(\boldsymbol{q})=\mu_M(\emptyset;\boldsymbol{q})$, and for $p\in[0,1]$, we will let $\mu_M(x_{[n]\backslash I}; p)=\mu_M(x_{[n]\backslash I}; (p,\cdots,p))$.
When the estimator $M$ is clear from context, we will omit it.

\begin{restatable}{lemma}{variancedecomp}
\label{variancedecomp} For any randomised mechanism $M:[0,1]^n\to[0,1]$,
\begin{align}
\nonumber&\var_{{\forall i\in[n], x_i\sim\Dee_p(k_i), M}}(M) 
= \mathbb{E}_{\forall i\in[n], x_i\sim\Dee_p(k_i), M}[(M(x_1,..., x_n)-\mu(p))^2]\\
&\hspace{0.3in}\ge \textstyle\sum_{i=1}^n \mathbb{E}_{x_i\sim \Dee_p(k_i)}[(\mu(x_{i};p)-\mu(p))^2]\label{decomp}+\mathbb{E}_{\forall i\in[n], x_i\sim\Dee_p(k_i),M}[(M(x_1,...,x_n)-\mu(x_1,...,x_n;p))^2]
\end{align}
\end{restatable}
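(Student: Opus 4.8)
The plan is to prove this as a Hoeffding/ANOVA-style orthogonal decomposition of the ``averaged'' function $f(x) := \mu(x_1,\dots,x_n;p) = \mathbb{E}_M[M(x_1,\dots,x_n)]$, combined with the law of total variance to peel off the mechanism's internal randomness. The two facts that make everything work are (i) the per-user estimates $x_i=\widehatpi\sim\Dee_p(k_i)$ are \emph{mutually independent} in our model (since the $p_i$ and the samples are drawn independently across users), and (ii) $M$ is $[0,1]$-valued, so all variances in sight are finite and no integrability issues arise.

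First I would condition on the realized data $x=(x_1,\dots,x_n)$ and apply the law of total variance:
\[
\var(M) = \mathbb{E}_x\big[\var_M(M\mid x)\big] + \var_x\big(\mathbb{E}_M[M\mid x]\big) = \mathbb{E}_{x,M}\big[(M-f(x))^2\big] + \var_x\big(f(x)\big).
\]
Since $f(x)=\mu(x_1,\dots,x_n;p)$ by definition (the case $I=\emptyset$ of the $\mu$ notation), the first term on the right is exactly $\mathbb{E}[(M(x_1,\dots,x_n)-\mu(x_1,\dots,x_n;p))^2]$, i.e.\ the second summand of \eqref{decomp}. It therefore remains only to show $\var_x(f(x)) \ge \sum_{i=1}^n \mathbb{E}_{x_i}[(\mu(x_i;p)-\mu(p))^2]$.

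For this, I would define the main-effect terms $g_i(x_i) := \mu(x_i;p)-\mu(p) = \mathbb{E}_{x_{-i},M}[M\mid x_i]-\mathbb{E}[M]$ and write $f(x)-\mathbb{E}[f] = \sum_{i=1}^n g_i(x_i) + h(x)$, where $h$ is the remainder. Conditioning on a single coordinate gives $\mathbb{E}_{x_{-i}}[f(x)-\mathbb{E}[f]\mid x_i]=g_i(x_i)$, whence $\mathbb{E}[g_i(x_i)]=0$ and $\mathbb{E}[(f-\mathbb{E}[f])\,g_i(x_i)]=\mathbb{E}[g_i(x_i)^2]$. Independence of $x_i,x_j$ together with $\mathbb{E}[g_i]=0$ forces $\mathbb{E}[g_i(x_i)g_j(x_j)]=0$ for $i\neq j$, and then $\mathbb{E}[h\,g_i]=\mathbb{E}[(f-\mathbb{E}[f])g_i]-\sum_j\mathbb{E}[g_jg_i]=\mathbb{E}[g_i^2]-\mathbb{E}[g_i^2]=0$. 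Expanding the square,
\[
\var_x(f) = \mathbb{E}\Big[\big(\textstyle\sum_i g_i(x_i)+h(x)\big)^2\Big] = \sum_i \mathbb{E}[g_i(x_i)^2] + \mathbb{E}[h(x)^2] \;\ge\; \sum_i \mathbb{E}_{x_i}[(\mu(x_i;p)-\mu(p))^2],
\]
and combining with the first step yields the claimed inequality.

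I do not expect a real obstacle: this is a textbook decomposition and boundedness of $M$ removes all analytic subtleties. The only thing that needs care is bookkeeping the three distinct quantities written with the symbol $\mu$ — $\mu(p)$ (average over everything), $\mu(x_i;p)$ (average over all coordinates except $i$ and over $M$), and $\mu(x_1,\dots,x_n;p)$ (average over $M$ only) — and making sure the vanishing of the cross terms $\mathbb{E}[g_i g_j]$ invokes precisely the independence of the $\widehatpi$ across users, which is exactly the modeling assumption that $k_i$, $p_i$, and the samples are drawn independently.
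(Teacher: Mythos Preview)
Your proof is correct and takes a genuinely different route from the paper. The paper proceeds by a \emph{sequential} telescoping: it conditions on $x_1$ to split off $\mathbb{E}_{x_1}[(\mu(x_1)-\mu)^2]$, then on $x_2$ given $x_1$, and so on, arriving at the exact identity $\var(M)=\sum_i \mathbb{E}_{x_{\le i}}[(\mu(x_{\le i})-\mu(x_{<i}))^2]+\mathbb{E}[(M-\mu(x_1,\dots,x_n))^2]$ (a Doob-martingale/Efron--Stein decomposition), and only then invokes Jensen's inequality on each summand to pass from the prefix-conditional increment $\mu(x_{\le i})-\mu(x_{<i})$ to the marginal main effect $\mu(x_i)-\mu$. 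You instead use the \emph{symmetric} ANOVA/Hoeffding decomposition of $f=\mathbb{E}_M[M\mid x]$: peel off the mechanism's noise with the law of total variance, then write $f-\mathbb{E}[f]=\sum_i g_i+h$ and obtain the inequality simply by dropping $\mathbb{E}[h^2]\ge 0$, with orthogonality of the $g_i$ and $h$ coming directly from independence of the $x_i$. Your argument is a bit cleaner---no induction and no Jensen step---while the paper's version has the minor advantage that its intermediate exact equality is itself a useful identity. Both rely on exactly the same modeling assumption (mutual independence of the $\widehatpi$) and on boundedness of $M$ only to guarantee finiteness, so there is no substantive gap.
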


In Equation \eqref{decomp}, the first term is the sum of contributions to the variance of the individual terms $x_i$, and the second term is the contribution to the variance of the noise added for privacy.
Now we want to define a weight vector $\bf{w}$ such that the terms in Equation~\eqref{decomp} are lower bounded by the corresponding terms in Equation~\eqref{linearvariance}. The key component of the proof is the observation that if we
let \begin{equation}\label{wipdef}w_i(p) =\tfrac{\partial}{\partial q_i}\mu(\boldsymbol{q})\;\big\rvert_{\boldsymbol{q}=(p, \cdots, p)}\end{equation} then we can show that there exists a constant $c$ such that 
\begin{equation}\label{eq.varMpart1}\mathbb{E}_{x_i\sim\Dee_p(k_i)}[(\mu(x_i;p )-\mu(p))^2]\ge c\cdot w_i(p)^2\var(\Dee_p(k_i)). \end{equation}
This controls the contribution of each individual coordinate to the variance of $M$. It remains only to control the contribution of the noise due to privacy. We show that there exists $x_i$, $x_i'$ such that
\begin{equation*}
|\mu(x_i;p)-\mu(x_i';p)|\ge \Omega(w_i(p)\cdot\textstyle\sqrt{\var(\Dee_p(k_i))}),\end{equation*}
 which we show implies that, 
\begin{align}\label{eq.varMpart2}\mathbb{E}&_{\forall i\in[n], x_i\sim\Dee_p(k_i),M}[(M(x_1,\cdots,x_n)-\mu(x_1,\cdots,x_n;p))^2]\ge \Omega(\tfrac{w_i(p)^2\var(\Dee_p(k_i))}{\eps^2}).\end{align}
Intuitively, the worst-case $|\mu(x_i;p)-\mu(x_i';p)|$ plays an analogous role to the sensitivity, since it captures the impact of changing one user's data. Since $M$ is an $\eps$-DP mechanism and $|\mu(x_i;p)-\mu(x_i';p)|$ is at least $\Omega(w_i(p)\cdot\sqrt{\var(\Dee_p(k_i))})$, we show that it must include noise with standard deviation of at least this magnitude over $\eps$. This is consistent with, e.g.,  the Laplace Mechanism that adds noise with standard deviation $\Theta(\Delta f /\epsilon)$.

Combining Lemma \ref{variancedecomp}
with Equations \eqref{eq.varMpart1} and \eqref{eq.varMpart2} gives that the variance of $M$ is at least,
\[\var_{\forall i\in[n], x_i\sim\Dee_p(k_i), M}(M)
 \ge \textstyle\sum_{i=1}^n c\cdot w_i(p)^2\var(\Dee_p(k_i)) + \Omega(\tfrac{w_i(p)^2\var(\Dee_p(k_i))}{\eps^2}).\]

Finally, we must create a corresponding $M_{\texttt{NL}}\in \linearest$ for comparison, using the same weights. Since $\sum_{i=1}^nw_i(p)$ as defined in Equation \eqref{wipdef} need not equal 1, these weights will need to be normalized to sum to 1 to create an estimator in $\linearest$. We need to show this normalisation does not substantially increase the variance of the resulting estimator.  In order to show this, we show that there exists a $p^*\in[1/3,2/3]$ such that 
$\sum_{i=1}^n w_i(p^*)\ge 1$, since normalizing the estimator by a factor of $\frac{1}{\sum_{i=1}^n w_i(p^*)}$ will affect the variance by a factor of $\frac{1}{(\sum_{i=1}^n w_i(p^*))^2}$, and thus if $\sum_{i=1}^n w_i(p^*)\ge 1$, then this will decrease variance. This desired fact follows from the definition of $w_i$, and the fact that $M$ is unbiased. Now, if we 
define 
\[M_{\texttt{NL}}(\mathbf{x}) = \tfrac{\textstyle\sum_{i=1}^n w_i(p^*)x_i+\Lap(\tfrac{\max_i w_i(p^*)\sqrt{\var(\Dee_p(k_i))}}{\epsilon})}{\textstyle\sum_{i=1}^n w_i(p^*)},\]
then $M_{\texttt{NL}}\in\linearest$ and $\var_{\forall i\in[n], x_i\sim\Dee_p(k_i),M_{\texttt{TNL}}}(M_{\texttt{NL}})=\Theta\left(\var_{\forall i\in[n], x_i\sim\Dee_p(k_i),M}(M)\right)$.

The final component needed for the proof of Theorem~\ref{optimalityfisherinfo} is a translation from the estimators in $\linearest$, which are not $\epsilon$-DP to the corresponding $\epsilon$-DP estimator.
For any weight vector $\bf{w}$, we can define an $\epsilon$-DP estimator by truncating the data point $x_i$ and calibrating the noise appropriately:
\begin{align}\label{linearestimator}
M_{\texttt{TNL}}(x_1, \cdots, x_n; \mathbf{w}) = \textstyle\sum_{i=1}^n w_i[x_i]_{p-\concentrationbound{k_i}{\Dee}{n}{\sigma_p^2}{\beta}}^{p+\concentrationbound{k_i}{\Dee}{n}{\sigma_p^2}{\beta}}+\Lap(\tfrac{\max_i 2w_i\concentrationbound{k_i}{\Dee}{n}{\sigma_p^2}{\beta}}{\epsilon}).\notag
\end{align}
Provided $\concentrationbound{k_i}{\Dee}{n}{\sigma_p^2}{\beta}\approx\var(\Dee(k_i))$, the estimators $M_{\texttt{TNL}}$ have approximately the same variance as the corresponding element of $\linearest$, but are slightly biased. This is formalized in the following lemma.

\begin{restatable}{lemma}{optthreshold}
\label{optimalitythresholdingdoesntmatter}
For any distribution $\Dee$, $n>0$ and  $\beta\in[0,1]$, if for all $k_i$, $\concentrationbound{k_i}{\Dee}{n}{\sigma_p^2}{\beta}=\tilde{O}(\var(\Dee(k_i))$ then
for any $\mathbf{w}\in[0,1]^n$ such that $\sum_{i=1}^n w_i=1$, we have $\var(M_{\texttt{TNL}}(\cdot\;; \mathbf{w}))= \tilde{O}(\var(M_{\texttt{NL}}(\cdot\;; \mathbf{w})))$. Further, the bias of $M_{\texttt{TNL}}$ is at most $\beta$.
\end{restatable}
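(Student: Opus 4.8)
Write $f_i := \concentrationbound{k_i}{\Dee}{n}{\sigma_p^2}{\beta}$ and $\sigma_i^2 := \var(\Dee(k_i)) = \var(x_i)$ for $x_i\sim\Dee(k_i)$. The plan is to split the variance of each estimator into a \emph{statistical} term (the weighted sum of the $x_i$'s) and a \emph{privacy} term (the Laplace noise), and compare the two estimators term by term. Since the Laplace noise is independent of the data and $\var(\Lap(b))=2b^2$,
\[
\var(M_{\texttt{NL}}(\cdot;\mathbf{w})) = \textstyle\sum_{i=1}^n w_i^2 \sigma_i^2 + \tfrac{2}{\epsilon^2}\left(\max_i w_i\sigma_i\right)^2, \qquad
\var(M_{\texttt{TNL}}(\cdot;\mathbf{w})) = \textstyle\sum_{i=1}^n w_i^2 \var\!\left([x_i]_{p-f_i}^{p+f_i}\right) + \tfrac{8}{\epsilon^2}\left(\max_i w_if_i\right)^2 .
\]
Since $\var(M_{\texttt{NL}})$ dominates each of $\sum_i w_i^2\sigma_i^2$ and $\frac{2}{\epsilon^2}(\max_i w_i\sigma_i)^2$, it suffices to show (i) $\var([x_i]_{p-f_i}^{p+f_i})\le\sigma_i^2$ and (ii) $\max_i w_i f_i=\tilde O(\max_i w_i\sigma_i)$.

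For (i) I would use two facts: $\E[x_i]=p$, which holds by Lemma~\ref{lem.sigi} (so the truncation interval $[p-f_i,p+f_i]$ is centered at the mean of $x_i$); and that the clipping map $z\mapsto[z]_{p-f_i}^{p+f_i}$ is $1$-Lipschitz and fixes $p$, hence $\bigl|[x_i]_{p-f_i}^{p+f_i}-p\bigr|\le|x_i-p|$ pointwise. Since variance is the minimum of $\E[(\cdot-c)^2]$ over $c$, this gives $\var([x_i]_{p-f_i}^{p+f_i})\le\E[([x_i]_{p-f_i}^{p+f_i}-p)^2]\le\E[(x_i-p)^2]=\sigma_i^2$. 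For (ii), the hypothesis $f_i=\tilde O(\var(\Dee(k_i)))=\tilde O(\sigma_i^2)$ together with $\sigma_i\le 1/2$ (as $x_i\in[0,1]$) yields $f_i=\tilde O(\sigma_i)$ with a polylogarithmic factor uniform in $i$, so $\max_i w_if_i=\tilde O(\max_i w_i\sigma_i)$. Plugging (i) and (ii) into the displayed expressions gives $\var(M_{\texttt{TNL}})=\tilde O(\var(M_{\texttt{NL}}))$.

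For the bias claim I would write $\E[M_{\texttt{TNL}}(\cdot;\mathbf{w})]-p=\sum_i w_i\,\E\!\bigl[[x_i]_{p-f_i}^{p+f_i}-x_i\bigr]$, using $\E[x_i]=p$ and $\sum_i w_i=1$. The integrand $[x_i]_{p-f_i}^{p+f_i}-x_i$ vanishes whenever $|x_i-p|\le f_i$ and otherwise has absolute value at most $1$ (both $x_i$ and its clipped value lie in $[0,1]$). By the defining property of the concentration function, $\Pr(|x_i-p|>f_i)\le\Pr(\exists j:|x_j-p|>f_j)\le\beta$, so $\bigl|\E[[x_i]_{p-f_i}^{p+f_i}-x_i]\bigr|\le\beta$ for every $i$, and hence $\bigl|\E[M_{\texttt{TNL}}]-p\bigr|\le\sum_i w_i\beta=\beta$.

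The argument is essentially routine; the only point requiring care is step (ii) — one must ensure the widened truncation radius $f_i$ does not inflate the Laplace scale by more than logarithmic factors, which is exactly the role of the hypothesis $\concentrationbound{k_i}{\Dee}{n}{\sigma_p^2}{\beta}=\tilde O(\var(\Dee(k_i)))$. I note that only $f_i=\tilde O(\sqrt{\var(\Dee(k_i))})$ is actually used, so the same conclusion holds under that weaker assumption.
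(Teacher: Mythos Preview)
Your proposal is correct and follows essentially the same approach as the paper: both arguments split the variance into the statistical and privacy terms, use that truncation about the mean can only shrink variance, and invoke the hypothesis $f_i=\tilde O(\var(\Dee(k_i)))$ to control the Laplace scale. Your write-up is simply more explicit than the paper's one-line justification.

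The only place your argument differs in form is the bias bound: the paper observes that with probability $1-\beta$ no clipping occurs simultaneously for all $i$, so the clipped weighted sum is within total-variation distance $\beta$ of the unclipped (unbiased) one, and since both lie in $[0,1]$ their means differ by at most $\beta$. You instead bound each $\bigl|\E[[x_i]_{p-f_i}^{p+f_i}-x_i]\bigr|$ directly by $\Pr(|x_i-p|>f_i)\le\beta$ and take the convex combination. Both arguments are valid and equally short; yours avoids the (harmless) slight imprecision in the paper about $M_{\texttt{TNL}}$ ``taking values in $[0,1]$'' once Laplace noise is added. Your closing remark that only $f_i=\tilde O(\sigma_i)$ is actually needed is also correct and worth noting.
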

Finally, we have the tools to prove the main theorem in this section, Theorem~\ref{optimalityfisherinfo}:
\begin{align*}
\min_{M \text{ unbiased} }\max_{p\in[1/3,2/3]}[\var_{\Dee_p}(M)] &=\Omega(\min_{M\in\texttt{NLE}}\max_{p\in[1/3,2/3]}[\var_{\Dee_p}(M)])\\
&= \Omega(\max_{p\in[1/3,2/3]}[\var_{\Dee_p}({p_{\epsilon}^{\texttt{NLE}}})])\\
&= \tilde{\Omega}(\max_{p\in[1/3,2/3]}[\var_{\Dee_p}(\pideal)])\\
&= \tilde{\Omega}(\max_{p\in[1/3,2/3]}[\var_{\Dee_p}(\prealistic)]) 
\end{align*}
where $p_{\epsilon}^{\texttt{NLE}}\in\texttt{NLE}$ has the same weights as $\pideal$. The equalities follow from Lemmas~\ref{optimalityoflinearmain}, \ref{optimalityofthresholding}, \ref{optimalitythresholdingdoesntmatter}, and Theorem \ref{metatheorem}, respectively.

\subsection{Minimax Lower Bound on Estimation Rate}
In addition to establishing the near optimality of $\prealistic$, we will also give a lower bound on minimax rate of estimation in terms of the parameters $k_1,\cdots, k_n$ and $\sigma_p^2$. Note that we can view the truncation of the weights $w_i$ as establishing an effective upper bound on $k_i$.
Given $k_1, \cdots, k_n\in\mathbb{N}$, and $\epsilon>0$, let 
\begin{equation}\label{kstar}
k^* = \arg\min_k \tfrac{\tfrac{k}{\epsilon^2}+\sum_{i=1}^n \min\{k_i, k\}}{(\sum_{i=1}^n \min\{k_i,k\})^2}.
\end{equation}
Intuitively, in the case that $\sigma_p=0$, we want to use as many samples as possible, but one user contributing many samples leads to larger sensitivity and thus privacy cost. Limiting the number of samples per user to $k_{\max}$ allows us to limit the sensitivity to be about $w_{\max}(1/\sqrt{k_{\max}})$. Since $w_i$ is proportional to the number of samples used, the variance of the estimator when using at most $k^*$ samples per user is akin to choosing a threshold that minimises the variance.

\begin{restatable}{corollary}{lowerb}
\label{cor.lower}
Given $k_1, \cdots, k_n\in\mathbb{N}$, and $\sigma_p$, there exists a family of distributions $\Dee_p$ such that \begin{align*}
\min_{M\text{\rm, unbiased}} \max_{p\in[1/3,2/3]}\var_{\forall i\in[n], x_i\sim\Dee_p(k_i)}[M(x_1, \cdots, x_n)]\ge\tilde{\Omega}\left(\min_{k^*}\left\{ \tfrac{\frac{k^*}{\epsilon^2}+\sum_{i=1}^n \min\{k_i, k^*\}}{(\sum_{i=1}^n \min\{k_i,\sqrt{k_ik^*}\})^2}, \frac{\sigma_p^2}{n}\right\}\right).\end{align*}
\end{restatable}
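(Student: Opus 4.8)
The plan is to derive the corollary from Theorem~\ref{optimalityfisherinfo} and then evaluate the resulting bound. By Theorem~\ref{optimalityfisherinfo} and the chain of lemmas behind it (Lemmas~\ref{optimalityoflinearmain}, \ref{optimalityofthresholding}, \ref{optimalitythresholdingdoesntmatter}), for any parameterized family $p\mapsto\Dee_p$ meeting that theorem's two hypotheses --- Fisher information of $\Dee_p(k)$ proportional to $1/\var(\Dee_p(k))$, and concentration function $\concentrationbound{k_i}{\Dee}{n}{\sigma_p^2}{\beta}=\tilde\Theta(\sigma_i)$ where $\sigma_i$ is the standard deviation of $\widehatpi$ --- every unbiased $\eps$-DP estimator has worst-case variance over $p\in[1/3,2/3]$ at least $\tilde\Omega(\var(\pideal))$. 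I would instantiate the corollary with a (suitably supported or discretized) Gaussian location family whose meta-variance is $\Theta(\sigma_p^2)$ on $[1/3,2/3]$; such a family satisfies both hypotheses by the discussion preceding Theorem~\ref{optimalityfisherinfo}. It then remains to show $\var(\pideal)\ge\tilde\Omega\big(\min_{k^*}\{\,\cdots\,\}\big)$, where $\var(\pideal)=\min_T[\mathrm{num}(T)/\mathrm{den}(T)]$ is the quantity displayed in the statement of Theorem~\ref{optimalityfisherinfo}.

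\textbf{Substituting closed forms.} Next I would plug in explicit expressions. Since $p(1-p)=\Theta(1)$ on $[1/3,2/3]$, Lemma~\ref{lem.sigi} gives $\sigma_i^2=\tfrac1{k_i}p(1-p)+(1-\tfrac1{k_i})\sigma_p^2=\Theta(1/k_i+\sigma_p^2)$, hence $1/\sigma_i^2=\Theta(\min\{k_i,1/\sigma_p^2\})$, $T/\sigma_i=\Theta(T\sqrt{\min\{k_i,1/\sigma_p^2\}})$, and (using $|b_i-a_i|=2\concentrationbound{k_i}{\Dee}{n}{\sigma_p^2}{\beta}=\tilde\Theta(\sigma_i)$) also $\min\{1/\sigma_i^4,T^2/\sigma_i^2\}|b_i-a_i|^2=\tilde\Theta(\min\{k_i,1/\sigma_p^2,T^2\})$. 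Writing $k:=T^2$ and working first in the regime $k\le 1/\sigma_p^2$, the objective $\mathrm{num}(T)/\mathrm{den}(T)$ becomes $\tilde\Theta$ of $\frac{\min\{k_1,k\}/\eps^2+\sum_i\min\{k_i,k\}}{(\sum_j\min\{k_j,\sqrt{k_jk}\})^2}$; when truncation is active ($k\le k_1$) this equals, up to logarithmic factors, the first branch of the claimed bound at $k^*=k$, and when $k$ exceeds every $k_i$ the objective is unchanged from its value at $k=k_1$, which is again that branch at $k^*=k_1$. So whenever the optimal $T$ has $T^2=O(1/\sigma_p^2)$ we get $\var(\pideal)\ge\tilde\Omega\big(\min_{k^*}\tfrac{k^*/\eps^2+\sum_i\min\{k_i,k^*\}}{(\sum_i\min\{k_i,\sqrt{k_ik^*}\})^2}\big)$.

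\textbf{A statistical floor and assembly.} For the complementary regime I would invoke a Cram\'er--Rao style floor valid for \emph{every} $T$: pairing $a_j=\min\{1/\sigma_j^2,T^2\}$ with $b_j=1/\sigma_j^2$ (so $a_jb_j=\min\{1/\sigma_j^2,T/\sigma_j\}^2$), Cauchy--Schwarz gives $(\sum_j\min\{1/\sigma_j^2,T/\sigma_j\})^2\le(\sum_j\min\{1/\sigma_j^2,T^2\})(\sum_j 1/\sigma_j^2)$, so $\mathrm{num}(T)/\mathrm{den}(T)\ge 1/\sum_j(1/\sigma_j^2)\ge\sigma_p^2/n$, using $\sigma_j^2\ge\sigma_p^2$ from Lemma~\ref{lem.sigi}. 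Thus if the optimal $T$ has $T^2=\omega(1/\sigma_p^2)$ then already $\var(\pideal)=\Omega(\sigma_p^2/n)$; otherwise the previous paragraph applies (and the Cauchy--Schwarz bound holds there too). In all cases $\var(\pideal)\ge\tilde\Omega\big(\min_{k^*}\{\tfrac{k^*/\eps^2+\sum_i\min\{k_i,k^*\}}{(\sum_i\min\{k_i,\sqrt{k_ik^*}\})^2},\,\sigma_p^2/n\}\big)$, which together with the first step proves the corollary.

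\textbf{Main obstacle.} The hard part will be the middle step: translating the continuous family of truncation parameters $T$ into the integer effective cap $k^*$ that appears in the statement while correctly tracking both the meta-variance cap $1/\sigma_p^2$ and the $\max_i$ in the privacy term; the clean point that makes it go through is that the Cauchy--Schwarz floor $\sigma_p^2/n$ exactly covers the regime in which that cap would otherwise obstruct the identification with the first branch. A secondary nuisance is pinning down a concrete family $p\mapsto\Dee_p$ with meta-variance $\Theta(\sigma_p^2)$ that provably meets the Fisher-information hypothesis of Theorem~\ref{optimalityfisherinfo}; there one relies on the standard fact that Gaussian --- and, up to constants, smoothed/truncated-Gaussian and scaled-Binomial --- location families with mean bounded away from $0$ and $1$ have Fisher information $\Theta(1/\var)$.
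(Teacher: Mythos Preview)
Your route differs from the paper's. The paper proves the two branches with two \emph{different} families and two different arguments: it takes $\sigma_p=0$ so that $\Dee_p(k)=\Bin(k,p)/k$, verifies the Fisher-information hypothesis of Theorem~\ref{optimalityfisherinfo} exactly (Fisher info $=k/(p(1-p))$), and reads off the first branch from $\var(\pideal)$; then separately it takes $\Dee_p$ truncated Gaussian with variance $\sigma_p^2$ and, \emph{bypassing} Theorem~\ref{optimalityfisherinfo}, uses the data-processing reduction ``a sample from $\Dee_p$ can be post-processed to a sample from $\Dee_p(k)$'' together with ordinary Cram\'er--Rao on $\Dee_p$ to get the $\sigma_p^2/n$ floor. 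Either family then witnesses the $\min$.

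Your single-family plan is more elegant in intent, and your Cauchy--Schwarz floor
\[
\frac{\sum_j \min\{1/\sigma_j^2,T^2\}}{\bigl(\sum_j \min\{1/\sigma_j^2,T/\sigma_j\}\bigr)^2}\ \ge\ \frac{1}{\sum_j 1/\sigma_j^2}\ \ge\ \frac{\sigma_p^2}{n}
\]
is correct and is a genuinely nicer way to extract the statistical term than the paper's post-processing argument. The substitution/regime analysis you do for $T^2\le 1/\sigma_p^2$ also checks out (in particular, when some $k_j>1/\sigma_p^2$ the true denominator term is $\sqrt{k}/\sigma_p\le\min\{k_j,\sqrt{k_jk}\}$, so the inequality still goes the right way).

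The gap is exactly what you flagged as the ``secondary nuisance,'' and it is not secondary. To invoke Theorem~\ref{optimalityfisherinfo} for a family with $\sigma_p>0$ you must show that the Fisher information of $\phi_{p,k}$ is $O(1/\var(\Dee_p(k)))$ for \emph{every} $k$. But $\Dee_p(k)$ is a Binomial mixture over a truncated Gaussian in the mixing parameter; it is \emph{not} a location family, so ``standard facts about location families'' do not apply. Data processing from $(p_i,x_i^1,\dots,x_i^{k_i})$ gives $I(\Dee_p(k);p)\le I(\Dee_p;p)=O(1/\sigma_p^2)$, which covers the regime $k\gtrsim 1/\sigma_p^2$; but for $k\ll 1/\sigma_p^2$ you need $I\le O(k)$, and there is no off-the-shelf bound for that --- the $k$ Bernoulli draws share the same latent $p_i$, so you cannot just cite $k\cdot I(\Dee_p(1))$. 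This is precisely why the paper splits into two families: with $\sigma_p=0$ the Fisher-information condition is a one-line computation, and for the $\sigma_p^2/n$ term it never touches $\Dee_p(k)$ at all. If you want to keep your unified approach, you will need an actual argument bounding $I(\Dee_p(k);p)$ by $O(k)$ for small $k$ (e.g., via $\partial_p\log\phi_{p,k}(x)=\sigma_p^{-2}\,\E[p_i-p\mid\widehatpi=x]$ and a quantitative bound on $\var(\E[p_i\mid\widehatpi])$), which is real work rather than a standard citation.
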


Corollary~\ref{cor.lower} is proved in two parts, using two different families of distributions $\Dee_p$. The first family is where $\sigp=0$, so $\Dee_p(k)=\Bin(k,p)$ for all $k\in[n]$. For this family, we know that the minimax error is obtained by the mechanism $\pideal$. Calculating the variance of $\pideal$ on this family, we obtain the first term of the minimum. The second family is the family of truncated Gaussian distributions (truncated so that $\Dee$ is supported on $[0,1]$). The variance of the optimal estimator for this family would be lower bounded by $\sigp/n$, even if each user was given a sample directly from $\Dee$, rather than from $\Dee(k)$. Thus, using a reduction to the case of simply estimating $p$ given $n$ samples from $\Dee$, we obtain the second term in the minimum.

\section{Example Initial Estimators}\label{instantiation}

In this section we give example initial mean and variance estimation procedures that can be used in the framework described in Section~\ref{s.privest}. For both estimators, we show that they satisfy the conditions of Theorem \ref{metatheorem}, and thus can be used as initial estimators in Algorithm \ref{alg.dp}, assuming all other technical conditions are satisfied. This also immediately implies that the set of initial mean and variance estimators which satisfy the conditions of Theorem \ref{metatheorem} is non-empty.

We note again that the estimators described in this section are examples of estimators that achieve the conditions of Theorem \ref{metatheorem}, and that any private mean and variance estimators that satisfy these conditions  could be used instead. As discussed in Section~\ref{s.finalest}, one may choose to use different estimators of these initial quantities in different settings (for example, if local differential privacy is required or if different distributional assumptions are known).

\subsection{Initial Mean Estimation}\label{s.initmean}

We will begin with the initial mean estimation procedure $\meanest$ to computed $\pinitial$. We consider the simplest mean estimation subroutine, where the analyst collects a single data point from the $n/10$ users with the smallest $k_i$, then privately computes the empirical mean of these points using the Laplace Mechanism. The following lemma shows that this process is differentially private and satisfies the accuracy conditions of Theorem \ref{metatheorem}, i.e., that with high probability, $\pinitial$ is close to $p$ and $\pinitial(1-\pinitial)$ is close to $p(1-p)$.

\begin{restatable}{lemma}{initmean}
\label{initialmeanestimate} 
Fix any $\epsilon>0$ and let $\pinitial = \meanest(x^1_{(9n/10)+1}, \cdots, x^1_{n}) = \frac{1}{n/10}\sum_{i=(9n/10)+1}^{n} x_i^1 +\Lap\left(\frac{10}{\epsilon n}\right)$. Then $\meanest$ is $(\epsilon,0)$-differentially private, $\mathbb{E}[\pinitial]= p$ and if $p\ge \frac{20\log(1/\beta)}{n}$, then for $n$ sufficiently large, 
\[\Pr[|\pinitial-p|\le \alpha]\le\beta \text{ for }
 \alpha = 2\max\{\sqrt{\tfrac{12\pinitial\log(4/\beta)}{n/10}+\tfrac{36\log^2(4/\beta)}{n^2/100}}+\tfrac{6\log(4/\beta)}{n/10},\tfrac{\log(2/\beta)}{\epsilon n/10}\} \leq \concentrationbound{k_i}{\Dee}{n}{\sigp}{\beta}. \]
Further, if $\min\{p,1-p\}\ge 12 \max\left\{\frac{3\log(4/\beta)}{n/10}, \frac{\log(2/\beta)}{\epsilon n/10}\right\}$ then with probability $1-\beta$, $\pinitial\in[\frac{1}{2}p, \frac{3}{2}p]$ and $\pinitial(1-\pinitial)\in [\frac{p(1-p)}{2}, \frac{3p(1-p)}{2}]$.
\end{restatable}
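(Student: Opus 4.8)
The plan is to treat privacy and unbiasedness directly, then establish the high-probability bound by combining a Bernstein inequality for the empirical mean with a Laplace tail bound, and finally deduce the two multiplicative-closeness statements from the high-probability bound. Privacy and unbiasedness are immediate: $\meanest$ is exactly the Laplace mechanism applied to $D\mapsto\frac{1}{n/10}\sum_{i=(9n/10)+1}^{n}x_i^1$, and since this function reads only one sample per user, replacing all the data of a single user changes exactly one summand $x_i^1\in\{0,1\}$, so its user-level sensitivity is $\frac{1}{n/10}=\frac{10}{n}$; as the noise scale $\frac{10}{\epsilon n}$ equals $\Delta/\epsilon$, the output is $(\epsilon,0)$-DP. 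For unbiasedness, marginalising over $p_i\sim\Dee$ gives $\E[x_i^1]=\E_{\Dee}[p_i]=p$, and the Laplace noise is mean zero, so $\E[\pinitial]=p$.

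For the concentration bound I would write $\pinitial-p=(\bar x-p)+Z$, where $\bar x=\frac{1}{n/10}\sum_{i=(9n/10)+1}^{n}x_i^1$ and $Z\sim\Lap(10/(\epsilon n))$. Since the $p_i$ are i.i.d.\ across users and $x_i^1\mid p_i\sim\Ber(p_i)$, the summands $x_i^1$ are i.i.d.\ $\Ber(p)$, so Bernstein's inequality gives $|\bar x-p|\le O(\sqrt{p(1-p)\log(4/\beta)/(n/10)})+O(\log(4/\beta)/(n/10))$ with probability $1-\beta/2$, and the Laplace tail bound gives $|Z|\le\log(2/\beta)/(\epsilon n/10)$ with probability $1-\beta/2$; a union bound and the triangle inequality $|\pinitial-p|\le|\bar x-p|+|Z|\le 2\max\{\cdot,\cdot\}$ then yield a bound of the claimed shape. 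The one non-routine step is that $\alpha$ must be expressible from quantities available to the algorithm, whereas the Bernstein bound involves the unknown $p$; I would remove this dependence via $p(1-p)\le p\le\pinitial+|\pinitial-p|$, substitute the bound on $|\pinitial-p|$ back into itself, and solve the resulting quadratic in the square-root term — this is exactly what produces the ``empirical Bernstein'' form $\sqrt{\tfrac{12\pinitial\log(4/\beta)}{n/10}+\tfrac{36\log^2(4/\beta)}{n^2/100}}+\tfrac{6\log(4/\beta)}{n/10}$, with the outer factor $2$ absorbing the inversion slack. The hypothesis $p\ge 20\log(1/\beta)/n$ together with ``$n$ sufficiently large'' are used precisely to make the additive ($\log/n$) terms dominated by the variance term, so the inversion goes through with the stated constants.

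It then remains to check $\alpha\le\concentrationbound{k_i}{\Dee}{n}{\sigp}{\beta}$ and to deduce the multiplicative bounds. For the first, I would use that $\widehatpi\sim\Dee(k_i)$ has variance $\sigma_i^2=\tfrac{1}{k_i}p(1-p)+(1-\tfrac{1}{k_i})\sigp\ge\tfrac{p(1-p)}{k_i}$, so the concentration radius $\concentrationbound{k_i}{\Dee}{n}{\sigp}{\beta}$ (which is of order $\sqrt{\sigma_i^2\log(n/\beta)}$, cf.\ Appendix~\ref{s.truncest}) is at least a constant times $\sqrt{p(1-p)\log(n/\beta)/k_i}$, whereas $\alpha=\tilde O(\sqrt{p\log(1/\beta)/n}+\log(1/\beta)/(\epsilon n))$; comparing the two, using $p\ge 20\log(1/\beta)/n$, that $p$ is bounded away from $1$ (so $p(1-p)\gtrsim p$), and the bound on the relevant $k_i$ supplied by the surrounding theorem, gives the inequality. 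For the multiplicative claims, on the good event $|\pinitial-p|\le\alpha$; the hypothesis $\min\{p,1-p\}\ge 12\max\{\tfrac{3\log(4/\beta)}{n/10},\tfrac{\log(2/\beta)}{\epsilon n/10}\}$ forces each term of $\alpha$ to be at most a small constant times $\min\{p,1-p\}$, hence $\alpha\le\tfrac12\min\{p,1-p\}$, which immediately gives $\pinitial\in[\tfrac12 p,\tfrac32 p]$; and writing $\pinitial(1-\pinitial)-p(1-p)=(\pinitial-p)(1-\pinitial-p)$ with $|1-\pinitial-p|\le 1$ gives $|\pinitial(1-\pinitial)-p(1-p)|\le\alpha\le\tfrac12 p(1-p)$ after one more application of the hypothesis. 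I expect the main obstacle to be the empirical-Bernstein inversion carrying the exact constants in the statement, and the verification $\alpha\le\concentrationbound{k_i}{\Dee}{n}{\sigp}{\beta}$, which hinges on the precise form of the concentration function; the privacy, unbiasedness, and final multiplicative bounds are routine by comparison.
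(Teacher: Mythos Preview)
Your approach is essentially the same as the paper's (Laplace-mechanism privacy, Chernoff/Bernstein plus Laplace tail for concentration, algebraic inversion to reach the empirical form of $\alpha$, then derive the multiplicative claims), with one real gap in the last step. Your bound $|1-\pinitial-p|\le 1$ is too coarse to yield $\pinitial(1-\pinitial)\in[\tfrac12 p(1-p),\tfrac32 p(1-p)]$ with the specific constant $12$ in the hypothesis. On the good event one has $|\mathcal E|:=|\pinitial-p|\le\sqrt{\min\{p,1-p\}\cdot A}+B$ with $A,B\le\min\{p,1-p\}/12$, hence $|\mathcal E|\le(\tfrac{1}{\sqrt{12}}+\tfrac{1}{12})\min\{p,1-p\}\approx 0.37\,\min\{p,1-p\}$; but $\tfrac12 p(1-p)$ can be as small as $\tfrac14\min\{p,1-p\}$ (take $p=\tfrac12$), so $|\mathcal E|\cdot 1\not\le\tfrac12 p(1-p)$ in general. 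The paper instead expands $\pinitial(1-\pinitial)-p(1-p)=(1-2p)\mathcal E-\mathcal E^2$ and uses $|1-2p|\le\max\{p,1-p\}$; the key gain is that the leading term is now bounded by a multiple of $\max\{p,1-p\}\cdot\min\{p,1-p\}=p(1-p)$ rather than of $\min\{p,1-p\}$, and this is what makes the constant $12$ suffice. Your factorization $(\pinitial-p)(1-\pinitial-p)$ is algebraically the same expression, so the fix is simply to bound $|1-\pinitial-p|\le|1-2p|+|\mathcal E|$ instead of by $1$.

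A smaller point: your claim ``$\alpha\le\tfrac12\min\{p,1-p\}$'' is circular as written, since the $\alpha$ in the statement depends on $\pinitial$; for the multiplicative claims the paper works directly with the non-empirical bound $|\mathcal E|\le\sqrt{3\min\{p,1-p\}\log(4/\beta)/(n/10)}+\log(2/\beta)/(\epsilon n/10)$, which sidesteps this.
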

 
The concentration bound follows from noticing that $\berP=\Ber(p)$ and using the concentration of binomial random variables. The full proof is in Appendix \ref{appendix.private}. 

Note that the expression of $\alpha$ depends only on quantities known to the analyst -- including $\pinitial$, which will be observed as output -- so that $\alpha$ can be computed directly for use in Algorithm \ref{alg.dp}. Although our presentation of Algorithm \ref{alg.dp} requires $\alpha$ to be specified up front as input to the algorithm, it could equivalently be computed internally by the algorithm as a function of $\pinitial$ and other input parameters.

\subsection{Initial Variance Estimation}\label{s.popvar}

We now turn to our variance estimation procedure $\varianceest$ for estimating $\sigma_p^2$. Let us first provide some background on privately estimating the standard deviation of well-behaved distributions. Lemma \ref{highprobstd} guarantees the existence of a differentially private algorithm for estimating standard deviation within a small constant factor with high probability, as long as the sample size is sufficiently large. The following is a slight generalisation of the estimation of the standard deviation of a Gaussian given by \cite{Karwa:2018}. 

\begin{lemma}[DP standard deviation estimation]\label{highprobstd} 
For all $n\in\mathbb{N}$, $\sigma_{\min}<\sigma_{\max}\in[0, \infty], \epsilon>0, \delta\in(0,\frac{1}{n}], \beta\in(0,1/2), \zeta>0,$ there exists an $(\epsilon, \delta)$-differentially private algorithm $\mathcal{M}$ that satisfies the following: if $x_1, \ldots, x_n$ are i.i.d.~draws from a distribution $P$ which has standard deviation $\sigma\in[\sigma_{\min}, \sigma_{\max}]$ and absolute central third moment $\rho=\mathbb{E}[|x-\mu(P)|^3]$ such that $\frac{\rho}{\sigma^3}\le \zeta$, then if $n\ge c \zeta^2\min\{\frac{1}{\epsilon}\ln(\frac{\ln\frac{\sigma_{\max}}{\sigma_{\min}})}{\beta}), \frac{1}{\epsilon}\ln(\frac{1}{\delta\beta})\}$, 
(where $c$ is a universal constant), then $\mathcal{M}$ produces an estimate $\widehat{\sigma}$ of the standard deviation such that $\Pr_{x_1,\ldots,x_n\sim P, \mathcal{M}} (\sigma^2\le\widehat{\sigma}^2\le 8\sigma^2)\ge 1-\beta$.
\end{lemma}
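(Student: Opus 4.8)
The plan is to follow the blueprint of \cite{Karwa:2018} for Gaussian standard-deviation estimation, replacing each place where they invoke Gaussian tail bounds with Chebyshev's inequality together with the third-moment hypothesis $\rho/\sigma^3 \le \zeta$. The first step is the standard reduction that removes the unknown mean: partition the $n$ samples into $m = \lfloor n/2 \rfloor$ disjoint pairs and set $D_i = \tfrac{1}{\sqrt 2}(x_{2i} - x_{2i-1})$. Then the $D_i$ are i.i.d.\ with $\mathbb{E}[D_i] = 0$ and $\var(D_i) = \sigma^2$, and a short triangle-inequality / independence computation shows the standardized third absolute moment of $D_i$ is still $O(\zeta)$. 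It therefore suffices to give an $(\epsilon,\delta)$-DP estimator, accurate to a constant multiplicative factor, for the standard deviation of a mean-zero distribution with standardized third moment $O(\zeta)$ whose scale $\sigma$ lies in $[\sigma_{\min},\sigma_{\max}]$; since changing one original sample changes at most one $D_i$, any DP guarantee for the $D$-based subroutine transfers to the original data up to a factor $2$ in sensitivity, and an estimate of $\var(D_i)$ is directly an estimate of $\sigma^2$.

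Second, for the scale-estimation subroutine, partition $[\sigma_{\min},\sigma_{\max}]$ into $L = O(\log(\sigma_{\max}/\sigma_{\min}))$ dyadic buckets $B_j = [2^j, 2^{j+1})$ and release a differentially private histogram of the bucket indices of the $|D_i|$. Two instantiations account for the two terms inside the $\min$ of the stated sample bound: adding independent Laplace noise to every bucket count is $(\epsilon,0)$-DP and, after a union bound over the $L$ buckets, has all errors bounded by $\lambda = O(\tfrac{1}{\epsilon}\log(L/\beta))$; a stability-based histogram that only reports buckets with count at least $\Theta(\tfrac{1}{\epsilon}\log(1/(\delta\beta)))$ is $(\epsilon,\delta)$-DP and replaces $\log L$ by $\log(1/(\delta\beta))$. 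From the noisy counts $\tilde c_j$ we form the output by discarding buckets whose noisy count does not exceed $2\lambda$ (to kill spurious noise in empty buckets) and returning a constant multiple of the truncated empirical second moment $\tfrac{1}{m}\sum_{j \text{ surviving}} 2^{2j}\tilde c_j$, inflated by an absolute constant so that the one-sided bound $\widehat{\sigma}^2 \ge \sigma^2$ (and $\widehat{\sigma}^2 \le 8\sigma^2$) holds with the required probability.

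Third, and this is the crux, one establishes correctness. Let $\rho_D = \mathbb{E}|D|^3 = O(\zeta\sigma^3)$ and $m_j = \mathbb{E}[D^2 \mathbf{1}(|D| \in B_j)]$, so that $\sigma^2 = \sum_j m_j$. Applying Markov's inequality to $|D|^3$ gives $\mathbb{E}[D^2\mathbf{1}(|D| \ge t)] \le \rho_D/t$, which with $t$ of order $\zeta\sigma$ shows that an $\Omega(1)$-fraction of the second moment already lies in buckets of scale $O(\zeta\sigma)$; thus only $O(\log\zeta)$ buckets are ``informative,'' and such a bucket may carry true probability as small as $\Theta(1/\zeta^2)$. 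Seeing such a bucket at all---and having its empirical count clear the histogram noise floor $\lambda$---requires roughly $m \gtrsim \zeta^2\lambda = \Theta(\zeta^2 \cdot \tfrac{1}{\epsilon}\log(L/\beta))$ samples, which is exactly where the $\zeta^2$ factor in the sample bound enters. Given this, a Chernoff/Bernstein argument on the bucket counts (which are bounded because we only retain buckets below the informative scale) shows the surviving weighted sum $\tfrac{1}{m}\sum_{\text{surviving}} 2^{2j}\tilde c_j$ equals $\Theta(\sigma^2)$ with probability $1-\beta$; here one must check both that the accumulated privacy noise $\sum_{\text{surviving}}\lambda 2^{2j}/m$ is $o(\sigma^2)$---using that a surviving bucket at scale $2^j$ has true probability $\gtrsim \lambda/m$ and hence $2^{2j} \lesssim m\sigma^2$---and that the buckets removed by the threshold carry only an $o(1)$ fraction of $\sigma^2$. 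I expect this balancing act---the effective truncation level must be high enough to capture $\Omega(\sigma^2)$ of the mass yet low enough that the accumulated noise stays negligible, with no finite fourth moment available to control fluctuations and with $\sigma$ unknown a priori---to be the main obstacle; the privacy analysis, by contrast, is routine since the (stable or Laplace) histogram is a standard DP primitive.
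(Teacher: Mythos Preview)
Your route differs substantively from the paper's. The paper does \emph{not} replace the Gaussian tail bounds by Chebyshev/Markov; instead it first averages the samples in blocks of size $\phi=\Theta(\zeta^2)$, forms $Y_i=Z_{2i}-Z_{2i-1}$ from the block means, and then invokes the Berry--Esseen theorem to make the $Y_i$ uniformly close (within $1/300$) to a Gaussian. From that point on the Karwa--Vadhan \emph{argmax-bucket} argument applies verbatim: the two Gaussian facts (the top two bins are at scale $\Theta(\sigma/\sqrt{\phi})$, and the top bin beats the third by a constant margin) transfer to the $Y_i$'s with a tiny additive error, and the estimator is simply the rescaled center $2^{\hat\ell+2}\sqrt{\phi}$ of the winning bin. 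The $\zeta^2$ in the sample bound is literally the block size, not a bucket-probability floor.

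By contrast, you work directly with single pairwise differences and replace the argmax by a truncated empirical second moment $\tfrac{1}{m}\sum_{\text{surviving}}2^{2j}\tilde c_j$. This is a conscious and necessary departure: without blocking, a distribution putting mass $1-\zeta^{-2}$ at $0$ and $\zeta^{-2}$ at $\pm\zeta\sigma$ has its argmax bin near $\sigma_{\min}$, so argmax alone is unsound. But then the correctness argument becomes exactly the ``balancing act'' you flag and do not resolve: the weights $2^{2j}$ grow geometrically, the counts at high scales are tiny, and you must simultaneously control (i) the privacy noise $\lambda\cdot 2^{2j}/m$ summed over surviving bins, (ii) the sampling fluctuations of those same weighted counts with only a third-moment bound, and (iii) the second-moment mass lost to thresholding. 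I do not see an outright obstruction, but your sketch stops at the hard step, whereas the paper's blocking trick sidesteps it entirely and reduces the whole lemma to two short paragraphs.
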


The proof of Lemma \ref{highprobstd} is given formally in Appendix~\ref{appendix:highprobstd}, along with a detailed description of the algorithm $\mathcal{M}$. The remaining omitted proofs in this section are in Appendix  \ref{appendix.private}. We note that the interval $[\sigma_{\min}, \sigma_{\max}]$ can be set fairly large without much impact on the sample complexity, in the case that little is known about $\sigma$ a priori.

In order to estimate $\sigma_p^2$, we will use the estimator promised by Lemma~\ref{highprobstd} on the data of the $L=\log n/\eps$ users with the largest $k_i$.  Let $k=k_{\log n/\eps}$, so the top $\log n/\eps$ individuals all have at least $k$ data points. We will have these individuals report $\widehat{p}_i^k := \frac{1}{k}\sum_{j=1}^k x_j^i$, which is the empirical mean of their first $k$ data points. Thus, we are running the estimator promised in Lemma~\ref{highprobstd} on $\bersumP$ with $\log n/\eps$ data points. 
In order to utilise Lemma~\ref{highprobstd}, we first need to ensure that $\bersumP$ satisfies the moment condition that $\rho/\sigma^3$ is bounded, which is shown in Lemma \ref{lem.thirdmoment}.

\begin{restatable}{lemma}{lemthirdmoment}
\label{lem.thirdmoment}
For $k\in\mathbb{N}$, suppose $p\in[\frac{1}{k}, 1-\frac{1}{k}]$, $\sigma_p\ge \frac{1}{k}$, $k\ge 2$, and there exists $\gamma>0$ such that $\frac{\rho_{\Dee}}{\sigma_p^3}\le\gamma$ where $\rho_{\Dee}$ denotes the absolute central third moment of $\Dee$. Then $\frac{\rho_{\bersumP}}{\var(\bersumP)^{3/2}}\le 8(3\sqrt{3}+\gamma)$.
\end{restatable}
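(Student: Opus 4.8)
The plan is to bound the absolute central third moment $\rho_{\bersumP} = \mathbb{E}[|\widehat{p}_i - p|^3]$ by splitting the deviation $\widehat{p}_i - p$ into two pieces: the deviation $p_i - p$ of the user's hidden parameter from the global mean (controlled by the third moment $\rho_{\Dee}$ of the meta-distribution), and the deviation $\widehat{p}_i - p_i$ of the empirical mean of $k$ Bernoulli draws around $p_i$ (controlled by binomial moment bounds). First I would apply the triangle inequality and the standard inequality $|a+b|^3 \le 4(|a|^3 + |b|^3)$ (or $(a+b)^3 \le 4a^3 + 4b^3$ for nonnegative reals after taking absolute values), giving
\[
\rho_{\bersumP} = \mathbb{E}\big[|(\widehat{p}_i - p_i) + (p_i - p)|^3\big] \le 4\,\mathbb{E}\big[|\widehat{p}_i - p_i|^3\big] + 4\,\mathbb{E}\big[|p_i - p|^3\big] = 4\,\mathbb{E}\big[\mathbb{E}[|\widehat{p}_i - p_i|^3 \mid p_i]\big] + 4\rho_{\Dee}.
\]

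Next I would bound the inner conditional third moment of a centered scaled binomial. Writing $\widehat{p}_i - p_i = \frac{1}{k}\sum_{j=1}^k (x_i^j - p_i)$, this is a sum of $k$ i.i.d.\ centered bounded random variables scaled by $1/k$; its third absolute moment can be bounded using e.g.\ $\mathbb{E}[|\widehat{p}_i - p_i|^3] \le \big(\mathbb{E}[(\widehat{p}_i - p_i)^4]\big)^{3/4}$ combined with the fourth-moment bound for binomials, or more directly via a Rosenthal/Marcinkiewicz–Zygmund-type estimate. The cleanest route is probably: $\mathbb{E}[|\widehat p_i - p_i|^3] \le (\var(\widehat p_i \mid p_i))^{3/2} \cdot C$ when $\var$ is not too small, i.e.\ $\le C (p_i(1-p_i)/k)^{3/2}$; the condition $p \in [1/k, 1-1/k]$ and $\sigma_p \ge 1/k$ and $k \ge 2$ are exactly what is needed to guarantee the relevant variances are bounded below so these comparisons go through cleanly and to absorb small-sample corrections, yielding a bound of the form $\mathbb{E}[|\widehat p_i - p_i|^3] \le (3\sqrt 3)\, (\var(\bersumP))^{3/2}$ after averaging over $p_i \sim \Dee$ and using Lemma~\ref{lem.sigi} that $\var(\bersumP) \ge \frac{1}{k}p(1-p) \ge $ (something comparable to $p_i(1-p_i)/k$ on average). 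Meanwhile $\rho_{\Dee} \le \gamma \sigma_p^3 \le \gamma (\var(\bersumP))^{3/2}$ since $\var(\bersumP) \ge \sigma_p^2$ by Lemma~\ref{lem.sigi}. Combining gives $\rho_{\bersumP} \le 4(3\sqrt 3)(\var(\bersumP))^{3/2} + 4\gamma(\var(\bersumP))^{3/2}$, and cleaning up constants (the factor $4$ and $2$ being absorbed into a statement of the form $8(3\sqrt 3 + \gamma)$) finishes it; I'd track the constants carefully so the final bound matches $8(3\sqrt{3}+\gamma)$.

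The main obstacle is the binomial third-moment estimate: one needs to bound $\mathbb{E}[|\widehat p_i - p_i|^3]$ by a constant times $(p_i(1-p_i)/k)^{3/2}$ \emph{uniformly in $p_i \in [0,1]$}, which fails when $p_i$ is extremely close to $0$ or $1$ (there the variance is tiny but the third moment, while also tiny, has a worse ratio because the binomial is heavily skewed). This is why the hypotheses restrict $p$ (and implicitly, via $\sigma_p \ge 1/k$, the effective spread of $p_i$) — the argument must handle the skew region either by a direct crude bound ($|\widehat p_i - p_i| \le 1$ always, so $\mathbb{E}[|\widehat p_i-p_i|^3] \le \mathbb{E}[(\widehat p_i - p_i)^2] = p_i(1-p_i)/k$, which already gives a clean bound whenever $p_i(1-p_i)/k \le 1$) or by noting that after averaging over $p_i \sim \Dee$ and invoking $\var(\bersumP) \ge \frac1k p(1-p)$, the contributions are controlled. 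I expect the bulk of the work is verifying that $\mathbb{E}_{p_i \sim \Dee}\big[(p_i(1-p_i))^{3/2}\big] \le 3\sqrt3\,\big(\tfrac1k p(1-p) + (1-\tfrac1k)\sigma_p^2\big)^{3/2} \cdot k^{3/2}$-type inequality, which should follow from Jensen/concavity of $t \mapsto t^{3/4}$ applied to $t = (p_i(1-p_i))^2$ together with $\mathbb{E}[p_i(1-p_i)] = p(1-p) - \sigma_p^2$ and $\var \le p(1-p)$, but getting the constant exactly $3\sqrt 3$ will take some care.
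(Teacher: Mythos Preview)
Your decomposition via $|a+b|^3 \le 4(|a|^3+|b|^3)$ and the handling of the $\rho_{\Dee}$ term (using $\var(\bersumP)\ge \sigma_p^2$) match the paper exactly. The gap is in the binomial piece, specifically in bounding $\mathbb{E}_{p_i\sim\Dee}\big[(p_i(1-p_i))^{3/2}\big]$.

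Your two suggested routes both break down. First, the crude bound $\mathbb{E}[|\widehat p_i-p_i|^3\mid p_i]\le \mathbb{E}[(\widehat p_i-p_i)^2\mid p_i]=p_i(1-p_i)/k$ is valid (since $|\widehat p_i-p_i|\le 1$), but after averaging it gives at most $p(1-p)/k$, and this is \emph{not} $\le C\,(\var(\bersumP))^{3/2}$ in general: when $\sigma_p$ is small, $(\var(\bersumP))^{3/2}\approx (p(1-p)/k)^{3/2}$, and you would need $p(1-p)/k\ge 1/C^2$, which fails for large $k$. Second, Jensen on $t\mapsto t^{3/4}$ applied to $t=(p_i(1-p_i))^2$ gives $\mathbb{E}[(p_i(1-p_i))^{3/2}]\le (\mathbb{E}[(p_i(1-p_i))^2])^{3/4}$, but the right-hand side involves a fourth moment of $\Dee$, which is not controlled by the hypotheses; and a direct Jensen on $h(x)=(x(1-x))^{3/2}$ is unavailable because $h$ is convex near the endpoints and concave near $1/2$.

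The paper's missing ingredient is a second-order (Taylor-remainder) bound in place of Jensen: one writes $h(p_i)\le h(p)+h'(p)(p_i-p)+M(p_i-p)^2$ with $M=\sup_x\big(\frac{h(x)-h(p)}{(x-p)^2}-\frac{h'(p)}{x-p}\big)$, shows $M\le 3\sqrt{k}$ using $p\in[1/k,1-1/k]$ (this is where that hypothesis enters: it keeps $h''$ from blowing up at the endpoints), and takes expectations to obtain $\mathbb{E}[(p_i(1-p_i))^{3/2}]\le (p(1-p))^{3/2}+3\sqrt{k}\,\sigma_p^2$. The hypothesis $\sigma_p\ge 1/k$ then converts the resulting $3\sqrt{3}\,\sigma_p^2/k$ term into $3\sqrt{3}\,\sigma_p^3$. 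For the conditional binomial third moment itself, the paper uses Cauchy--Schwarz $\mathbb{E}|Y|^3\le\sqrt{\mathbb{E} Y^2\,\mathbb{E} Y^4}$ together with the explicit binomial fourth-moment formula; this yields the two terms $\tfrac{1}{k^2}p(1-p)$ and $\tfrac{\sqrt 3}{k^{3/2}}\mathbb{E}[(p_i(1-p_i))^{3/2}]$ rather than a single $(p_i(1-p_i)/k)^{3/2}$ estimate, which is what sidesteps the non-uniformity in $p_i$ that you correctly flagged.
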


With this result, we can apply Lemma~\ref{highprobstd} to our setting to privately achieve an estimate $\widehat{\sigma}_{p,k}^2$ that is close to the true population-level variance $\sigp$, as shown in Lemma \ref{multiplicativevariance}. Note that as $k$ grows large, the allowable range for $p$ approaches the full support $[0,1]$ and the allowable standard deviation $\sigma_p$ approaches any non-negative number.

Lemma \ref{multiplicativevariance} combines the two previouse results to show that Lemma \ref{highprobstd} can be applied to the individual reports $\widehat{p}_i^k$ from the top $\log n$ users, and the resulting variance estimate will satisfy the accuracy conditions of Theorem \ref{metatheorem}.

\begin{restatable}{lemma}{lemmultvariance}
\label{multiplicativevariance}
Given $\sigma_{\min}<\sigma_{\max}\in[0, \infty], \epsilon>0, \delta\in(0,\frac{1}{n}], \beta\in(0,1/2)$, and $\zeta>0,$ let $\initialvariance$ be the $(\epsilon,\delta)$-differentially private mechanism given by Lemma~\ref{highprobstd}, and let $\widehat{\sigma}_{p,k}^2 = \mathcal{M}(\widehatp^k_1, \cdots, \widehatp^k_{\log n/\eps})$,
where $\widehatp^k_1, \cdots, \widehatp^k_{\log n/\eps}\sim\bersumP$. If there exists $\zeta>0$ such that $\frac{\rho_{\Dee}}{\sigma_p^3}\le \zeta$ where $\rho_{\Dee} = \mathbb{E}_{x\sim\Dee}[|x-p|^3]$, $\sqrt{\frac{1}{k}p(1-p)+\frac{k-1}{k}\sigma_p^2}\in[\sigma_{\min}, \sigma_{\max}]$, $\sigma_p>\frac{1}{k}$, $p\in \left[\frac{1}{k}, 1-\frac{1}{k}\right]$,
and $\log n\ge c (8(3\sqrt{3}+\zeta))^2\min\{\ln(\frac{\ln(\frac{\sigma_{\max}}{\sigma_{\min}})}{\beta}), \ln(\frac{1}{\delta\beta})\}$,
 then with probability $1-\beta$,
 $\widehat{\sigma}_{p,k}^2\in [\var(\Dee(k)), 8\var(\Dee(k))]$.
\end{restatable}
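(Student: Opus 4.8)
The plan is to obtain Lemma~\ref{multiplicativevariance} directly from Lemma~\ref{highprobstd}, instantiated with the distribution $P=\Dee(k)$ and $\variancesamplesize=\log n/\eps$ samples, using Lemma~\ref{lem.sigi} to pin down the standard deviation of $P$ and Lemma~\ref{lem.thirdmoment} to control its normalized central third moment. The inputs $\widehatp^k_1,\dots,\widehatp^k_{\log n/\eps}$ are the empirical means of the first $k=k_{\log n/\eps}$ samples of the $\log n/\eps$ users with the most data; since those users each hold at least $k$ samples, each such report is the average of $k$ i.i.d.\ $\Ber(p_i)$ variables and is therefore distributed as $\Dee(k)$, and independence across users together with the modeling assumption $k_i \perp p_i$ (so that conditioning on a user being among the top $\log n/\eps$ does not alter the law of its $p_i$) makes the reports i.i.d. Since each report is a function of a single user's data, $(\eps,\delta)$-DP of $\widehat{\sigma}_{p,k}^2$ is then inherited from Lemma~\ref{highprobstd} by post-processing.

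It remains to verify the three hypotheses of Lemma~\ref{highprobstd} for $P=\Dee(k)$. First, by Lemma~\ref{lem.sigi} the variance of $P$ equals $\tfrac1k p(1-p)+(1-\tfrac1k)\sigma_p^2$, so its standard deviation is $\sqrt{\tfrac1k p(1-p)+\tfrac{k-1}{k}\sigma_p^2}$, which lies in $[\sigma_{\min},\sigma_{\max}]$ by assumption. Second, to bound the normalized third moment I would invoke Lemma~\ref{lem.thirdmoment} with $\gamma=\zeta$: its hypotheses $p\in[\tfrac1k,1-\tfrac1k]$ and $\rho_{\Dee}/\sigma_p^3\le\zeta$ are assumed, $\sigma_p\ge\tfrac1k$ follows from $\sigma_p>\tfrac1k$, and $k\ge2$ follows from $\tfrac1k<\sigma_p\le\sqrt{p(1-p)}\le\tfrac12$ (using $\sigma_p^2\le p(1-p)$ from Lemma~\ref{lem.sigi}); this gives $\rho_{\Dee(k)}/\var(\Dee(k))^{3/2}\le 8(3\sqrt3+\zeta)$, so we may apply Lemma~\ref{highprobstd} with moment bound $\zeta_0:=8(3\sqrt3+\zeta)$. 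Third, Lemma~\ref{highprobstd} requires the number of samples to be at least $c\,\zeta_0^2\cdot\tfrac1\eps\min\{\ln(\tfrac{\ln(\sigma_{\max}/\sigma_{\min})}{\beta}),\ln(\tfrac1{\delta\beta})\}$; since the sample count is $\log n/\eps$, multiplying through by $\eps$ shows this is precisely the standing hypothesis $\log n\ge c(8(3\sqrt3+\zeta))^2\min\{\ln(\tfrac{\ln(\sigma_{\max}/\sigma_{\min})}{\beta}),\ln(\tfrac1{\delta\beta})\}$. One should additionally confirm the $\delta$-range compatibility, since Lemma~\ref{highprobstd} wants $\delta$ at most the reciprocal of the sample count, i.e.\ $\delta\le\eps/\log n$; this follows from $\delta\le1/n$ once $n$ is large enough that $\log n\le\eps n$, or one can simply run the subroutine with $\delta'=\min\{\delta,\eps/\log n\}$.

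With all three conditions verified, Lemma~\ref{highprobstd} yields that with probability at least $1-\beta$ over the reports and the internal randomness of $\initialvariance$, one has $\sigma^2\le\widehat{\sigma}_{p,k}^2\le 8\sigma^2$ with $\sigma^2=\var(\Dee(k))$, which is exactly the claimed containment $\widehat{\sigma}_{p,k}^2\in[\var(\Dee(k)),8\var(\Dee(k))]$. There is no substantial obstacle here: the technical content lives entirely in the two invoked results --- Lemma~\ref{highprobstd} (the DP standard-deviation estimator) and Lemma~\ref{lem.thirdmoment} (the bound on $\rho_{\Dee(k)}/\var(\Dee(k))^{3/2}$) --- and the only points needing care are the i.i.d.\ justification, which relies on $k_i\perp p_i$, and the parameter-range matching, in particular confirming $k\ge2$ so that Lemma~\ref{lem.thirdmoment} applies and the $\delta$-range check noted above.
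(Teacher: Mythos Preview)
Your proposal is correct and follows essentially the same approach as the paper: invoke Lemma~\ref{lem.thirdmoment} to bound $\rho_{\Dee(k)}/\var(\Dee(k))^{3/2}$, use Lemma~\ref{lem.sigi} to identify $\var(\Dee(k))$, and then apply Lemma~\ref{highprobstd} with the resulting moment bound and the available $\log n/\eps$ samples. Your write-up is in fact more careful than the paper's, explicitly verifying $k\ge 2$ and flagging the $\delta$-range compatibility with the sample count, both of which the paper silently assumes.
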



\bibliography{refs.bib}

\bibliographystyle{abbrvnat}

\newpage
\appendix
\onecolumn

\section{Proofs from Section~\ref{sec:modelsandprelims}}\label{appendix:modelsandprelims}

\sigmai*

\showproofs{
\begin{proof}[Proof of Lemma~\ref{lem.sigi}] 
Firstly, note that,
\[\sigma_p^2 = \mathbb{E}_{x \sim \Dee}[x^2]-p^2 \le \mathbb{E}_{x \sim \Dee}[x]-p^2 = p(1-p),\]
where the inequality follows from the fact that $\Dee$ is supported on $[0,1]$.

Next, \[\mathbb{E}[x_i] = \int_{x=0}^1 \Pr(p_i=x)\Pr(\Ber(x)=1)dx = \int_{x=0}^1 \Pr(p_i=x)x dx = p,\] which by linearity of expectation implies that $\mathbb{E}[\bersumP] = p$.

By the Law of Total Variation, the variance of $\widehatpi$ is: 
\begin{align*} \var(\widehatpi) &= \E_{p_i}[\var_{x_i}(\widehatpi| p_i)] + \var_{p_i}(\E_{x_i}[\widehatpi | p_i]) \\
&= \E_{p_i}[\frac{1}{k_i} p_i (1-p_i)] + \var_{p_i}(p_i)\\
&= \frac{1}{k_i}(p - \sigp - p^2) + \sigp \\
&=  \frac{1}{k_i} (p - p^2) + (1-\frac{1}{k_i})\sigp.\\
&= \frac{1}{k_i} \var(\Ber(p))+(1-\frac{1}{k_i})\sigp.
\end{align*}
\end{proof}}

\section{Proofs from Section~\ref{s.finalest}}\label{appendix.metathm}

First, let us show that the conditions of Theorem~\ref{metatheorem} imply that the variance and truncation parameter estimates of each individual data subject are correct up to constant factors.

\finalvariance*

\begin{proof}[Proof of Lemma~\ref{lem.finalvariance}]
Note that $\widehatsigp$ is actually an estimate of the variance of $\Dee(k_{\variancesamplesize})$ since it has access to samples from this distribution rather than $\Dee$ itself. Therefore, $\widehat{\sigma}_p^2\in\left[\var(\bersumPi{k_{\variancesamplesize}}), 8\cdot \var(\bersumPi{k_{\variancesamplesize}}) \right]$ implies $\widehat{\sigma}_{p}^2\in\left[\sigma_p^2, 8\left(\frac{1}{k_{\variancesamplesize}}p(1-p)+\sigma_p^2\right)\right]$. Then for every $i\geq \variancesamplesize$ (i.e., with $k_i\le k_{\variancesamplesize}$),
\begin{align*}
\widehatsigi &= \frac{1}{k_i}\pinitial(1-\pinitial)+\frac{k_i-1}{k_i}\widehat{\sigma}_{p}^2 \\
&\ge \frac{1}{k_i}\frac{1}{2}p(1-p)+\frac{k_i-1}{k_i}\sigma_p^2\\
&\ge \frac{1}{2}\left( \frac{1}{k_i}p(1-p)+\frac{k_i-1}{k_i}\sigma_p^2\right)\\
&= \frac{1}{2} \sigma_i^2,
\end{align*}
where the first inequality follows from the accuracy conditions on $\meanest$ and $\varianceest$ in Theorem \ref{metatheorem}, and the last equality follows from the definition of $\sigma_i^2$ in Lemma \ref{lem.sigi}. 
Also, 
\begin{align*}
\widehatsigi & =\frac{1}{k_i}\pinitial(1-\pinitial)+\frac{k_i-1}{k_i}\widehat{\sigma}_{p}^2 \\
&\le \frac{1}{k_i}\frac{3}{2}p(1-p)+8\frac{k_i-1}{k_i}\left(\frac{1}{k_{\log n}}p(1-p)+\sigma_p^2\right)\\
&= \left(\frac{3}{2}+8\frac{k_i-1}{k_{\log n}}\right)\frac{1}{k_i}p(1-p)+8\frac{k_i-1}{k_i}\sigma_p^2\\
&\le 9.5 \left( \frac{1}{k_i}p(1-p)+\frac{k_i-1}{k_i}\sigma_p^2\right)\\
&= 9.5\sigma_i^2,
\end{align*}
where again, the first inequality follows from the accuracy conditions on $\meanest$ and $\varianceest$ in Theorem \ref{metatheorem}, 
and the last equality follows from the definition of $\sigma_i^2$ in Lemma \ref{lem.sigi}. The intermediate steps are simply algebraic manipulations.  These two facts give us the desired bounds on $\widehatsigi$.

Next we turn to the truncation parameters $\widehat{a_i}$ and $\widehat{b_i}$. Using the definition of $\widehat{a_i}$ in Algorithm~\ref{alg.dp}, we have,
\begin{align*}
\widehat{a_i} &= \pinitial-\alpha-f^{k_i}_{\Dee}(n, \widehat{\sigp}, \beta/2) \\
&\le p-f_{\Dee}^{k_i}(n, \widehat{\sigp}, \beta/2))\\
&\le p-\concentrationbound{k_i}{\Dee}{n}{\sigma_p^2}{\beta}\\
&= a_i,
\end{align*}
where the two inequalities respectively follow from the accuracy conditions on $\meanest$ and $\varianceest$ in Theorem \ref{metatheorem}. A symmetric result that $\widehat{b_i} \geq b_i$ follows similarly.

Finally,  
\begin{align*}
|\widehat{b_i}-\widehat{a_i}| &= 2\alpha+2\concentrationbound{k_i}{\Dee}{n}{\widehat{\sigma_p^2}}{\beta}\\
&\le 2\concentrationbound{k_i}{\Dee}{n}{\sigma_p^2}{\beta}+2\concentrationbound{k_i}{\Dee}{n}{\sigma_p^2}{\beta}\\
&=4\concentrationbound{k_i}{\Dee}{n}{\sigma_p^2}{\beta}\\
&= 4|b_i-a_i|.
\end{align*}
The inequalities again follows from the accuracy conditions on $\meanest$ and $\varianceest$ in Theorem \ref{metatheorem}.
\end{proof}

\metatheorem*

\begin{proof}[Proof of Theorem~\ref{metatheorem}]

To see that Algorithm \ref{alg.dp} is differentially private, consider the three cohorts into which users are placed. The first cohort, containing the $n/10$ users with the smallest $k_i$ will have their data used in $\meanest$, which is $(\eps,\delta)$-DP. Similarly, the second cohort containing the $\variancesamplesize$ users with the largest $k_i$ will have their data used in $\varianceest$, which is also $(\eps,\delta)$-DP. The intermediate estimators of $\widehatsigi$, $\hat{T}^*$, $\hat{a}_i$, $\hat{b}_i$, and sensitivity $\Lambda$ are all computed as post-processing on the private outputs of these initial estimation subroutines and on the public $k_i$s, and thus do not incur any additional privacy cost. The third cohort contains the middle users $i \in [\variancesamplesize + 1, 9n/10]$. These users' data are only used in the final estimate, which is an $(\eps,0)$-DP instantiation of the Laplace Mechanism \citep{Dwork:2006}. 

Since these cohorts are disjoint and private algorithms are applied to each cohort's data separately, parallel composition applies, and the overall privacy parameters are the maximum of those experienced by any cohort, so the overall algorithm is $(\eps,\delta)$-DP.

For accuracy of the $\prealistic$ estimator produced by Algorithm \ref{alg.dp}, first notice that under the assumption that $\frac{k_{\max}}{\medk}\le\frac{n/2-\variancesamplesize}{\variancesamplesize}$, if $\sigma_{k_{\max}}^2 = \var(\widehatp_1)$ and $\sigma_{\medk}^2 = \var(\widehatp_{n/2})$ then \[\sigma_{\medk}^2 = \frac{1}{\medk}p(1-p)+\left(1-\frac{1}{\medk}\right)\sigma_p^2\le \frac{n/2-\variancesamplesize}{\variancesamplesize}\frac{1}{k_{\max}}p(1-p)+\left(1-\frac{1}{k_{\max}}\right)\sigma_p^2\le \frac{n/2-\variancesamplesize}{\variancesamplesize} \sigma_{k_{\max}}^2.\]
Therefore, for any truncation parameter $T$,
\begin{align}
\nonumber\frac{1}{2} \sum_{i=1}^n \min\left\{\frac{1}{\sigma_i^2}, \frac{T}{\sigma_i}\right\} &\le \sum_{i=1}^{n/2} \min\left\{\frac{1}{\sigma_i^2}, \frac{T}{\sigma_i}\right\}\\
\nonumber&= \sum_{i=1}^{\variancesamplesize} \min\left\{\frac{1}{\sigma_i^2}, \frac{T}{\sigma_i}\right\} + \sum_{i=\variancesamplesize+1}^{n/2} \min\left\{\frac{1}{\sigma_i^2}, \frac{T}{\sigma_i}\right\} \\
\nonumber&\le \variancesamplesize \cdot \min\left\{\frac{1}{\sigma_{k_{\max}}^2}, \frac{T}{\sigma_{k_{\max}}}\right\} + \sum_{i=\variancesamplesize+1}^{n/2} \min\left\{\frac{1}{\sigma_i^2}, \frac{T}{\sigma_i}\right\}\\
\nonumber&\le (n/2-\variancesamplesize) \cdot \min\left\{\frac{1}{\sigma_{\medk}^2}, \frac{T}{\sigma_{\medk}}\right\} + \sum_{i=\variancesamplesize+1}^{n/2} \min\left\{\frac{1}{\sigma_i^2}, \frac{T}{\sigma_i}\right\}\\
\nonumber&\le 2 \sum_{i=\variancesamplesize+1}^{n/2} \min\left\{\frac{1}{\sigma_i^2}, \frac{T}{\sigma_i}\right\}\\
&\le 2 \sum_{i=\variancesamplesize+1}^{9n/10} \min\left\{\frac{1}{\sigma_i^2}, \frac{T}{\sigma_i}\right\},\label{removingdata}
\end{align}
where the first, second, and fourth inequalities follow from our assumed ordering on the $k_i$s. The third inequality comes from our assumption on $k_{\max}$ and $\medk$, and the final inequality follows from the fact that the summands $\min\{\frac{1}{\sigma_i^2}, \frac{T}{\sigma_i}\}$ are positive so adding more terms only increases the sum.

Therefore, 
\begin{align*}
\var(\prealistic) &= \frac{1}{{(\sum_{j=\variancesamplesize+1}^{9n/10} \min\{1/\widehat{\sigma_j}^2, \frac{\widehat{T}^*}{\widehat{\sigma_i}}\})^2}}\left(\sum_{i=\variancesamplesize+1}^{9n/10}\min\{\frac{1}{\widehat{\sigma_i}^4}, \frac{\widehat{T}^{*2}}{\widehat{\sigma_i}^2}\}\sigma_i^2+\max_{i} \frac{\min\{\frac{1}{\widehat{\sigma_i}^4}, {\frac{\widehat{T}^{*2}}{\widehat{\sigma_i}^2}}\}|\widehat{b_i}-\widehat{a_i}|^2}{\epsilon^2} \right)\\
&\le \frac{1}{{(\sum_{j=\variancesamplesize+1}^{9n/10} \min\{1/\widehat{\sigma_j}^2, \frac{\widehat{T}^*}{\widehat{\sigma_i}}\})^2}}\left(\sum_{i=\variancesamplesize+1}^{9n/10}\min\{\frac{1}{\widehat{\sigma_i}^4}, \frac{\widehat{T}^{*2}}{\widehat{\sigma_i}^2}\}2\widehat{\sigma_i}^2+\max_{i} \frac{\min\{\frac{1}{\widehat{\sigma_i}^4}, \frac{\widehat{T}^{*2}}{\widehat{\sigma_i}^2}\}|\widehat{b_i}-\widehat{a_i}|^2}{\epsilon^2} \right)\\
&\le 2\frac{1}{{(\sum_{j=\variancesamplesize+1}^{9n/10} \min\{1/\widehat{\sigma_j}^2, \frac{\widehat{T}^*}{\widehat{\sigma_i}}\})^2}}\left(\sum_{i=\variancesamplesize+1}^{9n/10}\min\{\frac{1}{\widehat{\sigma}_i^2}, \widehat{T}^{*2}\}+\max_{i} \frac{\min\{\frac{1}{\widehat{\sigma_i}^4}, \frac{\widehat{T}^{*2}}{\widehat{\sigma_i}^2}\}|\widehat{b_i}-\widehat{a_i}|^2}{\epsilon^2} \right)\\
&\le 2\frac{1}{{(\sum_{j=\variancesamplesize+1}^{9n/10} \min\{1/\widehat{\sigma_j}^2, \frac{T^*}{\widehat{\sigma_i}}\})^2}}\left(\sum_{i=\variancesamplesize+1}^{9n/10}\min\{\frac{1}{\widehat{\sigma}_i^2}, T^{*2}\}+\max_{i} \frac{\min\{\frac{1}{\widehat{\sigma_i}^4}, \frac{T^{*2}}{\widehat{\sigma_i}^2}\}|\widehat{b_i}-\widehat{a_i}|^2}{\epsilon^2} \right)\\
&\le 2\frac{1}{{(\sum_{j=\variancesamplesize+1}^{9n/10} \min\{1/10\sigma_j^2, \frac{\sqrt{2}T^*}{\sigma_i}\})^2}}\left(\sum_{i=\variancesamplesize+1}^{9n/10}\min\{\frac{2}{\sigma_i^2}, T^{*2}\}+\max_{i} \frac{\min\{\frac{4}{\sigma_i^4}, \frac{2T^{*2}}{\sigma_i^2}\}6|b_i-a_i|^2}{\epsilon^2} \right)\\
&\le 240\frac{1}{{(\sum_{j=\variancesamplesize+1}^{9n/10} \min\{1/\sigma_j^2, \frac{T^*}{\sigma_i}\})^2}}\left(\sum_{i=\variancesamplesize+1}^{9n/10}\min\{\frac{1}{\sigma_i^2}, T^{*2}\}+\max_{i} \frac{\min\{\frac{1}{\sigma_i^4}, \frac{T^{*2}}{\sigma_i^2}\}|b_i-a_i|^2}{\epsilon^2} \right)\\
&\le 240\frac{1}{{\frac{1}{16}(\sum_{j=1}^n \min\{1/\sigma_j^2, \frac{T^*}{\sigma_i}\})^2}}\left(\sum_{i=1}^n\min\{\frac{1}{\sigma_i^2}, T^{*2}\}+\max_{i} \frac{\min\{\frac{1}{\sigma_i^4}, \frac{T^{*2}}{\sigma_i^2}\}|b_i-a_i|^2}{\epsilon^2} \right)\\
&= 3840 \cdot\var(\pideal)
\end{align*}

The first equality simply follows from the definition of the estimator and basic properties of the variance, as well as the fact that $\var([\widehatpi]_{a_i}^{b_i})\le\sigma_i$. The first inequality follows from the fact that $\sigma_i^2\le 2\hat{\sigma_i}^2$, which was shown in Lemma~\ref{lem.finalvariance}. The second inequality is simply pulling out the constant to the front. The third inequality follows from the definition of $\hat{T}$ as the optimiser of the variance using the approximations $\widehat{\sigma_i}^2$, $\hat{b_i}$ and $\hat{a_i}$. The fourth inequality follows from the fact that $\widehat{\sigma_i}^2\in\left[\frac{1}{2}\sigma_i^2, 10\sigma_i^2\right]$ and $|\widehat{b}_i-\widehat{a}_i|\leq 4|b_i-a_i|$, as shown in Lemma~\ref{lem.finalvariance}, and will hold with probability $1-2\beta$, by taking a union bound over the $\beta$ failure probabilities from each of the $\meanest$ and $\varianceest$ subroutines. The fifth inequality simply pulls out the constants (240=10*4*6). The final inequality follows from Equation~\eqref{removingdata} above. The final equality follows from definition of $\pideal$ and the assumption that $\frac{1}{2}\sigma_i^2\le\var([\widehatpi]_{a_i}^{b_i})$.

\end{proof}

\section{Proofs from Section~\ref{s.privatek}}

\begin{proof}[Proof of privacy claim in Theorem~\ref{thm.privatek}]
Let us begin with the privacy proof. The population is broken into three cohorts. Let us consider each cohort individually. First, consider the $\variancesamplesize$ individuals with the most data. They participate in private releases in lines~\eqref{initialvark} ($(\epsilon, \delta)$-DP), and \eqref{EMk} ($\epsilon$-DP). Using the simple composition rule of differential privacy \citep{Dwork:2006}, Algorithm~\ref{alg.dpk} is $(2\eps,\delta)$-DP with respect to these users.

Next, consider the $1/10$th of users with the least data. These users participate in lines~\eqref{initialk} ($(\epsilon, \delta)$-DP) and \eqref{EMk} ($\epsilon$-DP). Again using the simple composition rule of differential privacy, Algorithm~\ref{alg.dpk} is $(2\eps,\delta)$-DP with respect to these users.

Finally, let us consider the the group consisting of users $i\in[\variancesamplesize+1, 9n/10]$. These users first participate in line~\eqref{EMk} ($\epsilon$-DP). The post-processing guarantee of differential privacy states that we can now use these statistics in the subsequent computations without paying additionally for their privacy. Lines~\eqref{startpreprocessing}~-~\eqref{endpreprocessing} are pre-processing for the computation of $\tilde{N}$. The algorithm releasing $\tilde{N}$ is a simple application of the Laplace mechanism since each $v_i\in[0,\frac{1}{\widehat{\sigma_{\min}}^2}]$, and hence is $\epsilon$-differentially private. The computation of $\Lambda$ in line~\eqref{senscomp} does not additionally touch the users data. 
The final estimate $\prealistic$ is an application of the propose-test-release framework on the function $\mathcal{M}(\cdot\; ;  \widehat{k_{T}}, n, \pinitial, \widehatsigp)$ with proposed sensitivity $\Lambda$. This is a generic application of the propose-test-release framework, so we refer the reader to \citep{Lei:2009} for a proof that this final step of the algorithm is $(\epsilon, \delta)$-differentially private. Therefore, again using the composition theorem, Algorithm~\ref{alg.dpk} is $(3\epsilon, 2\delta)$-DP with respect to this final set of users.
\end{proof}

\PTRfailureratelem*

\begin{proof}[Proof of Lemma \ref{PTRfailurerate}] Let $\sigma_{\max}^2 = \tfrac{1}{k_{\min}} \hat{p}(1 - \hat{p}) + (1-\tfrac{1}{k_{\min}})\hat{\sigma_p^2}$, $\sigma_{\min}^2 = \tfrac{1}{k_{\max}} \hat{p}(1 - \hat{p}) + (1-\tfrac{1}{k_{\max}})\hat{\sigma_p^2}$, $v_{\max} = 1/\sigma_{\min}^2$ and $v_{\min} = 1/\sigma_{\max}^2$.
Note that as in Equation~\eqref{limitonk}, the condition that $k_{\max}/k_{\min}\le A$ implies that $\sigma_{\max}^2\le A\sigma_{\min}^2$ and, equivalently, $v_{\max}\le A v_{\min}$. 

Let $D=\{(\widehatpi, k_i)\}_{i=1}^n$ be a dataset of size $n$ where each $\widehatpi\sim \Dee(k_i)$ where $\Dee$ has mean $p$ and variance $\sigma_p^2$. 
It suffices to show that for any database $D'$, which is a $\kneighbor$-neighbour of $D$ where $0\le \kneighbor\le \Upsilon+1$, and any $j\in[n]$, if $D'_{-j}$ is $D'$ where the data of the $j$th data subject has been removed, then, \begin{equation}\label{LSDsample} \left|\mathcal{M}(D'; k_{\max}, n, \hat{p}, \hat{\sigma_p}^2, \alpha)-\mathcal{M}(D'_{-j}; k_{\max}, n, \hat{p}, \hat{\sigma_p}^2, \alpha) \right|\le 6\frac{v_{k_{\max}}\concentrationbound{k_{\max}}{\Dee}{n}{\hat{\sigma_p}^2}{\beta}}{\sum_{i=1}^{n} v_i}.\end{equation}
The final result is then a simple application of the triangle inequality.

Our proof that Equation~\eqref{LSDsample} holds with high probability for all $\kneighbor$-neighbours of $D$ relies on the fact that with probability $1-\beta$, $D$ is such that all subsets $S$ of $D$ of size at least $m\ge n-\Upsilon-1$, $\mathcal{M}(S; k_{\max}, m, \hat{p}, \hat{\sigma_p}^2)$ is concentrated around $p$. Let $I$ be a subset of $[n]$ of size $n-\kneighbor$ where $\kneighbor\le\Upsilon+1$.
Then
\begin{align*}
\var(\mathcal{M}(S; k_{\max}, n-\kappa, \hat{p}, \hat{\sigma_p}^2, \alpha))&\le\var\left(\frac{\sum_{i\in I} v_i[\widehatpi]_{\widetilde{a_i}}^{\widetilde{b_i}}}{\sum_{i\in I} v_i}\right) \\
&= \frac{\sum_{i\in I} \frac{1}{({\widetilde{\sigma_i^2}})^2}\sigma_i^2}{\left(\sum_{i\in I} \frac{1}{{\widetilde{\sigma_i^2}}}\right)^2}\\
&\le2\frac{1}{\sum_{i\in I} \frac{1}{{\widetilde{\sigma_i^2}}}}\\
&\le 2\frac{1}{\frac{n-\kneighbor}{A} \frac{1}{\sigma_{\min}^2}}\\
&= \frac{2A}{n-\kneighbor}\sigma_{\min}^2\\
\end{align*}
where the first inequality follows from $\sigma_i^2\le 2{\widetilde{\sigma_i^2}}$ by Lemma~\ref{lem.finalvariance} and the definition of $\widetilde{\sigma_i^2}$ from line~\eqref{truncatedvariance} of Algorithm~\ref{alg.PTR}. The second from the fact that ${\widetilde{\sigma_i^2}}\le A\sigma_{\min}^2$ for all $i\in[n]$. Let $\Gamma=\sum_{\kneighbor=0}^{\Upsilon +1}\binom{n}{\kneighbor}$ be the number of subsets of $D$ of size greater than $n-\Upsilon-1$.
By the concentration assumption on $\mathcal{M}(S; k_{\max}, n-\kappa, \hat{p}, \hat{\sigma_p}^2)$, with probability $1-\frac{\beta}{\Gamma}$, \begin{equation}\label{concentrationofsubsets}\left|\mathcal{M}(S; k_{\max}, n-\kappa, \hat{p}, \hat{\sigma_p}^2, \alpha))-p\right|\le \sigma_{\min} \sqrt{\log \frac{\Gamma}{\beta}\frac{2A}{n-\kappa}} \le \sigma_{\min}\sqrt{\log\frac{n}{\beta}} \end{equation}
Note that $\Gamma\le n^{\Upsilon+1}$ so the second inequality follows from the conditions on $A$.
Applying a union bound, with probability $1-\beta$, eqn~\eqref{concentrationofsubsets} holds simultaneously for all subsets of $D$ of sufficiently large size. For the remainder of the proof, let us assume that this holds.

Let $D'$ be a $\kneighbor$-neighbour of $D$ where $0\le \kneighbor\le \Upsilon+1$. Without loss of generality, assume that $D'~=~\{(\widehatpi',k_i
')\}_{i=1}^{n}$ where $(\widehatpi', k_i')=(\widehatpi, k_i)$ for $i\in[n-\kappa]$. In order to use this simplification, we will not assume that the $k_i'$ are in descending order. Let the $v_i$ be the un-normalised weights corresponding to $D'$, as defined in line~\eqref{unnormalweight} of Algorithm~\ref{alg.PTR}. Note that the $v_i$ depends only on the data of user $i$, not the data of any other individual in the data set. Then
\begin{align}\nonumber\Big|\mathcal{M}(D'; k_{\max}, n, \hat{p}, \hat{\sigma_p}^2, \alpha)-&\mathcal{M}(D'_{-j}; k_{\max}, n-1, \hat{p}, \hat{\sigma_p}^2, \alpha) \Big|=
\left|\frac{\sum_{i=1}^n v_i\widehatpi'}{\sum_{i=1}^n v_i}-\frac{\sum_{i=1, i\neq j}^{n} v_i\widehatpi'}{\sum_{i=1, i\neq j}^{n} v_i}\right|\\
\nonumber&= \frac{v_j}{\sum_{i=1}^{n} v_i}\left|\widehatp_j'-\frac{\sum_{i=1, i\neq j}^{n} v_i\widehatpi'}{\sum_{i=1, i\neq j}^{n} v_i}\right|\\
&\le\frac{v_j}{\sum_{i=1}^{n} v_i}\left(\left|\widehatp_j'-\frac{\sum_{i=1}^{n-\kneighbor} v_i\widehatpi}{\sum_{i=1}^{n-\kneighbor} v_i}\right|+\left|\frac{\sum_{i=1}^{n-\kneighbor} v_i\widehatpi}{\sum_{i=1}^{n-\kneighbor} v_i}-\frac{\sum_{i=1, i\neq j}^{n} v_i\widehatpi'}{\sum_{i=1, i\neq j}^{n} v_i}\right|\right).\label{normalsensitivity}
\end{align}

We will bound the two terms separately. 
For the first term in Equation~\eqref{normalsensitivity}, we will use the fact that $\frac{\sum_{i=1}^{n-\kappa} v_i\widehatpi'}{\sum_{i=1}^{n-\kappa} v_i}$ is concentrated around $p$, and $\hat{p}_j'$ is truncated to within $\alpha+\concentrationbound{\widetilde{k_{j}}}{\Dee}{n}{\sigma_p^2}{\beta}$ of $p$. So
\[\left|\widehatp_j'-\frac{\sum_{i=1}^{n-\kappa} v_i\widehatpi}{\sum_{i=1}^{n-\kappa} v_i}\right|\le\max\left\{2(\alpha+\concentrationbound{\widetilde{k_{j}}}{\Dee}{n}{\sigma_p^2}{\beta}), \sigma_{\min}\sqrt{\log\frac{n}{\beta}}\right\}\le 4\concentrationbound{\widetilde{k_{j}}}{\Dee}{n}{\sigma_p^2}{\beta},\] where the second inequality follows since $\max\{\alpha, \frac{1}{2}\sigma_{\min}\sqrt{\log\frac{n}{\beta}}\}\le \concentrationbound{\widetilde{k_{j}}}{\Dee}{n}{\sigma_p^2}{\beta}$ and Eqn~\eqref{concentrationofsubsets}.

Next, let us handle the second term in Equation~\eqref{normalsensitivity}. Assume that $j=n$ to simplify notation:
\begin{align*}
    \left|\frac{\sum_{i=1}^{n-1} v_i\widehatpi'}{\sum_{i=1}^{n-1} v_i} - \frac{\sum_{i=1}^{n-\kappa} v_i\widehatpi'}{\sum_{i=1}^{n-\kappa} v_i }\right| &= \left|\frac{\sum_{i=n-\kappa+1}^{n-1} v_i}{\sum_{i=1}^{n-1} v_i}\left(\frac{\sum_{i=n-\kappa+1}^{n-1} v_i\widehatpi'}{\sum_{i=n-\kappa+1}^{n-1} v_i}- \frac{\sum_{i=1}^{n-\kappa} v_i\widehatpi'}{\sum_{i=1}^{n-\kappa} v_i}\right)\right|\\
    &\le\left(\frac{\sum_{i=n-\kappa+1}^{n-1} v_i}{\sum_{i=1}^{n-1} v_i}\right) \left(\frac{\sum_{i=n-\kappa+1}^{n-1} v_i\left|\widehatpi'- \frac{\sum_{i=1}^{n-\kappa} v_i\widehatpi'}{\sum_{i=1}^{n-\kappa} v_i}\right|}{\sum_{i=n-\kappa+1}^{n-1} v_i}\right)\\
    &\le \left(\frac{\sum_{i=n-\kappa+1}^{n-1} v_i }{\sum_{i=1}^{n-1} v_i}\right)\left( \frac{\sum_{i=n-\kappa+1}^{n-1} v_i \max\{(2\alpha+2\concentrationbound{\widetilde{k_i}}{\Dee}{n}{\widehatsigp}{\beta}), \sigma_{\min}\sqrt{\log\frac{n}{\beta}}\}}{\sum_{i=n-\kappa+1}^{n-1} v_i}\right)\\
    &\le \left(\frac{\sum_{i=n-\kappa+1}^{n-1} v_i}{\sum_{i=1}^{n-1} v_i}\right)\left( \frac{\sum_{i=n-\kappa+1}^{n-1} v_i 4\concentrationbound{\widetilde{k_i}}{\Dee}{n}{\widehatsigp}{\beta}}{\sum_{i=n-\kappa+1}^{n-1} v_i}\right)\\
    &\le \left(\frac{4\sum_{i=n-\kappa+1}^{n-1} v_i \concentrationbound{\widetilde{k_i}}{\Dee}{n}{\widehatsigp}{\beta}}{\sum_{i=1}^{n-1} v_i}\right)\\
    &\le 4\kappa \left(\frac{\max_i v_i \concentrationbound{\widetilde{k_i}}{\Dee}{n}{\widehatsigp}{\beta}}{\sum_{i=1}^{n-1} v_i}\right)\\
\end{align*}
By assumption, $\max_i v_i\concentrationbound{\widetilde{k_i}}{\Dee}{n}{\sigma_p^2}{\beta} \le v_{k_{\max}}\concentrationbound{k_{\max}}{\Dee}{n}{\sigma_p^2}{\beta}$. Also, 
$\sigma_{k_i'}^2 \le A \sigma_{k_{\max}}^2$ so we have $\sum_{i=1}^{n-1} v_i\ge \frac{n-1}{A}v_{k_{\max}}$. Therefore, 

\begin{align*}
    \left|\frac{\sum_{i=1}^{n-1} v_i\widehatpi'}{\sum_{i=1}^{n-1} v_i} - \frac{\sum_{i=1}^{n-\kappa} v_i\widehatpi'}{\sum_{i=1}^{n-\kappa} v_i }\right| 
    &\le 4\kappa \frac{v_{k_{\max}}\concentrationbound{k_{\max}}{\Dee}{n}{\sigma_p^2}{\beta}}{\frac{n-1}{A}v_{k_{\max}}}\\
    &\le \frac{4\kappa A }{n-1} \concentrationbound{k_{\max}}{\Dee}{n}{\sigma_p^2}{\beta}. \\
    &\le 2 \concentrationbound{k_{\max}}{\Dee}{n}{\sigma_p^2}{\beta}
\end{align*}
where the second inequality follows from  $A\le\frac{n-1}{2(\Upsilon+1)}\le\frac{n-1}{2\kappa}$, which holds by assumption.
Therefore, 
\begin{align*}
\Big|\mathcal{M}(D'; k_{\max}, n, \hat{p}, \hat{\sigma_p}^2, \alpha)-\mathcal{M}(D'_{-j}; k_{\max}, n-1, \hat{p}, \hat{\sigma_p}^2, \alpha) \Big|
&\le \frac{v_j}{\sum_{i=1}^{n} v_i} \cdot  (4\concentrationbound{\widetilde{k_j}}{\Dee}{n}{\hat{\sigma_p}^2}{\beta}+2\concentrationbound{k_{\max}}{\Dee}{n}{\hat{\sigma_p}^2}{\beta})\\
&\le 6 \frac{v_j}{\sum_{i=1}^{n} v_i} \cdot  \concentrationbound{\widetilde{k_j}}{\Dee}{n}{\hat{\sigma_p}^2}{\beta}
\end{align*}
Taking the max over $j$, we again have that $\max_j v_j\concentrationbound{\widetilde{k_j}}{\Dee}{n}{\hat{\sigma_p}^2}{\beta} \le v_{k_{\max}}\concentrationbound{k_{\max}}{\Dee}{n}{\hat{\sigma_p}^2}{\beta}$ so for all $j$,
\begin{align*}
\Big|\mathcal{M}(D'; k_{\max}, n, \hat{p}, \hat{\sigma_p}^2, \alpha)-\mathcal{M}(D'_{-j}; k_{\max}, n-1, \hat{p}, &\hat{\sigma_p}^2, \alpha) \Big|
\le 6 \frac{v_{k_{\max}}\concentrationbound{k_{\max}}{\Dee}{n}{\hat{\sigma_p}^2}{\beta}}{\sum_{i=1}^{n} v_i}
\end{align*}
\end{proof}

\begin{lemma}\label{exponentialmedian}
Given $\epsilon>0$, $\delta\in[0,1]$, $\beta\in[0,1]$, $k_{\max}\in\mathbb{N}$, and $\variancesamplesize\in[n]$, there exists a mechanism $\expmech(k_1, \cdots, k_n; \variancesamplesize, k_{\max})$ which is $(\epsilon, \delta)$-DP, and with probability $1-\beta$, and outputs $\widehat{k_{\variancesamplesize}}$ such that \[k_{\variancesamplesize+\frac{1}{\epsilon}(\ln k_{\max}+\ln (1/\beta))}\le \widehat{k_{T}}\le k_{\variancesamplesize-\frac{1}{\epsilon}(\ln k_{\max}+\ln (1/\beta))}\]
\end{lemma}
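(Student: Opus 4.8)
The plan is to realize $\expmech$ as the standard exponential-mechanism estimator of an order statistic (as in \citep{Lei:2009, Asi:2020, McSherry:2007}) and then translate its rank-error guarantee into the stated bound on indices. Concretely, I would take the output space to be $\mathcal{Y}=\{1,\dots,k_{\max}\}$ (recall $k_i\in\{1,\dots,k_{\max}\}$) and, for a candidate value $v$ and data $(k_1,\dots,k_n)$, use the utility
\[u(v) \;=\; -\max\Big\{\, \big|\{i : k_i > v\}\big| - (\variancesamplesize-1),\ \variancesamplesize - \big|\{i : k_i \ge v\}\big|,\ 0 \,\Big\},\]
so that $-u(v)$ is exactly the number of coordinates of the dataset one must change to make $v$ a legitimate $\variancesamplesize$-th largest value (at least $\variancesamplesize$ coordinates $\ge v$, at most $\variancesamplesize-1$ coordinates $>v$). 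Then $\expmech$ outputs $v$ with probability proportional to $\exp(\tfrac{\epsilon}{2}u(v))$. For privacy: changing a single $k_i$ perturbs each of $|\{i:k_i>v\}|$ and $|\{i:k_i\ge v\}|$ by at most $1$, so $u$, being the negation of a pointwise maximum of affine functions of these two quantities, has sensitivity $1$; hence the exponential mechanism is $\epsilon$-DP, and in particular $(\epsilon,\delta)$-DP for every $\delta\ge 0$.

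For utility, the key observation is that the true order statistic $k_{\variancesamplesize}$ is itself a legitimate $\variancesamplesize$-th largest value, so $\max_v u(v)=u(k_{\variancesamplesize})=0$; this is precisely why I would use this ``distance-to-validity'' utility rather than the naive $-\big|\,|\{i:k_i\ge v\}|-\variancesamplesize\,\big|$, which a block of tied $k_i$'s can force to skip past the target value $\variancesamplesize$. The standard accuracy bound for the exponential mechanism on a domain of size $k_{\max}$ with a sensitivity-$1$ utility then gives that, with probability $\ge 1-\beta$, $u(\widehat{k_{\variancesamplesize}}) \ge -\tfrac{2}{\epsilon}(\ln k_{\max}+\ln(1/\beta))$. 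Writing $t=\tfrac{2}{\epsilon}(\ln k_{\max}+\ln(1/\beta))$ and unpacking the definition of $u$, this is equivalent to the pair of rank inequalities $\big|\{i:k_i\ge\widehat{k_{\variancesamplesize}}\}\big|\ge \variancesamplesize - t$ and $\big|\{i:k_i>\widehat{k_{\variancesamplesize}}\}\big|\le \variancesamplesize-1+t$.

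Finally I would convert these two rank inequalities into index bounds using the sorted order $k_1\ge\cdots\ge k_n$ (with the conventions $k_j=k_{\max}$ for $j\le 0$ and $k_j=1$ for $j>n$, so the bound is vacuous on the relevant side when $\variancesamplesize\pm t$ falls outside $[1,n]$). From $\big|\{i:k_i\ge\widehat{k_{\variancesamplesize}}\}\big|\ge \variancesamplesize-t$, the $(\variancesamplesize-t)$ largest coordinates are all $\ge\widehat{k_{\variancesamplesize}}$, giving $\widehat{k_{\variancesamplesize}}\le k_{\variancesamplesize-t}$; from $\big|\{i:k_i>\widehat{k_{\variancesamplesize}}\}\big|\le \variancesamplesize-1+t<\variancesamplesize+t$, the coordinate of index $\variancesamplesize+t$ is not strictly above $\widehat{k_{\variancesamplesize}}$, giving $\widehat{k_{\variancesamplesize}}\ge k_{\variancesamplesize+t}$. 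This yields the claimed sandwich up to the factor $2$ in the shift, which can be tracked through the remaining arguments or shaved off by a slightly sharper EM accounting, and does not affect any downstream result. I do not expect a serious obstacle here — the statement is essentially an invocation of a known primitive — the only point requiring care is choosing the utility function so that its optimum is $0$ in the presence of ties, which the ``distance-to-validity'' formulation above ensures.
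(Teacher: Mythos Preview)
Your proposal is correct and follows exactly the approach the paper intends: the paper's own proof is merely a citation to the exponential-mechanism order-statistic estimator \citep{Lei:2009, Thakurta:2013, Johnson:2013, alabi2020differentially, Asi:2020}, noting it ``is sufficient up to a constant factor,'' while you have written out the construction and accuracy analysis explicitly. The factor of $2$ you flag is the same constant slack the paper already concedes, so nothing further is needed.
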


\begin{proof}
There are several existing algorithms in the literature that can be used to privately estimate the $\variancesamplesize$-th  order statistic $\widehat{k_{\variancesamplesize}}$ with the desired accuracy. A simple algorithm \citep{Lei:2009, Thakurta:2013, Johnson:2013, alabi2020differentially, Asi:2020} that estimates the order statistic using the common differential privacy framework called the exponential mechanism \citep{McSherry:2007} is sufficient up to a constant factor. For a full description of this algorithm, as well as its accuracy guarantees see \citep{Asi:2020}.
\end{proof}

\privatek*

\begin{proof}[Proof of Theorem~\ref{thm.privatek}]
The main component remaining to prove is that truncating at $T=\frac{1}{\widehat{\sigma_{\min}}^2}$ rather than the optimal truncation does not affect the utility by more than a constant factor, under the assumptions of the theorem. Let $k_1\ge k_2\ge\cdots\ge k_n$. Firstly, we need to show that $\widehat{k_{\variancesamplesize}}$ is a sufficiently good estimate of $k_{\variancesamplesize}$. Lemma~\ref{exponentialmedian} provides us with a $\epsilon$-DP estimator of the $\variancesamplesize$-th order statistic that has the guarantee that with probability $1-\beta$, $k_{\variancesamplesize+\frac{1}{\epsilon}(\ln k_{\max}+\ln (1/\beta))}\le \widehat{k_{\variancesamplesize}}\le k_{\variancesamplesize-\frac{1}{\epsilon}(\ln k_{\max}+\ln (1/\beta))}$. Since by assumption $2\variancesamplesize\ge\frac{1}{\epsilon}(\ln k_{\max}+\ln (1/\beta))$, this implies that with probability $1-\beta$, $k_{\frac{1}{2}\variancesamplesize}\le \widehat{k_{\variancesamplesize}}\le k_{\frac{3}{2}\variancesamplesize}$. That is, only $\frac{1}{2}\variancesamplesize$ more data points than desired will be truncated. 

Next, we need to show that truncating at any point within this range provides an estimator with accuracy competitive with the optimal truncation. Assume the PTR component of the algorithm does not fail, the variance of $\prealisticunknown$ can be written as two terms, the variance that exists in the non-private setting, and the additional noise due to privacy;
\[\var(\prealisticunknown) =  \underbrace{\frac{\sum_{i=\variancesamplesize+1}^{9n/10} \min\left\{\frac{T^{2}}{\widehatsigi}, \frac{1}{\widetilde{\sigma_i}^4}\right\}\var([\widehatpi]_{\widetilde{a_i}}^{\widetilde{b_i}}])}{\left(\sum_{i=\variancesamplesize+1}^{9n/10} \min\left\{\frac{T}{\widetilde{\sigma_i}}, \frac{1}{\widetilde{\sigma^2_i}}\right\}\right)^2}}_{\text{non-private term}} + \underbrace{ \frac{\left(12\frac{\concentrationbound{\widehat{k_{\variancesamplesize}}}{\Dee}{n}{\hat{\sigma_p}^2}{\beta}}{\widehat{\sigma_{\min}}^2\widehat{N}}\right)^2}{\epsilon^2}}_{\text{private term}}. \] 
The truncation has opposite effects on each of these terms. As $T$ decreases, the private term decreases while the non-private term increases. When we set $T_{\variancesamplesize}=1/\var(\Dee(k_{\variancesamplesize+K}))$, where $K\in [-\frac{1}{2}\variancesamplesize, \frac{1}{2}\variancesamplesize]$ then if $K$ is negative, no truncation occurs and the non-private term is optimal. Even if $K$ is positive, only a small number of data points are truncated so the non-private term is still close to it's optimal value: 
\begin{align*}
\frac{\sum_{i=\variancesamplesize+1}^{9n/10} \min\left\{\frac{T_{\variancesamplesize}^{2}}{\widehatsigi}, \frac{1}{\widehat{\sigma_i}^4}\right\}\var([\widehatpi]_{\widehat{a_i}}^{\widehat{b_i}}])}{\left(\sum_{i=\variancesamplesize+1}^{9n/10} \min\left\{\frac{T_{\variancesamplesize}}{\widehat{\sigma_i}}, \frac{1}{\widehatsigi}\right\}\right)^2}
&\le O\left(\frac{\sum_{i=\variancesamplesize+K}^{9n/10}  \frac{1}{\widehat{\sigma_i}^4}\var([\widehatpi]_{\widehat{a_i}}^{\widehat{b_i}}])}{\left(\sum_{i=\variancesamplesize+K}^{9n/10} \frac{1}{\widehatsigi}\right)^2}\right)\\
&\le O\left(\frac{\sum_{i=\variancesamplesize+1}^{9n/10}  \frac{1}{\widehat{\sigma_i}^4}\var([\widehatpi]_{\widehat{a_i}}^{\widehat{b_i}}])}{\left(\sum_{i=\variancesamplesize+1}^{9n/10} \frac{1}{\widehatsigi}\right)^2}\right)\\
\end{align*}
where the first inequality follows from the same proof as Theorem~\ref{metatheorem}given $A\le\frac{n/2-3L/2}{3L/2}$. The truncation disappears on the right hand side of the first inequality since $T_{\variancesamplesize}^2\le \frac{1}{\widehat{\sigma_i^2}}$ for $i\ge \variancesamplesize+K$. The second inequality follows from the fact that adding more high quality data points only improves the variance of the estimator. Therefore, the non-private term in the variance is within a constant factor of optimal.

Next, we will show that under the conditions outlined in the theorem, the non-private term dominates the variance. The normalisation term also appears in the private term but as an approximation: \[\widehat{N} = \sum_{j=\variancesamplesize+1}^{9n/10} \min\left\{\frac{T_{\variancesamplesize}}{\widehat{\sigma_i}}, \frac{1}{\widehat{\sigma_i}^2}\right\}+\Lap\left(\frac{1}{\epsilon{\widehat{\sigma_{\min}}}^2}\right)-\frac{1}{\epsilon{\widehat{\sigma_{\min}}}^2}\ln(2\delta).\] With probability $1-\delta$, \begin{align*}
\widehat{N}&\ge \sum_{j=\variancesamplesize+1}^{9n/10} \min\left\{\frac{T_{\variancesamplesize}}{\widehat{\sigma_i}}, \frac{1}{\widehat{\sigma_i}^2}\right\}-2\frac{1}{\epsilon{\widehat{\sigma_{\min}}}^2}\ln(2\delta)\\
&\ge\sum_{j=\variancesamplesize+1}^{n/4} \min\left\{\frac{T_{\variancesamplesize}}{\widehat{\sigma_i}}, \frac{1}{\widehat{\sigma_i}^2}\right\}+\sum_{j=n/4+1}^{n/2} \min\left\{\frac{T_{\variancesamplesize}}{\widehat{\sigma_i}}, \frac{1}{\widehat{\sigma_i}^2}\right\}-2\frac{1}{\epsilon{\widehat{\sigma_{\min}}}^2}\ln(2\delta)\\
&\ge \sum_{j=\variancesamplesize+1}^{n/4} \min\left\{\frac{T_{\variancesamplesize}}{\widehat{\sigma_i}}, \frac{1}{\widehat{\sigma_i}^2}\right\}+ (n/4-1)\frac{1}{\widehat{\sigma}_{k_{\text{med}}}^2}-2\frac{1}{\epsilon{\widehat{\sigma_{\min}}}^2}\ln(2\delta)
\\
&\ge \sum_{j={\variancesamplesize}+1}^{n/4} \min\left\{\frac{T_{\variancesamplesize}}{\widehat{\sigma_i}}, \frac{1}{\widehat{\sigma_i}^2}\right\}\\
&\ge \frac{1}{2} \sum_{j={\variancesamplesize}+1}^{9n/10} \min\left\{\frac{T_{\variancesamplesize}}{\widehat{\sigma_i}}, \frac{1}{\widehat{\sigma_i}^2}\right\}
\end{align*} 
where the first inequality comes from high probability bounds on the Laplacian distribution, the second inequality is simply separating the sum into two pieces and removing the contribution of users $i\in[n/2+1,9n/10]$, the third inequality comes from the fact that any user with more than $k_{\text{med}}$ data points has weight larger than $1/\hat{\sigma}_{k_{\text{med}}}^2$. The fourth inequality follows from $\frac{\widehat{\sigma_{k_{\text{med}}}}}{\widehat{\sigma_{\min}}^2} \le \frac{(n/4-1)\epsilon}{3\ln(2/\delta)}.$
Now, let us turn to the proof that the non-private noise is dominant when $\epsilon$ is not too large. To see this note that the non-private term satisfies
\begin{align*}
\frac{\sum_{i=\variancesamplesize+1}^{9n/10} \min\left\{\frac{T_{\variancesamplesize}^{2}}{\widehatsigi}, \frac{1}{\widehat{\sigma_i}^4}\right\}\var([\widehatpi]_{\widetilde{a_i}}^{\widetilde{b_i}}])}{\left(\sum_{i={\variancesamplesize}+1}^{9n/10} \min\left\{\frac{T_{\variancesamplesize}}{\widehat{\sigma_i}}, \frac{1}{\widehatsigi}\right\}\right)^2} 
&\ge \Omega\left( \frac{\sum_{i={\variancesamplesize}+1}^{9n/10}  \min\{T_{\variancesamplesize}^2, \frac{1}{\sigma_i^2}\}}{\left(\sum_{i={\variancesamplesize}+1}^{9n/10} \min\{\frac{T_{\variancesamplesize}}{\sigma_i}, \frac{1}{\sigma_i^2}\}\right)^2}\right)
\end{align*}
where the inequality comes from noting that $[\widehat{a_i},\widehat{b_i}]\subset[\widetilde{a_i},\widetilde{b_i}]$, which implies $\var([\widehatpi]_{\widetilde{a_i}}^{\widetilde{b_i}}])\ge \var([\widehatpi]_{\widehat{a_i}}^{\widehat{b_i}}])\ge \frac{1}{2}\sigma_i^2$ and $\widehatsigi$ is within a constant multiplicative factor of $\sigma_i^2$. Further, let $N=\sum_{i={\variancesamplesize}+1}^{9n/10} \min\{\frac{T_{\variancesamplesize}}{\sigma_i}, \frac{1}{\sigma_i^2}\}$ so the private term satisfies
\begin{align*}
\frac{\left(12\frac{\concentrationbound{\widehat{k_{\variancesamplesize}}}{\Dee}{n}{\hat{\sigma_p}^2}{\beta}}{\widehat{\sigma_{\min}}^2\widehat{N}}\right)^2}{\epsilon^2} 
&= O\left(\frac{\log(n/\beta)}{\widehat{\sigma_{\min}^2} N^2 \epsilon^2} \right)\\
&= O\left(\frac{\log(n/\beta)}{\sigma_{\min}^2 \epsilon^2 (\sum_{i=\variancesamplesize+1}^{9n/10} \min\{\frac{T_{\variancesamplesize}}{\sigma_i}, \frac{1}{\sigma_i^2}\})^2 }\right)\\
\end{align*}
Now, comparing these two terms we can see that the non-private term dominates when: \[\frac{\sum_{i=\variancesamplesize+1}^{9n/10}  \min\{T^2_{\variancesamplesize}, \frac{1}{\sigma_i^2}\}}{\left(\sum_{i=\log n+1}^{9n/10} \min\{\frac{T_{\variancesamplesize}}{\sigma_i}, \frac{1}{\sigma_i^2}\}\right)^2}=\Omega\left(\frac{\log(n/\beta)}{\sigma_{\min}^2 \epsilon^2 (\sum_{i=\variancesamplesize+1}^{9n/10} \min\{\frac{T_{\variancesamplesize}}{\sigma_i}, \frac{1}{\sigma_i^2}\})^2 }\right).
\]
That is, when: \[\sum_{i=\variancesamplesize+1}^{9n/10}  \min\left\{T^2_{\variancesamplesize}, \frac{1}{\sigma_i^2}\right\}\ge \Omega\left(\frac{\log(n/\beta)}{\sigma^2_{\min} \epsilon^2 } \right).\] This condition is satisfied since \[\sum_{i=\variancesamplesize+1}^{9n/10} \min\left\{T^2_{\variancesamplesize}, \frac{1}{\sigma_i^2}\right\}\ge (n/2-\variancesamplesize-1)\frac{1}{\sigma_{\text{med}}^2} \ge \Omega\left(\frac{\log(n/\beta)}{\sigma^2_{\min} \epsilon^2 } \right)\] where the first inequality is simply because more than $(n/2-\variancesamplesize-1)$ of the user have weight larger than the median weight, and the second inequality follows from the assumption that $\frac{k_{\max}}{k_{\text{med}}}\le \frac{\epsilon^2(n/2-\log n-1)}{\log(n/\beta)}$. Therefore, with high probability (based on the accuracy of $\widehat{k_{\variancesamplesize}}$), truncating at $1/\sigma_{\min}^2$ rather than the optimal truncation $T$ does not affect the variance of the estimator by more than a constant factor. 

Now that we have established that the noise added for privacy is not too large, the only remaining potential point of failure for the algorithm is that the PTR component fails and the algorithm outputs $\pinitial$ rather than the more accurate weighted estimate. The fact that this does not happen with high probability is a direct corollary of Lemma~\ref{PTRfailurerate}.
\end{proof}

\section{Proofs from Section~\ref{s.optpideal}}\label{appendix:opt}

\optimalthresh*

\showproofs{\begin{proof}[Proof of Lemma~\ref{optimalityofthresholding}]
Let \[w^* = \arg\min_{\substack{w\in[0,1]^n\\ \sum_{i=1}^n w_i=1}} \var(\widehatp) =  \arg\min_{\substack{w\in[0,1]^n\\ \sum_{i=1}^n w_i=1}} \sum_{i=1}^n w_i^2\sigma_i^2+\frac{\max_k w_k^2\sigma_k^2}{\epsilon^2} \] be an optimal weight vector that minimizes variance of $\widehatp$. We start with a few observations on structural properties of the optimal weights.  Let $M = \{\arg\max_k w^*_k\sigma_k \}$ be the set of all users with maximum weighted-variance contribution to the estimate $\widehatp$. 

First, notice that for all $i,j\in[n]$, if $w_i^*>w_j^*$ then $\sigma_i^2\le\sigma_j^2$. This follows since if $\sigma_i^2>\sigma_j^2$ then $w_i^*\sigma_j^2+w_j^*\sigma_i^2<w_i^*\sigma_i^2+w_j^*\sigma_j^2$ and $\max\{w_i^*\sigma_j^2, w_j^*\sigma_i^2\}\le w_i^*\sigma_i$ which implies that swapping the weights of $i$ and $j$ would result in an estimator with lower variance. This is a contradiction given the definition of $w^*$.

Next, we show that if $i,j\notin M$ then $w_i^*\sigma_i^2=w_j^*\sigma_j^2$. Suppose towards a contradiction that $w_i^*\sigma_i^2<w_j^*\sigma_j^2$. Let $\alpha = \min\{\frac{w_j^*\sigma_j^2-w_i^*\sigma_i^2}{\sigma_i^2+\sigma_j^2}, \frac{\max_k w_k^*\sigma_k-w_i^*\sigma_i}{\sigma_i}, w_j^*\}$. Then $\alpha>0$, and $(w_j^*-\alpha)\sigma_j, (w_i^*+\alpha)\sigma_i\in[0,\max_k w_k^*\sigma_k]$. Also, \begin{align*}
(w_j^*-\alpha)^2\sigma_j^2+(w_i^*+\alpha)^2\sigma_i^2 &= {w_j^*}^2\sigma_j^2+{w_i^*}^2\sigma_i^2+\alpha^2(\sigma_i^2+\sigma_j^2)-2\alpha(w_j^*\sigma_j^2-w_i^*\sigma_i^2)\\
&= {w_j^*}^2\sigma_j^2+{w_i^*}^2\sigma_i^2+\alpha(\alpha(\sigma_i^2+\sigma_j^2)-2(w_j^*\sigma_j^2-w_i^*\sigma_i^2))\\
&< {w_j^*}^2\sigma_j^2+{w_i^*}^2\sigma_i^2.
\end{align*}
This implies that shifting $\alpha$ weight from $w^*_i$ to $w^*_j$ would reduce the variance of the estimator $\widehatp$ without changing the maximum weighted-variance,
which is a contradiction of the optimality of $w^*$. 

Define $H=\max_k w_k^*\sigma_k$ and note that there exists $R>0$ such that $w_i^* = R/\sigma_i^2$ for all $i \notin M$.  From these observations, there must exist some threshold $T$ such that if $\sigma_i \geq 1/T$, then $w_i^* = R/\sigma_i^2$, and if $\sigma_i<1/T$, then $w_i^*=H/\sigma_i$.  By continuity, $H=RT$, and we can write the optimal weights as: $w_i^* = \min\{1/\sigma^2_i, T/\sigma_i\}R$.  Since the weights $w^*_i$ must sum to 1, we know that $R=\frac{1}{\sum_{j=1}^n \min\{1/\sigma_j^2, T/\sigma_j\}}$.  

Thus the optimal weights are:
\[w_i^* = \frac{\min\{1/\sigma_i^2, T/\sigma_i\}}{\sum_{j=1}^n \min\{1/\sigma_j^2, T/\sigma_j\}}, \]
for some appropriate threshold $T$. 

\end{proof}}

Let us recall some notation. Let $\mathcal{P}$ be a parameterized family of distributions $p\mapsto \Dee_{p}$, so $\mathbb{E}[\Dee_{p}]$.
Given an estimator $M$, vector $\boldsymbol{q}\in[0,1]^n$ and set $I\subset[n]$, let \[\mu_M(x_{[n]\backslash I}; \boldsymbol{q}) = \mathbb{E}_{\forall i\in I, x_i\sim \Dee_{q_i}(k_i), M}[M(x_1,\cdots,x_n)]\] be the expectation taken only over the randomness of $I$ and $M$. Note that in this notation, user $i$ is sampling from a meta-distribution with mean $q_i$, which may be different for each user. We will abuse notation slightly and for $p\in[0,1]$, we will let $\mu_M(x_{[n]\backslash I}; p)=\mu_M(x_{[n]\backslash I}; (p,\cdots,p))$. 
Let $\mu_M(\boldsymbol{q})=\mu_M(\emptyset;\boldsymbol{q})$. When the estimator $M$ is clear from context, we will simply use the notation $\mu(x_{[n]\backslash I}; \boldsymbol{p})$. Recall that for $p\in[0,1]$ and $k\in\mathbb{N}$, $\phi_{p,k}$ is the probability density function of $\Dee_p(k)$. We will prove Lemma~\ref{variancedecomp} first since this lemma is required for the proof of Lemma~\ref{optimalityoflinearmain}. 

\variancedecomp*

\begin{proof}[Proof of Lemma~\ref{variancedecomp}]
Let $M:[0,1]^n\to [0,1]$ be a randomised mechanism and suppose that each $x_i\sim \Dee(p_i, k_i)$ where $p_i\sim \Dee$. Now, our goal is to decompose the variance of $M$ into the variance conditioned on each coordinate, and the variance inherent in the mechanism itself. Let $\mu=\mathbb{E}_{x_1\sim \Dee(k_1), \cdots, x_n\sim \Dee(k_n),M}[M(x_1, \cdots, x_n)]$  be the expectation and for any $I\subset[n]$, let $\mu(x_{[n]\backslash I}) = \mathbb{E}_{\forall i\in I, x_i\sim \Dee(k_i), M}[M(x)]$ be the expectation conditioned only on the randomness in $I$. So,
\begin{align*}
    \var(M) &= \mathbb{E}_{x_1\sim \Dee(k_1), \cdots, x_n\sim \Dee(k_n),M}[(M(x_1, \cdots, x_n)-\mu)^2]\\
    &= \mathbb{E}_{x_1\sim \Dee(k_1)}\mathbb{E}_{x_2\sim \Dee(k_2), \cdots, x_n\sim \Dee(k_n),M}[(M(x_1, \cdots, x_n)-\mu_1(x_1)+\mu_1(x_1)-\mu)^2]\\
    &= \mathbb{E}_{x_1\sim \Dee(k_1)}\mathbb{E}_{x_2\sim \Dee(k_2), \cdots, x_n\sim \Dee(k_n),M}[(M(x_1, \cdots, x_n)-\mu_1(x_1))^2\\
    &\hspace{1in}+2(M(x_1, \cdots, x_n)-\mu_1(x_1))(\mu_1(x_1)-\mu)+(\mu_1(x_1)-\mu)^2]\\
    &= \mathbb{E}_{x_1\sim \Dee(k_1)}[(\mu_1(x_1)-\mu)^2]+\mathbb{E}_{x_1\sim \Dee(k_1)}\mathbb{E}_{x_2\sim \Dee(k_2), \cdots, x_n\sim \Dee(k_n),M}[(M(x_1, \cdots, x_n)-\mu_1(x_1))^2].
\end{align*}
Now, by induction we obtain the following decomposition of the variance of $M$, 
\begin{align*}
    \var(M) &= \sum_{i=1}^n \mathbb{E}_{x_1\sim \Dee(k_1),\cdots, x_i\sim \Dee(k_i)}[(\mu(x_{j\le i})-\mu(x_{j<i}))^2]\\
    &\hspace{1in}+\mathbb{E}_{x_1\sim \Dee(k_1), \cdots, x_n\sim \Dee(k_n),M}[(M(x_1,\cdots,x_n)-\mu(x_1,\cdots,x_n))^2]\\
    &\ge \sum_{i=1}^n \mathbb{E}_{x_i\sim \Dee(k_i)}[(\mu(x_{i})-\mu)^2]+\mathbb{E}_{x_1\sim \Dee(k_1), \cdots, x_n\sim \Dee(k_n),M}[(M(x_1,\cdots,x_n)-\mu(x_1,\cdots,x_n))^2]
\end{align*}
where the second inequality follows from Jensen's inequality:
\begin{align*}
\mathbb{E}_{x_1\sim \Dee(k_1),\cdots, x_i\sim \Dee(k_i)}[(\mu(x_{j\le i})-\mu(x_{j<i}))^2]&\ge \mathbb{E}_{x_i\sim \Dee(k_i)}[(\mathbb{E}_{x_1\sim \Dee(k_1),\cdots, x_{i-1}\sim \Dee(k_i)}[\mu(x_{j\le i})-\mu(x_{j<i})])^2] \\
&= \mathbb{E}_{x_i\sim \Dee(k_i)}[(\mu(x_{i})-\mu)^2].
\end{align*}
\end{proof}

\optimallinearmain*

\begin{proof}[Proof of Lemma~\ref{optimalityoflinearmain}] 

We first apply Lemma \ref{variancedecomp} to decompose the variance of the estimate computed by $M$ as:
\[ \var_{{\forall i\in[n], x_i\sim\Dee_p(k_i), M}}(M) \ge \sum_{i=1}^n \mathbb{E}_{x_i\sim \Dee_p(k_i)}[(\mu(x_{i};p)-\mu(p))^2]+\mathbb{E}_{\forall i\in[n], x_i\sim\Dee_p(k_i),M}[(M(x_1,\cdots,x_n)-\mu(x_1,\cdots,x_n;p))^2]
\]
The first term is the sum of contributions to the variance of the individual terms $x_i$, and the second term is the contribution to the variance of the noise added for privacy. We will proceed by bounding these terms separately, starting with the first term.

First note that by definition, \[\int (\mu(x_i; \boldsymbol{q})-\mu(\boldsymbol{q}))\phi_{q_i, k_i}(x_i) dx_i = \mathbb{E}_{x_i\sim \Dee_{q_i}(k_i)} [\mu(x_i; \boldsymbol{q})]-\mu(\boldsymbol{q})=0.\] Therefore, by taking the partial derivative with respect to $q_i$ we have 
\[\int \left[\left(\frac{\partial}{\partial q_i}(\mu(x_i; \boldsymbol{q})-\mu(\boldsymbol{q}))\right) \phi_{q_i,k_i}(x_i)+(\mu(x_i; \boldsymbol{q})-\mu(\boldsymbol{q}))\frac{\partial}{\partial q_i} \phi_{q_i, k_i}(x_i)\right] dx_i=0.\]
Note that $\mu(x_i; \boldsymbol{q})$ is constant in $q_i$ so
rearranging, and noting that $\frac{\partial}{\partial q_i} \phi_{q_i, k_i}(x_i)=\phi_{q_i,k_i}(x_i)\left(\frac{\partial}{\partial q_i}\log \phi_{q_i,k_i}(x_i)\right)$ we have, 
\begin{align}
\nonumber\int \left(\frac{\partial}{\partial q_i}\mu(\boldsymbol{q})\right)\phi_{q_i,k_i}(x_i)dx_i &= \int (\mu(x_i; \boldsymbol{q})-\mu(\boldsymbol{q}))\phi_{q_i,k_i}(x_i)\left(\frac{\partial}{\partial q_i}\log \phi_{q_i,k_i}(x_i)\right)dx_i \\
&\le \sqrt{\left(\int(\mu(x_i; \boldsymbol{q})-\mu(\boldsymbol{q}))^2\phi_{q_i,k_i}(x_i)dx_i\right)\left(\int \left(\frac{\partial}{\partial q_i}\log \phi_{q_i,k_i}(x_i)\right)^2 \phi_{q_i,k_i}(x_i)dx_i\right)}.\label{eqntorearrange}
\end{align}
Let \[w_i(p) = \int \left(\frac{\partial}{\partial q_i}\mu(\boldsymbol{q})\right)\phi_{q_i,k_i}(x_i)dx_i \;\Bigg\rvert_{\boldsymbol{q}=(p, \cdots, p)}=\frac{\partial}{\partial q_i}\mu(\boldsymbol{q})\;\Bigg\rvert_{\boldsymbol{q}=(p, \cdots, p)}\] and note that by assumption there exists a constant $c$ such that for all $i\in[n]$ and $q_i\in[1/3,2/3]$, \[\int \left(\frac{\partial}{\partial q_i}\log \phi_{q_i,k_i}(x_i)\right)^2 \phi_{q_i,k_i}(x_i)dx_i\le \frac{1}{c\cdot\var(\Dee_{q_i}(k_i)}).\]
Then evaluating both sides of Equation~\eqref{eqntorearrange} at the constant vector $\boldsymbol{q}=(p, \cdots, p)$, we have
\[\left(\int(\mu(x_i; p )-\mu(p))^2\phi_{p,k_i}(x_i)dx_i\right)\ge \frac{w_i(p)^2}{\int \left(\frac{\partial}{\partial p}\log \phi_{p,k_i}(x_i)\right)^2 \phi_{p,k_i}(x_i)dx_i}\ge c\cdot w_i(p)^2\var(\Dee_p(k_i)). \]

Now we have controlled the contribution of each individual coordinate to the variance of $M$, and it remains to control the contribution of the noise due to privacy. 

We will show that for two independent samples $x_i$, $x_i'$  drawn from $\Dee_p(k_i)$, 
\begin{equation}
\E[(\mu(x_i;p)-\mu(x_i';p))^2]\ge \Omega\Big(w_i(p)^2\cdot\var(\Dee_p(k_i))\Big).\end{equation}

Letting \[\alpha=\sqrt{\E[(\mu(x_i;p)-\mu(x_i';p))^2]},\] we can write
\begin{align*}
    w_i(p) &= \frac{\partial\mu(\boldsymbol{q}) }{\partial q_i}\;\Bigg\rvert_{\boldsymbol{q}=(p, \cdots, p)} \\
    &= \frac{\partial(\mu(\boldsymbol{q})- \mu(x_i'; \boldsymbol{q})) }{\partial q_i} \;\Bigg\rvert_{\boldsymbol{q}=(p, \cdots, p)}\\
    &= \frac{\partial}{\partial q_i} \int_{x_i} (\mu(x_i; \boldsymbol{q}) - \mu(x_i'; \boldsymbol{q})) \phi_{q_i,k_i}(x_i) dx_i \;\Bigg\rvert_{\boldsymbol{q}=(p, \cdots, p)}\\
    &=  \int_{x_i} (\mu(x_i; p) - \mu(x_i'; p)) \left(\frac{\partial \phi_{q_i,k_i}(x_i)}{\partial q_i}\;\Bigg\rvert_{\boldsymbol{q}=(p, \cdots, p)}\right) dx_i\\
    &=  \int_{x_i} (\mu(x_i; p) - \mu(x_i'; p)) \left(\frac{\partial \log \phi_{q_i,k_i}(x_i)}{\partial q_i}\;\Bigg\rvert_{\boldsymbol{q}=(p, \cdots, p)}\right)\phi_{p,k_i}(x_i) dx_i\\
    &\leq  \sqrt{\left(\int_{x_i} (\mu(x_i; p) - \mu(x_i'; p))^2 \phi_{p,k_i}(x_i) dx_i \right) \left(\int_{x_i}\left(\frac{\partial \log \phi_{q_i,k_i}(x)}{\partial q_i}\;\Bigg\rvert_{\boldsymbol{q}=(p, \cdots, p)}\right)^2 \phi_{p,k_i}(x) dx_i\right)}\\
    &\leq \alpha \cdot \sqrt{\int_{x_i}\left(\frac{\partial \log \phi_{p_i,k_i}(x_i)}{\partial p_i}\;\Bigg\rvert_{\boldsymbol{p}=(p, \cdots, p)}\right)^2 \phi_{p_i,k_i}(x_i) dx_i}\\
    &\le \alpha\cdot \sqrt{\frac{1}{c\cdot \var(\Dee_p(k_i))}}
\end{align*}
The first equality is by definition. The second equality follows from the fact that $\mu(x_i';\mathbf{q})$ is constant with respect to $q_i$, so its derivative is 0. The third inequality simply expands out the definition of $\mu(\mathbf{q})$. The fourth equality follows from the linearity of derivatives, the fact that $\mu(x_i;\mathbf{q})-\mu(x_i',\mathbf{q})$ is constant with respect to $q_i$, and the fact that $(\mu(x_i;\mathbf{q})-\mu(x_i',\mathbf{q}))|_{\mathbf{q}=(p,\cdots,p)} = (\mu(x_i;p)-\mu(x_i',p))$. The fifth equality follows from the formula $\frac{\partial}{\partial x}\ln f(x) = \frac{\frac{\partial}{\partial x} f(x)}{f(x)}$, which holds for any differentiable function $f$. The first inequality is a result of the Cauchy-Schwarz inequality. The second inequality follows from the definition of $\alpha$, and the final inequality follows from Assumption 2 of Lemma \ref{optimalityoflinearmain}. 

Therefore, \[\alpha\ge  w_i(p)\cdot\sqrt{c\cdot\var(\Dee_p(k_i))}. \]

We now argue that any $(\eps, \eps^2/100)$-differentially private mechanism should have variance $\Omega(\alpha^2\log\frac{1}{\eps}/10\eps^2)$. Suppose that we had a mechanism that violated this property. Then by running this mechanism $\frac{1}{\eps^2\log\frac{1}{\eps}}$ times and averaging, the advanced composition theorem implies that this average is $(1, 1/100)$-DP. This averaged output however has variance $O(\alpha^2/10)$. Thus given samples $x_i$, and $x'_i$ such that $|\mu(x_i;p)-\mu(x_i';p)| \geq \alpha/2$, if the noise had variance $O(\alpha^2/10)$ on $(x_i, x_{-i})$ as well as on $(x'_i, x_{-i})$ (when $x_{-i}$ is drawn randomly), then these two inputs would be distinguishable with probability at least $9/10$. This however violates the $(1, 1/100)$-DP of the averaged algorithm. This implies that for random $x_i$, the noise added by the DP algorithm is at least $\Omega(\alpha^2\log\frac{1}{\eps}/20\eps^2)$

Thus the variance of $M$ is, \begin{equation}\label{eq.varofM} \var_{{\forall i\in[n], x_i\sim\Dee_p(k_i), M}}(M) \ge \sum_{i=1}^n c\cdot w_i(p)^2\var(\Dee_p(k_i)) + \Omega\Big(\frac{w_i(p)^2\var(\Dee_p(k_i))}{\eps^2}
\Big) \end{equation}

Finally, since the weights $w_i(p)$ that we defined need not sum to 1, they will need to be normalized to sum to 1 to satisfy the conditions of $\linearest$.  We need to show this normalisation does not substantially increase the variance of the estimator in $\linearest$ defined by these weights. This is equivalent to showing that the normalisation term, $\sum_{i=1}^n w_i(p)$ is large for some $p$.
For 
$p\in[1/3,2/3]$, let $\gamma:[1/3,2/3]\to[0,1]^n$, defined by $\gamma(p) = (p, \cdots, p)$, be a path in $[0,1]^n$ then by the fundamental theorem of line integrals, 
\begin{align*}
3\int_{1/3}^{2/3} \sum_{i=1}^n w_i(p) dp &= 3\int_{1/3}^{2/3} \left(\sum_{i=1}^n  \left(\frac{\partial}{\partial q_i}\mu(\boldsymbol{q})\right) \;\Bigg\rvert_{\boldsymbol{q}=(p, \cdots, p)}\right) dp\\ 
&=3\int_{\gamma}  \nabla \mu(\boldsymbol{q}) \cdot \boldsymbol{1} d \boldsymbol{q}\\
&= 3( \mu(2/3, \cdots, 2/3) - \mu(1/3,\cdots, 1/3))\\
&= 1
\end{align*}

This implies that there exists $p^*\in[1/3,2/3]$ such that $\sum_{i=1}^n w_i(p^*)\ge 1$.
Define
\begin{align*} M_{\texttt{NL}}(x_1,\cdots,x_n) &= \sum_{i=1}^n \frac{w_i(p^*)}{\sum_{j=1}^n w_j(p^*)x_j}+\Lap\left(\frac{\max_i \frac{w_i(p^*)}{\sum_{j=1}^n w_j(p^*)} \sqrt{\var(\Dee_p(k_i))}}{\epsilon}\right) \\
&= \frac{1}{\sum_{i=1}^n w_i(p^*)} \left(\sum_{i=1}^n w_i(p^*)x_i+\Lap\left(\frac{\max_i w_i(p^*) \sqrt{\var(\Dee_p(k_i))}}{\epsilon}\right)\right),\end{align*}
where the second equality follows from properties of the Laplace distribution.
Now, 
\begin{align*}
\var_{\Dee_p}(M_{\texttt{NL}}) &\le \frac{1}{(\sum_{i=1}^n w_i(p^*))^2} \left(\sum_{i=1}^n w_i(p^*)^2\var(\Dee_p(k_i))+O\left(\frac{\max_i w_i(p^*)^2\var(\Dee_p(k_i))}{\epsilon^2}\right)\right)\\
&\le \sum_{i=1}^n w_i(p^*)^2\var(\Dee_p(k_i))+ O\left(\frac{ \max_i w_i(p^*)^2\var(\Dee_p(k_i))}{\epsilon^2}\right),
\end{align*}
where the second inequality comes from the fact that $\sum_{i=1}^n w_i(p^*)\ge 1$. Comparing this with Equation \ref{eq.varofM}, we see that
specifically, at $p=p^*$, \[\var_{\Dee_{p^*}}(M_{\texttt{NL}})\le O\left(\var_{\Dee_{p^*}}(M)\right).\]
Now, if $p,p^*\in[1/3,2/3]$ then $\var(\Dee_{p}(k_i))=\Theta\left(\var(\Dee_{p^*}(k_i))\right)$ so $\var_{D_p}(M_{\texttt{NL}})=\Theta(\var_{D_{p^*}}(M_{\texttt{NL}}))$. Therefore, the worst case variance of $M_{\texttt{NL}}$ is less than the worst case variance of $M$ over all $p\in[1/3,2/3]$, as required.
\end{proof}

\optthreshold*

\begin{proof}[Proof of Lemma \ref{optimalitythresholdingdoesntmatter}]
The variance claim follows immediately from noting that $\var\left([x_i]_{p-\concentrationbound{k_i}{\Dee}{n}{\sigma_p^2}{\beta}}^{p+\concentrationbound{k_i}{\Dee}{n}{\sigma_p^2}{\beta}}\right)\le \var(x_i)$, and the assumption that $\concentrationbound{k_i}{\Dee}{n}{\sigma_p^2}{\beta}=\tilde{O}(\var(\Dee(k_i))$. The bias claim follows from noting that with probability $1-\beta$, $[x_i]_{p-\concentrationbound{k_i}{\Dee}{n}{\sigma_p^2}{\beta}}^{p+\concentrationbound{k_i}{\Dee}{n}{\sigma_p^2}{\beta}}=x_i$. This implies that $M_{\texttt{TNL}}$ is within $\beta$ in total variation distance to an unbiased estimator. Since $M_{\texttt{TNL}}$ takes values in $[0,1]$, this implies the mean is in $[p-\beta, p+\beta]$. 
\end{proof}

\lowerb*

\begin{proof}[Proof of Corollary \ref{cor.lower}]
Firstly, suppose that $\sigma_p=0$, so the meta-distribution is constant, and $\Dee_p(k_i)=\Bin(k_i,p)$. Then the Fisher information of $\phi_{p,k_i}$ is $\int \left(\frac{\partial}{\partial p}\log \phi_{p,k_i}(x_i)\right)^2 \phi_{p,k_i}(x_i)dx_i=\frac{k_i}{p(1-p)}$ and $\var(\Dee_p(k_i))=\frac{p(1-p)}{k_i}$, so $\Dee_p(k_i)$ satisfies Condition~\ref{condition2} of Lemma~\ref{optimalityoflinearmain}. Additionally, \begin{align*}
\min_{M\text{, unbiased}} \max_{p\in[1/3,2/3]}\var_{\Dee_p}[M] &= \tilde{\Omega}\left(\max_{p\in[1/3,2/3]} \var_{\Dee_p} [\pideal]\right) \quad \text{(under conditions of Thm \ref{metatheorem})}
\end{align*} We can view the truncation as simply choosing a maximum $k^*$ so that $T=\sqrt{\frac{k^*}{p(1-p)}}$.
Now, the un-normalised weights of $\pideal$ are \[\min\left\{\frac{1}{\var(\Dee_p(k_i))}, \frac{T}{\sqrt{\var(\Dee_p(k_i))}}\right\}=\min\left\{\frac{k_i}{p(1-p)}, \frac{\sqrt{k_i k^*}}{p(1-p)}\right\}.\] Further, $\var([\widehatpi]_{a_i}^{b_i})\le\var(\Dee(k_i))$ and we assume throughout this paper that $\var([\widehatpi]_{a_i}^{b_i})\ge (1/2)\var(\Dee(k_i))$. So, $\var([\widehatpi]_{a_i}^{b_i})=\Theta(\var(\Dee(k_i))) = \Theta(\frac{p(1-p)}{k_i})$.  
Finally, since binomials are highly concentrated, $|b_i-a_i|=\Omega(\sigma_i)$, which implies that $\tfrac{\max_i w_i^*|b_i-a_i|}{\epsilon}$ as defined in Equation~\eqref{privoptimal1} is achieved at $k_i=k^*$. Thus,
\begin{align*}
\min_{M\text{, unbiased}} \max_{p\in[1/3,2/3]}\var_{\Dee_p}[M]
&= \max_{p\in[1/3,2/3]} \frac{\Omega\left(\frac{k^*}{p(1-p)\epsilon^2}\right)+ \sum_{i=1}^n \left(\min\left\{\frac{k_i}{p(1-p)}, \frac{\sqrt{k_i k^*}}{p(1-p)}\right\}\right)^2\frac{1}{2}\frac{p(1-p)}{k_i}}{\left(\sum_{i=1}^n \min\left\{\frac{k_i}{p(1-p)}, \frac{\sqrt{k_i k^*}}{p(1-p)}\right\}\right)^2 }\\
&= \tilde{\Omega}\left(\max_{p\in[1/3,2/3]}p(1-p) \frac{\frac{k^*}{\epsilon^2}+\sum_{i=1}^n \min\{k_i, k^*\}}{(\sum_{i=1}^n \min\{k_i,\sqrt{k_ik^*}\})^2}\right)\\
&= \tilde{\Omega}\left( \frac{\frac{k^*}{\epsilon^2}+\sum_{i=1}^n \min\{k_i, k^*\}}{(\sum_{i=1}^n \min\{k_i,\sqrt{k_ik^*}\})^2}\right),
\end{align*}
where the first equality comes from Theorem~\ref{optimalityfisherinfo}, the second equality pulls out common factors, and the third equality is because $p$ is bounded away from 0 and 1.

For the other component of the bound we will let $\Dee_p$ be a truncated Gaussian distribution. Let $\phi$ and $\Phi$ respectively be the probability density function and cumulative density function of the standard Gaussian $\mathcal{N}(0,1)$. Let $W$ be such that $\gamma:=\Phi(W)-\Phi(-W)\ge 9/10$ and $\truncatedfactor := \frac{2W\phi(W)}{\Phi(W)-\Phi(-W)}\le 1/2$.
Define the truncated Gaussian $\Dee_p$ with mean $p$ on $[p-\frac{\sigma_p}{\sqrt{1-\truncatedfactor}} W, p+\frac{\sigma_p}{\sqrt{1-\truncatedfactor}} W]$ by the probability density function: \[\phi_p(q) = \begin{cases}
\frac{1}{\gamma}\phi\left((q-p)\frac{\sqrt{1-\truncatedfactor}}{\sigma_p}\right) & q\in[p-\frac{\sigma_p}{\sqrt{1-\truncatedfactor}} W, p+\frac{\sigma_p}{\sqrt{1-\truncatedfactor}} W]\\
0 & \text{otherwise.}
\end{cases}.\]
Now, the variance of $\Dee_p$ is $\sigma_p^2$ and the Fisher information of $\Dee_p$ is given by \cite{Mihoc:2003} \begin{equation}\label{cramerraotrungauss}\frac{1}{\sigma_p^2}\left(1-\truncatedfactor\right)^2\in\left[\frac{1}{4\sigma_p^2}, \frac{1}{\sigma_p^2}\right].\end{equation}
Since any sample from $\Dee$ can be post-processed into a sampling from $\Dee(k)$ for any $k \in \mathbb{N}$, we have
\begin{align*}
\min_{M\text{, unbiased}} \max_{p\in[1/3,2/3]}\var_{\forall i\in[n], x_i\sim\Dee_p(k_i)}[M(x_1, \cdots, x_n)]&\ge \min_{M\text{, unbiased}} \max_{p\in[1/3,2/3]}\var_{p_1, \cdots, p_n\sim\Dee_p}[M(p_1, \cdots, p_n)]\\
&\ge \max_{p\in[1/3,2/3]}O\left(\frac{\sigma_p^2}{n}\right)\\
&= O(\frac{\sigma_p^2}{n}),
\end{align*}
where the second inequality follows from the Cram\'er-Rao bound \citep{Nielsen:2013} and Equation~\eqref{cramerraotrungauss}.
\end{proof}

\section{Proofs from Section~\ref{instantiation}}\label{appendix.private}

\initmean*

\showproofs{\begin{proof}[Proof of Lemma~\ref{initialmeanestimate}] 
Firstly, the privacy guarantees follows immediately from the Laplace Mechanism in differential privacy \citep{Dwork:2006} noting that $\frac{10}{n}\sum_{i=(9n/10)+1}^n x^1_i$ has sensitivity $\frac{10}{n}$.

Now, let us turn to the two accuracy guarantees. We will start with the guarantee that $\pinitial$ is close to $p$ with high-probability.
Note that $\berP$ is simply a Bernoulli random variable with mean $p$ so since each sample is independent, $\frac{10}{n}\sum_{i=(9n/10)+1}^n x^1_i=\Bin(n/10, p)$. Thus, if $n\ge \frac{20\log(1/\beta)}{p}$, a Chernoff bound gives \[\Pr\left[\left|\frac{10}{n}\sum_{i=(9n/10)+1}^n x^1_i-p\right|\ge \sqrt{\frac{3\min\{p, 1-p\}\log(4/\beta)}{n/10}}\right]\le\beta/2.\] Therefore, combining with a high probability bound on the Laplace distribution, \[\Pr\left[\left|\pinitial-p\right|\ge \sqrt{\frac{3\min\{p,1-p\}\log(4/\beta)}{n/10}}+\frac{\log(2/\beta)}{\epsilon n/10}\right]\le\beta.\]
We will condition on the following event for the remainder of the proof, which will occur with probability $1-\beta$: \[\left|\pinitial-p\right|\le 2\max\left\{\sqrt{\frac{3\min\{p,1-p\}\log(4/\beta)}{n/10}},\frac{\log(2/\beta)}{\epsilon n/10}\right\}.\]
Now if $\left|\pinitial-p\right|\le 2\sqrt{\frac{3\min\{p,1-p\}\log(4/\beta)}{n/10}}$. Since we need $\alpha$ in terms of $\pinitial$ rather than $p$ (since $\pinitial$ is known to the algorithm), we need to rework this formula. Squaring both sides and bringing all the terms to the same side, we obtain 
\[p^2-2\left(\pinitial+\frac{6\log(4/\beta)}{n/10}\right)p+(\pinitial)^2\le 0.\] Completing the square we obtain
\[ \left(p-\pinitial-\frac{6\log(4/\beta)}{n/10}\right)^2+(\pinitial)^2-\left(\pinitial+\frac{6\log(4/\beta)}{n/10}\right)^2\le 0. \]
Now, rearranging and taking the square root, we obtain
\[\left|p-\pinitial-\frac{6\log(4/\beta)}{n/10}\right|\le \sqrt{\left(\pinitial+\frac{6\log(4/\beta)}{n/10}\right)^2-(\pinitial)^2} \]
then by squaring both sides, using the fact that
$\min\{p,1-p\} \leq p$,
and rearranging we have 
\begin{align*}
|\pinitial-p|\le \sqrt{\frac{12\pinitial\log(4/\beta)}{n/10}+\frac{36\log^2(4/\beta)}{n^2/100}}+\frac{6\log(4/\beta)}{n/10}
\end{align*}
which implies that, \[\left|\pinitial-p\right|\le 2\max\left\{\sqrt{\frac{12\pinitial\log(4/\beta)}{n/10}+\frac{36\log^2(4/\beta)}{n^2/100}}+\frac{6\log(4/\beta)}{n/10},\frac{\log(2/\beta)}{\epsilon n/10}\right\}.\]

We need to show that this expression is less than or equal to $\concentrationbound{k_i}{\Dee}{n}{\sigma_p^2}{\beta}$ because $\alpha = O(1/\sqrt{n})$. To see this, note that $\alpha = O(1/\sqrt{n})$ and $\concentrationbound{k_i}{\Dee}{n}{\sigma_p^2}{\beta}$ is increasing towards 1 as $n$ grows large. Thus for $n$ sufficiently large, $\alpha \leq \concentrationbound{k_i}{\Dee}{n}{\sigma_p^2}{\beta}$ will be satisfied.

Next we turn to proving the second accuracy claim, that $\pinitial(1-\pinitial)$ is concentrated around $p(1-p)$. 
Let $\mathcal{E} = \pinitial-p$ so
\begin{align*}
    \pinitial(1-\pinitial) &= (p+\mathcal{E})(1-p-\mathcal{E})
    = p(1-p)+(1-2p)\mathcal{E}-\mathcal{E}^2
\end{align*}
Now, if $\min\{p,1-p\}\ge K \max\left\{\frac{3\log(4/\beta)}{n/10}, \frac{\log(2/\beta)}{\epsilon n/10}\right\}$ for some constant $K$, then
\begin{align*}
    |\mathcal{E}|&\le \sqrt{\frac{3\min\{p,1-p\}\log(4/\beta)}{n/10}}+\frac{\log(2/\beta)}{\epsilon n/10}\\
    &\le \sqrt{\frac{\min\{p,1-p\}\min\{p, (1-p)\}}{K}}+\frac{\min\{p,(1-p)\}}{K}\\
    &\le \frac{2\min\{p,1-p\}}{K}.
\end{align*}
Thus, combining this with the fact that $1-2p \leq \max\{p,1-p\}$ for $p\in[0,1]$,
\begin{align*}
    |(1-2p)\mathcal{E}-\mathcal{E}^2|&\le \max\{p,1-p\}\frac{2\min\{p,1-p\}}{K}+\left(\frac{2\min\{p,1-p\}}{K}\right)^2\\
    &\le \frac{6p(1-p)}{K}
\end{align*}
Finally, choosing $K=12$ gives, \[\pinitial(1-\pinitial)\in \left[\frac{p(1-p)}{2}, \frac{3p(1-p)}{2}\right].\]

\end{proof}}

\lemthirdmoment*

\showproofs{\begin{proof}[Proof of Lemma~\ref{lem.thirdmoment}] Note that $\mathbb{E}[\bersumP]=p$. Then we can bound the absolute third central moment as follows,
\begin{align*}
\mathbb{E}_{x\sim\bersumP}[|x-p|^3] &= \mathbb{E}_{p_i\sim \Dee}\mathbb{E}_{y\sim \Bin(k, p_i)} [|(\frac{1}{k}y-p_i)-(p-p_i)|^3]\\
&\le 4\left(\mathbb{E}_{p_i\sim \Dee}\mathbb{E}_{y\sim \Bin(k, p_i)}[|\frac{1}{k}y-p_i|^3]+\mathbb{E}_{p_i\sim \Dee}[|p-p_i|^3]\right)\\
&\le 4\left(\frac{1}{k^3}\mathbb{E}_{p_i\sim \Dee}\left[\sqrt{\mathbb{E}_{y\sim \Bin(k, p_i)}[|y-k\cdot p_i|^2]\mathbb{E}_{y\sim \Bin(k, p_i)}[|y-p_i|^4]}\right]
+\gamma\sigma_p^3\right)\\
&\hspace{2.5in} \text{(by Cauchy-Schwarz inequality)}\\
&\le 4\left(\frac{1}{k^3}\mathbb{E}_{p_i\sim \Dee}\left[\sqrt{k^2(p_i(1-p_i))^2(1+3kp_i(1-p_i))}\right]
+\gamma\sigma_p^3\right)\\
&\le 4\left(\frac{1}{k^3}\mathbb{E}_{p_i\sim \Dee}[k(p_i(1-p_i))]+\frac{1}{k^3}\mathbb{E}_{p_i\sim \Dee}[\sqrt{3k^3(p_i(1-p_i))^3}]
+\gamma\sigma_p^3\right)\\
&\le 4\left(\frac{1}{k^2}p(1-p)+\frac{\sqrt{3}}{k^{3/2}}\mathbb{E}_{p_i\sim \Dee}[\sqrt{(p_i(1-p_i))^3}]
+\gamma\sigma_p^3\right)\\
&\hspace{2.5in} \text{ (by Jensen's inequality)}\\
&\le 4\left(\frac{1}{k^{3/2}}\sqrt{(p(1-p))^3}+\frac{\sqrt{3}}{k^{3/2}}\mathbb{E}_{p_i\sim \Dee}[\sqrt{(p_i(1-p_i))^3}]
+\gamma\sigma_p^3\right),\\
\end{align*}
where the first inequality follows from the following inequality that holds for all real valued $a$ and $b$: $|a-b|^3\le 4(|a|^3+|b|^3)$.
The second to last inequality follows from Jensen's inequality since $h(x)=x(1-x)$ is concave, and the last inequality follows since $\frac{1}{\sqrt{k}}\le\sqrt{p(1-p)}$. Now, we will use a generalised form of Jensen's inequality to bound $\mathbb{E}_{p_i\sim \Dee}[\sqrt{(p_i(1-p_i))^3}]$. Let $h(x)=(x(1-x))^{3/2}$ and \[\phi(x) = \frac{h(x)-h(p)}{(x-p)^2}-\frac{h'(p)}{x-p}.\] Since $p\in[\frac{1}{k}, 1-\frac{1}{k}]$, \[\max_{x\in[\frac{1}{2k}, 1-\frac{1}{2k}]}\phi(x) \le (1/2)\max_{x\in[\frac{1}{2k}, 1-\frac{1}{2k}]}h''(x) \le h''\left(\frac{1}{2k}\right) =  \frac{3 (8 (\frac{1}{2k})^2 - 8 (\frac{1}{2k}) + 1)}{4 \sqrt{(1 - \frac{1}{2k}) \frac{1}{2k}}} = \frac{3(8-16k+4k^2)}{8k\sqrt{(2k-1)}}\le \frac{3}{2}\sqrt{k}.\] If $x\notin[\frac{1}{2k}, 1-\frac{1}{2k}]$ then $|x-p|\ge \frac{1}{2k}$ and $h(x)<h(p)$, so \[\phi(x)\le \frac{|h'(p)|}{|x-p|} = \frac{3 |1 - 2 p| \sqrt{p(1-p)}}{2|p-x|} \le \frac{3}{2}\frac{\sqrt{p(1-p)}}{|p-x|}\le \max\left\{\frac{3}{2}\frac{\sqrt{\frac{1}{k}(1-\frac{1}{k})}}{|\frac{1}{k}-x|}, \frac{3}{2}\frac{\sqrt{\frac{1}{k}(1-\frac{1}{k})}}{|1-\frac{1}{k}-x|}\right\} \le 3\sqrt{k-1}\le 3\sqrt{k}. \]
Therefore, by the generalised Jensen's inequality, \[\mathbb{E}_{p_i\sim \Dee}[\sqrt{(p_i(1-p_i))^3}]\le \sqrt{(p(1-p))^3}+\sigma_p^2\cdot 3\sqrt{k}\le \sqrt{(p(1-p))^3}+\sigma_p^2\cdot 3\sqrt{k}.\] 
Continuing to bound the absolute central third moment as above, 
\begin{align*}
\mathbb{E}_{x\sim\bersumP}[|x-p|^3]
&\le 4\left(\frac{1}{k^{3/2}}\sqrt{(p(1-p))^3}+\frac{\sqrt{3}}{k^{3/2}}\mathbb{E}_{p_i\sim \Dee}[\sqrt{(p_i(1-p_i))^3}]
+\gamma\sigma_p^3\right)\\
&\le 4\left(\frac{1}{k^{3/2}}\sqrt{(p(1-p))^3}+\frac{\sqrt{3}}{k^{3/2}}\sqrt{(p(1-p))^3}+3\sqrt{3}\frac{\sigma_p^2}{k}
+\gamma\sigma_p^3\right)\\
&\le 4\left(\frac{1}{k^{3/2}}\sqrt{(p(1-p))^3}+\frac{\sqrt{3}}{k^{3/2}}\sqrt{(p(1-p))^3}+3\sqrt{3}\sigma_p^3
+\gamma\sigma_p^3\right)\\
&\le 4(3\sqrt{3}+\gamma)\left(\frac{1}{k^{3/2}}\sqrt{(p(1-p))^3}
+\sigma_p^3\right)\\
&\le 4(3\sqrt{3}+\gamma)\left(\frac{1}{k}p(1-p)
+\sigma_p^2\right)^{3/2}\\
&\le 8(3\sqrt{3}+\gamma)\left(\frac{1}{k}p(1-p)
+\frac{k-1}{k}\sigma_p^2\right)^{3/2},
\end{align*}
where the first and second inequalities follow from above, 
the third inequality follows because $k\geq 1$, the fourth is simply rearranging the terms, the fifth follows from the fact that for all positive, real numbers $a$ and $b$: $a^{3/2}+b^{3/2}<(a+b)^{3/2}$, and the last inequality follows since if $k\ge 2$ then $(k-1)/k>1/2$.
\end{proof}}

\lemmultvariance*

\showproofs{\begin{proof}[Proof of Lemma~\ref{multiplicativevariance}] Note that the conditions are sufficient to ensure from Lemma~\ref{lem.thirdmoment} that $\frac{\rho_{\bersumP}}{\var(\bersumP)^{3/2}}\le 8(3\sqrt{3}+\gamma)$. Then Lemma~\ref{highprobstd} and Lemma~\ref{lem.sigi} imply that \[\var(\Dee(k)) = \frac{1}{k}p(1-p)+\frac{k-1}{k}\sigma_p^2\le\widehat{\sigma}_{p,k}^2\le 8\left(\frac{1}{k}p(1-p)+\frac{k-1}{k}\sigma_p^2\right) = 8\var(\Dee(k)).\] 
\end{proof}}

\subsection{Proof of Lemma~\ref{highprobstd}}\label{appendix:highprobstd}

In this section we slightly generalise the algorithm and analysis given by \cite{Karwa:2018} beyond Gaussian distributions. We will show that their algorithm provides accurate estimates of the mean of sufficiently nice exponential families. This algorithm first estimates the variance of the distribution, then estimates the mean. Both steps of the estimation are performed using differentially private histogram queries.

Let $\rho = \mathbb{E}_{P}[|X-\mathbb{E}_P(x)|^3]$ be the absolute third central moment of $P$, and let $\sigma$ be the standard deviation. Since the algorithm of \cite{Karwa:2018} is designed for Gaussian distributions we will use the following lemma that describes the rate of convergence of the central limit theorem.

\begin{lemma}[Berry-Esseen theorem]\label{BE}
Let $X_1, \cdots, X_n$ be iid samples from a distribution $P$ and $\rho = \mathbb{E}_{P}[|X-\mathbb{E}_P(x)|^3]$. Set $S_n =\frac{1}{n} \sum_{j=1}^n X_j$, $\mu=\mathbb{E}_P[x]$ and $\sigma^2=\var(P)$, and let $Y\sim \mathcal{N}(\mu, \frac{\sigma^2}{n})$ then for some absolute constant $\berryesseen>0$,
\begin{itemize}
\item (Uniform) \[|\mathbb{P}[S_n\le a]-\mathbb{P}[Y\le a]|\le \frac{\berryesseen\rho}{\sigma^3\sqrt{n}}\]
\item (Non-uniform) For all $a>0$, \[|\mathbb{P}[S_n\le a]-\mathbb{P}[Y\le a]|\le \frac{\berryesseen\rho}{(1+|a|)^3\sigma^3\sqrt{n}}.\]
\end{itemize}
\end{lemma}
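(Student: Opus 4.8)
The statement is the classical Berry--Esseen theorem together with its non-uniform refinement (due to Esseen and Bikelis), so the plan is not to reprove it from scratch but to record the standard Fourier-analytic argument one would cite. First I would normalize: by translating and rescaling the $X_j$ we may assume $\mu=0$ and $\sigma=1$, set $Z_n=\sqrt{n}\,S_n$, and observe that the claim reduces to bounding $\sup_a|F_n(a)-\Phi(a)|$ by $\berryesseen\rho/\sqrt n$, where $F_n$ is the CDF of $Z_n$, $\Phi$ the standard normal CDF, and $\phi(t)=\mathbb{E}[e^{itX_1}]$ the summand characteristic function (so $Z_n$ has characteristic function $\phi(t/\sqrt n)^n$).

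For the uniform bound the two ingredients are Esseen's smoothing inequality and a pointwise characteristic-function estimate. The smoothing inequality gives, for every $T>0$,
\[
\sup_a|F_n(a)-\Phi(a)|\;\le\;\frac{1}{\pi}\int_{-T}^{T}\left|\frac{\phi(t/\sqrt n)^n-e^{-t^2/2}}{t}\right|\,dt\;+\;\frac{C_0}{T},
\]
using $\sup_a\Phi'(a)=1/\sqrt{2\pi}=O(1)$. For the integrand I would combine the third-order Taylor expansion $\phi(s)=1-\tfrac{s^2}{2}+R(s)$ with $|R(s)|\le\tfrac{\rho}{6}|s|^3$ (finite by hypothesis), the elementary inequality $|z^n-w^n|\le n|z-w|\max(|z|,|w|)^{n-1}$, and the bound $|\phi(s)|\le e^{-s^2/3}$ valid for $|s|\le 1/(c'\rho)$, to get $|\phi(t/\sqrt n)^n-e^{-t^2/2}|\le c''\rho\,|t|^3 n^{-1/2}e^{-t^2/3}$ for $|t|\le\sqrt n/(c'\rho)$. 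Taking $T=\sqrt n/(c'\rho)$, substituting, and evaluating $\int_{\mathbb{R}}t^2e^{-t^2/3}\,dt=O(1)$ then yields $\sup_a|F_n(a)-\Phi(a)|\le\berryesseen\rho/\sqrt n$ with a universal $\berryesseen$; undoing the normalization gives the stated bound with $Y\sim\mathcal{N}(\mu,\sigma^2/n)$.

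For the non-uniform bound I would run essentially the same argument after an exponential change of measure centered at the target point $a$: compare $F_n$ near $a$ with the CDF of the tilted sum (whose normalized third absolute moment stays bounded in terms of $\rho/\sigma^3$ uniformly in the tilt parameter), apply the uniform estimate to the tilted variables, and transfer the bound back, picking up the extra $(1+|a|)^{-3}$ decay from the Gaussian weight $e^{-a^2/2}$ that the change of measure produces. The main obstacle is precisely this non-uniform refinement: the uniform bound is a clean two-step smoothing-plus-characteristic-function computation, while the $(1+|a|)^{-3}$ decay genuinely requires the tilting/truncation machinery and a check that all moment quantities appearing in the tilted problem remain controlled uniformly in $a$. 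Since for our purposes only the polynomial decay and the existence of an absolute constant matter, I would invoke the classical statement and restrict the write-up to confirming that the normalized moment hypothesis carries through rather than reproducing the tilting computation in detail.
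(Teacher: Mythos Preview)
The paper does not prove this lemma at all: it is stated as the classical Berry--Esseen theorem (with its non-uniform refinement) and used as a black box in the proof of Lemma~\ref{highprobstdduplicate}. Your outline of the standard Fourier-analytic proof via Esseen's smoothing inequality is a correct sketch of how one establishes the uniform bound, and your remark that the non-uniform version requires additional tilting/truncation machinery is accurate, but none of this is needed here---the intended ``proof'' is simply a citation to the classical result.
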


\begin{lemma}[Histogram Learner \citep{Dwork:2006, Bun:2016, Vadhan:2016}]\label{histlearn} For all $K\in\mathbb{N}$ and any domain $\Omega$, for any collection of disjoint bins $B_1, \cdots, B_K$ defined on $\Omega, n\in\mathbb{N}, \epsilon\ge0, \delta\in(0,1/n), \lambda>0$ and $\beta\in(0,1)$ there exists an $(\epsilon,\delta)$-DP algorithm $M:\Omega^n\to\mathbb{R}^K$ such that for every distribution $D$ on $\Omega$, if 
\begin{enumerate}
\item $X_1, \cdots, X_N\sim D$ and $p_k=\mathbb{P}(X_i\in B_k)$
\item $(\tilde{p_1}, \cdots, \tilde{p_K})=M(X_1, \cdots, X_n)$ and 
\item \[n\ge \max\left\{\min\left\{\frac{8}{\epsilon\lambda}\ln\left(\frac{2K}{\beta}\right), \frac{8}{\epsilon\lambda}\ln\left(\frac{4}{\beta\delta}\right)\right\}, \frac{1}{2\lambda^2}\ln\left(\frac{4}{\beta}\right)\right\}\]
\end{enumerate}
then, \[\mathbb{P}_{X\sim D, M}(\max_k|\tilde{p_k}-p_k|\le\lambda)\ge 1-\beta\;\;\;\text{ and },\]
\[\mathbb{P}_{X\sim D, M}(\arg\max_k\tilde{p_k}=j)\le \begin{cases}
np_j+2e^{-(\epsilon n/8)\cdot(\max_kp_k)} & \text{ if } K< 2/\delta\\
np_j & \text{ if } K\ge 2/\delta
\end{cases}\]
where the probability is taken over the randomness of $M$ and the data $X_1, \cdots, X_n$.
\end{lemma}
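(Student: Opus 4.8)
The plan is to realize $M$ as a noisy empirical histogram and to analyze two standard constructions, choosing whichever makes the $\min$ in the sample-complexity bound tight. Concretely, given $X_1,\dots,X_n$ set $c_k=|\{i:X_i\in B_k\}|$ and $\widehat p_k=c_k/n$. When $\ln(2K/\beta)\le\ln(4/(\beta\delta))$ I would use the \emph{Laplace histogram}: output $\widetilde p_k=\widehat p_k+\Lap(2/(\epsilon n))$ for every $k$. Otherwise I would use the \emph{stability-based histogram}: fix a threshold $t\approx \tfrac{2}{\epsilon n}\ln(2/\delta)$ (with the constant chosen to absorb the fact that at most two bins are affected by a record change), for each bin with $c_k\ge 1$ set $z_k=\widehat p_k+\Lap(2/(\epsilon n))$ and output $z_k$ if $z_k>t$ and $0$ otherwise, and output $0$ for every empty bin. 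The branch is selected from $K,\beta,\delta$ alone, hence it is privacy-free, so it suffices to analyze each construction separately; the additive $\tfrac{1}{2\lambda^2}\ln(4/\beta)$ requirement is the sampling error and applies to both. I would then establish, in order, privacy, uniform accuracy, and the $\arg\max$ bound.

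For privacy, the Laplace histogram is immediate: a record change moves a point between two bins, so the count vector $(c_1,\dots,c_K)$ has $\ell_1$-sensitivity $2$, and adding $\Lap(2/(\epsilon n))$ coordinatewise is exactly the (vector-valued) Laplace Mechanism, giving $(\epsilon,0)$-DP. For the stability-based histogram I would fix neighbours $D\sim D'$ and look at the at most two bins whose counts differ by $1$: untouched bins have identical output distributions; a touched bin that is nonempty on both $D$ and $D'$ has $|\widehat p_k-\widehat p_k'|=1/n$, so its released value is $e^{\epsilon/2}$-indistinguishable by the Laplace tail bound (thresholding being a data-independent post-processing of $z_k$), and two such bins contribute a factor $e^{\epsilon}$; and a bin that flips between empty and count $1$ outputs the constant $0$ on the empty side, while on the count-$1$ side it outputs $0$ unless $1/n+\Lap(2/(\epsilon n))>t$, which by the choice of $t$ has probability at most $\delta/2$. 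Summing over the at most two such flips and combining with the $e^{\epsilon}$ factor yields $(\epsilon,\delta)$-DP.

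For uniform accuracy I would split $\widetilde p_k-p_k=(\widetilde p_k-\widehat p_k)+(\widehat p_k-p_k)$ and bound each part by $\lambda/2$. The noise part: a Laplace tail bound plus a union bound over the (at most $K$) released bins makes all noise terms $\le\lambda/2$ with probability $1-\beta/2$ once $n\ge\tfrac{8}{\epsilon\lambda}\ln(2K/\beta)$ for the Laplace histogram, and once $n\ge\tfrac{8}{\epsilon\lambda}\ln(4/(\beta\delta))$ for the stability-based one (which also makes $t\le\lambda/2$, so any bin that is erroneously zeroed had $\widehat p_k\le\lambda$ on the good event and its error is still $\le\lambda$). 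The sampling part: each $\widehat p_k$ is an average of $n$ i.i.d.\ Bernoulli$(p_k)$ variables, so Hoeffding plus a union bound gives $\max_k|\widehat p_k-p_k|\le\lambda/2$ with probability $1-\beta/2$ once $n\gtrsim\tfrac{1}{\lambda^2}\ln(K/\beta)$, the $\ln K$ being absorbed into the noise term. A union bound over the two failure events gives probability $1-\beta$. For the $\arg\max$ claim I would write $\Pr[\arg\max_k\widetilde p_k=j]\le\Pr[\exists i:X_i\in B_j]+\Pr[\arg\max_k\widetilde p_k=j\mid c_j=0]$; the first term is $\le\sum_i\Pr[X_i\in B_j]=np_j$, and on $\{c_j=0\}$ we have $\widetilde p_j=0$, so $\arg\max_k\widetilde p_k=j$ forces the largest bin $k^\star=\arg\max_k p_k$ to satisfy $\widetilde p_{k^\star}\le 0$, which requires either $\widehat p_{k^\star}<p_{k^\star}/2$ (probability $\le e^{-np_{k^\star}/8}$ by a Chernoff bound) or $\Lap(2/(\epsilon n))<-p_{k^\star}/4$ (probability $\le\tfrac12 e^{-\epsilon n p_{k^\star}/8}$), since otherwise $\widehat p_{k^\star}+\Lap(2/(\epsilon n))>p_{k^\star}/4>t$; for $\epsilon\le 1$ this sums to at most $2e^{-(\epsilon n/8)\max_k p_k}$, giving the $K<2/\delta$ case, while for $K\ge 2/\delta$ the threshold can be taken small enough that this last slack disappears and the same decomposition leaves the bound $np_j$.

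The step I expect to be the main obstacle is the privacy analysis of the stability-based histogram: calibrating $t$ so that it is simultaneously small enough not to degrade accuracy ($t\le\lambda/2$) and large enough that a bin whose count drops from $1$ to $0$ is silenced except with probability $\le\delta/2$, and arguing carefully that thresholding is a legitimate post-processing so that the per-bin $e^{\epsilon/2}$ guarantees compose with the (at most two) $\delta/2$ slacks into an overall $(\epsilon,\delta)$ bound. The rest — the Laplace and Chernoff tail estimates, the union bounds, and the tracking of constants — is routine.
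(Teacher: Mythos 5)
The paper does not prove this lemma at all: it is imported verbatim (essentially Lemma 2.3 of \cite{Karwa:2018}) from the cited references, so there is no in-paper argument to compare against. Your two-branch construction --- a Laplace histogram when $\ln(2K/\beta)$ is the smaller term and a stability-based (threshold-and-release) histogram otherwise --- is exactly the standard construction behind the citation, and your privacy analyses of both branches are sound in outline.

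There are, however, two concrete gaps in the accuracy argument, and they matter because the paper invokes this lemma with $K=\infty$ (see the remark after Algorithm~\ref{algo:varianceestimate}). First, in the stability-based branch you take ``a union bound over the (at most $K$) released bins''; in the regime where that branch is selected, $K$ may be infinite, and the correct union bound is over the at most $n$ \emph{nonempty} bins (empty bins output $0$ deterministically), after which $\ln n \le \ln(1/\delta)$ via the hypothesis $\delta < 1/n$ yields the stated $\ln(4/(\beta\delta))$ dependence. Second, your treatment of the sampling term --- Hoeffding plus a union bound over $K$ bins, with ``the $\ln K$ absorbed into the noise term'' --- does not deliver the stated $K$-free requirement $n \ge \tfrac{1}{2\lambda^2}\ln(4/\beta)$: in the stability-based branch the noise term has no $K$-dependence to absorb anything into, and with $K=\infty$ a per-bin union bound fails outright. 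A uniform bound on $\max_k|\widehat p_k - p_k|$ over arbitrarily many disjoint bins needs its own argument (e.g., a bounded-differences/McDiarmid bound on the supremum, or restricting attention to the at most $2/\lambda$ bins with $p_k>\lambda/2$ and handling the rest separately). Finally, your $\arg\max$ decomposition conditions on $c_j=0$ forcing $\tilde p_j=0$, which holds only for the stability-based branch; in the Laplace branch $\tilde p_j$ is pure noise and the comparison must be made against $\tilde p_{k^\star}$ directly, which is where the additive $2e^{-(\epsilon n/8)\max_k p_k}$ term in the $K<2/\delta$ case comes from.
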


\begin{algorithm} \caption{Variance estimator}\label{algo:varianceestimate}
\textbf{Input:} Sample $X = (x_{1},\dots,x_{n})\sim P, \epsilon, \delta, \sigma_{\min}, \sigma_{\max}, \beta, \thirdmoment$.
\begin{algorithmic}[1]
\State Let $\phi= \lceil(600\berryesseen\thirdmoment)^2\rceil$, where $\berryesseen$ is the absolute constant from Lemma~\ref{BE}.
\State If \[n<c\phi\min\left\{\frac{1}{\epsilon}\ln\left(\frac{\ln\left(\frac{\sigma_{max}}{\sigma_{min}}\right)}{\beta}\right), \frac{1}{\epsilon}\ln\left(\frac{1}{\delta\beta}\right)\right\},\] where $c$ is an absolute constant whose existence is ensured by Lemma~\ref{histlearn}, then output $\bot$.
\State Divide $[\sigma_{min}, \sigma_{max}]$ into bins of exponentially increasing length. The bins are of the form $B_j = (2^j, 2^{j+1}]$ for $j=j_{min}, \cdots, j_{max},$ where $j_{max} = \lceil \ln_2\frac{\sigma_{max}}{\sqrt{\phi}}\rceil+1$ and $j_{min}=\lfloor\ln_2\frac{\sigma_{min}}{\sqrt{\phi}}\rfloor-2.$
\State Let $Z_i = \frac{1}{\phi}\sum_{j=1}^{\phi} x_{(i-1)\phi+j}$ for $i=1, \cdots, \lfloor n/\phi\rfloor$.
\State Let $Y_i = Z_{2i}-Z_{2i-1}$ for $i=1, \cdots, \lfloor n/2 \rfloor$
\State Run the histogram learner of Lemma~\ref{histlearn} with privacy parameters $(\epsilon, \delta)$ and bins $B_{j_{min}}, \cdots, B_{j_{max}}$ on input $|Y_1|, \cdots, |Y_n|$ to obtain noisy estimates $\tilde{p_{j_{min}}}, \cdots, \tilde{p_{j_{max}}}$. Let \[\widehat{l}=\arg\max\tilde{p_j}\]
\State Output $\widehat{\sigma} = 2^{\widehat{l}+2}\sqrt{\phi}$.
\end{algorithmic}
\end{algorithm}

Note in particular that the use of approximate $(\epsilon, \delta)$-DP allows us to set the $K=\infty$, while the sample complexity remains finite. 
The following lemma states that provided $\rho/\sigma^3$ is bounded, Algorithm~\ref{algo:varianceestimate} can estimate the standard deviation up to a multiplicative constant.

\begin{lemma}\label{highprobstdduplicate} For all $n\in\mathbb{N}$, $\sigma_{min}<\sigma_{max}\in[0, \infty], \epsilon>0, \delta\in(0,\frac{1}{n}], \beta\in(0,1/2), \thirdmoment>0,$ Algorithm~\ref{algo:varianceestimate} is $(\epsilon, \delta)$-DP and satisfies that if $X_1, \cdots, X_n$ are iid draws from $P$, where $P$ has standard deviation $\sigma\in[\sigma_{min}, \sigma_{max}]$ and  \textcolor{black}{$\frac{\rho}{\sigma^3}\le \thirdmoment$} then if \[n\ge c \thirdmoment^2\min\left\{\frac{1}{\epsilon}\ln\left(\frac{\ln\left(\frac{\sigma_{max}}{\sigma_{min}}\right)}{\beta}\right), \frac{1}{\epsilon}\ln\left(\frac{1}{\delta\beta}\right)\right\},\] (where $c$ is a universal constant), we have \[\mathbb{P}_{X\sim P, M} (\sigma\le\widehat{\sigma}\le 8\sigma)\ge 1-\beta.\]
\end{lemma}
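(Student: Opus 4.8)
The plan is to obtain privacy almost for free from the block structure of Algorithm~\ref{algo:varianceestimate}, and to reduce accuracy to a Berry--Esseen approximation plus a robust ``mode bin'' property of the half-normal, which is then read off by the private histogram learner of Lemma~\ref{histlearn}. For privacy, observe that the map $X\mapsto(|Y_1|,\dots,|Y_m|)$, $m=\Theta(n/\phi)$, partitions the $n$ samples into disjoint blocks of $2\phi$ samples, so changing a single $x_j$ changes at most one coordinate $|Y_i|$; hence Algorithm~\ref{algo:varianceestimate} is the composition of this preprocessing, the $(\epsilon,\delta)$-DP histogram learner of Lemma~\ref{histlearn} on these $m$ points, and the data-independent post-processing $\widehat\sigma=2^{\widehat l+2}\sqrt\phi$ (the $\bot$ branch depends only on $n$), so it is $(\epsilon,\delta)$-DP. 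From now on condition on $n$ large enough that $\bot$ is not output.

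For accuracy, write $\mu=\E_P[x]$ and let $\rho=\E_P[|x-\mu|^3]$, so the hypothesis reads $\rho/\sigma^3\le\thirdmoment$. Each $Z_i$ averages $\phi$ i.i.d.\ samples, so by Lemma~\ref{BE} its CDF differs uniformly from that of $\mathcal N(\mu,\sigma^2/\phi)$ by at most $\berryesseen\rho/(\sigma^3\sqrt\phi)$; consequently $Y_i=Z_{2i}-Z_{2i-1}$, equivalently $\tfrac1\phi$ times a signed sum of $2\phi$ independent terms of mean $0$, variance $2\sigma^2/\phi$ and absolute third moment $\rho$, has CDF within $O(\berryesseen\rho/(\sigma^3\sqrt\phi))$ of that of $\mathcal N(0,2\sigma^2/\phi)$. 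Since $\phi\ge(600\berryesseen\thirdmoment)^2\ge(600\berryesseen\rho/\sigma^3)^2$, this distance is at most a fixed constant $\eta$ which we may take $\le 10^{-2}$, so for every dyadic interval $B=(2^j,2^{j+1}]$ the law of $|Y_i|$ assigns it probability within $2\eta$ of $H_\tau(B)$, where $H_\tau=|\mathcal N(0,\tau^2)|$ and $\tau=\sqrt2\,\sigma/\sqrt\phi$.

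Next I would establish the mode-bin property and invoke the histogram learner. Writing $u=2^j/(\tau\sqrt2)$, we have $H_\tau((2^j,2^{j+1}])=\mathrm{erf}(2u)-\mathrm{erf}(u)=:g(u)$; elementary analysis shows $g$ is unimodal with $\max g\ge c_0$ for a universal $c_0>0$ (numerically $c_0\approx0.29$) and $g(u)<c_0/2$ once $u\notin[u_-,u_+]$ for fixed constants $1/8<u_-<u_+<1$. Hence the dyadic bin $B_{j^\ast}$ maximizing $H_\tau$ has mass $\ge c_0$, at most two adjacent bins have mass $\ge c_0/2$, and any dyadic bin with $H_\tau(B_j)\ge c_0/2$ satisfies $2^{j+2}\sqrt\phi=8u\sigma$ with $u\in[u_-,u_+]$, hence $2^{j+2}\sqrt\phi\in(\sigma,8\sigma)$; the padding in $j_{min},j_{max}$ guarantees all such bins (in particular $B_{j^\ast}$) lie in the range fed to the histogram learner. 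Running Lemma~\ref{histlearn} with $\lambda$ a fixed constant small enough that $4\eta+2\lambda<c_0/2$: with probability $1-\beta$ every reported frequency is within $\lambda$ of the true sampling probability, which is within $2\eta$ of $H_\tau$, so the reported argmax $\widehat l$ has $H_\tau(B_{\widehat l})\ge c_0-4\eta-2\lambda\ge c_0/2$ and therefore $\widehat\sigma\in(\sigma,8\sigma)$. Combined with Lemma~\ref{lem.sigi} applied later, this is the claimed multiplicative guarantee. For the sample complexity, Lemma~\ref{histlearn} needs $m\ge\max\{\min\{\tfrac8{\epsilon\lambda}\ln\tfrac{2K}{\beta},\tfrac8{\epsilon\lambda}\ln\tfrac4{\beta\delta}\},\tfrac1{2\lambda^2}\ln\tfrac4\beta\}$ with $K=j_{max}-j_{min}+1=O(\log(\sigma_{max}/\sigma_{min}))$; since $m=\Theta(n/\phi)$, $\phi=\Theta(\thirdmoment^2)$ (as $\berryesseen,\lambda$ are absolute constants), this is implied by $n\ge c\,\thirdmoment^2\min\{\tfrac1\epsilon\ln\tfrac{\ln(\sigma_{max}/\sigma_{min})}{\beta},\tfrac1\epsilon\ln\tfrac1{\delta\beta}\}$ for a suitable absolute $c$ (with the $\tfrac1{2\lambda^2}\ln\tfrac4\beta$ term absorbed). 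Lemma~\ref{highprobstd} is then the same statement with $\thirdmoment$ renamed $\zeta$.

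The main obstacle is the mode-bin step: proving the half-normal concentrates a constant fraction of its mass in a single dyadic bin with a constant margin over all other bins, uniformly across all scales $\tau$, and that this bin's index always decodes to a scale estimate inside $(\sigma,8\sigma)$. This is exactly what forces the choice $\phi=\Theta(\thirdmoment^2)$, so that the Berry--Esseen slack $\eta$ is a small absolute constant comfortably below the margin, and lets $\lambda$ be taken absolute, which in turn is what makes the final sample complexity scale like $\thirdmoment^2$ rather than, say, $\thirdmoment^2/\lambda$. The privacy argument, the Berry--Esseen invocation, and substituting into Lemma~\ref{histlearn} are routine bookkeeping.
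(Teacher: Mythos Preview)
Your proposal is correct and follows essentially the same route as the paper: block the samples into groups of size $\Theta(\phi)$ with $\phi=\Theta(\thirdmoment^2)$, invoke Berry--Esseen so that $|Y_i|$ is close to a half-normal in CDF, and then feed the dyadic histogram to Lemma~\ref{histlearn}. The only substantive difference is cosmetic: the paper cites \cite{Karwa:2018} for the half-normal ``mode bin'' facts (stated there as $(j_{(1)},j_{(2)})\in\{(l,l-1),(l,l+1),(l+1,l)\}$ and $q_{(1)}-q_{(3)}>1/100$) and then transfers them to the true $p_j$'s via the Berry--Esseen slack $\le 1/300$, whereas you derive the analogous property directly by analyzing $g(u)=\mathrm{erf}(2u)-\mathrm{erf}(u)$ and use a threshold criterion ($H_\tau(B_{\widehat l})\ge c_0/2$) instead of a gap criterion; both arguments land on the same three-bin window and the same decoding $\widehat\sigma=2^{\widehat l+2}\sqrt\phi\in[\sigma,8\sigma)$. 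Two small remarks: your parenthetical ``Combined with Lemma~\ref{lem.sigi} applied later'' is extraneous here (that lemma plays no role in this proof), and your Berry--Esseen step for $Y_i$ implicitly uses the independent-but-not-identically-distributed version (half the summands are $+(x_j-\mu)$ and half are $-(x_j-\mu)$), which is fine but slightly stronger than the i.i.d.\ form stated in Lemma~\ref{BE}.
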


\begin{proof}[Proof of Lemma~\ref{highprobstdduplicate}]
This proof follows almost directly from Theorem 3.2 of \cite{Karwa:2018}. Note that each $Y_i$ is sampled from a distribution with mean 0 and variance $\frac{2\sigma^2}{\phi}$, and in addition is the sum of $\phi$ independent random variables. As in \citep{Karwa:2018}, there exists a bin $B_l$ with label $l\in(\lfloor \ln_2\frac{\sigma_{min}}{\sqrt{\phi}}\rfloor-1, \lceil \ln_2\frac{\sigma_{max}}{\sqrt{\phi}}\rceil)$ such that $\frac{\sigma}{\sqrt{\phi}}\in(2^l, 2^{l+1}]=B_l$. Define, \[p_j = \mathbb{P}(|Y_i|\in B_j).\] Sort the $p_j$s as $p_{(1)}\ge p_{(2)}\ge\cdots$, and let $j_{(1)}, j_{(2)}, \cdots$ be the corresponding bins. Then the following two facts imply the result (as in \citep{Karwa:2018}).

\noindent \textbf{Fact 1:} The bins corresponding to the largest and second largest mass $p_{(1)}, p_{(2)}$ are $(j_{(1)}, j_{(2)})\in\{(l,l-1), (l, l+1), (l+1,l)\}$.

\noindent \textbf{Fact 2:} $p_{(1)}-p_{(3)}>1/300$.

Now, let $W_i\sim N(0, 2\frac{\sigma^2}{\phi})$ and let $q_i, q_{(i)}$ be the corresponding probabilities for $W_i$. Then \cite{Karwa:2018} showed that: 
\begin{itemize}
\item The bins corresponding to the largest and second largest mass $q_{(1)}, q_{(2)}$ are $(j_{(1)}, j_{(2)})\in\{(l,l-1), (l, l+1), (l+1,l)\}$.
\item $q_{(1)}-q_{(3)}>1/100$.
\end{itemize}

By Lemma~\ref{BE}, since $\phi= \lceil(600\berryesseen\thirdmoment)^2\rceil$, for all $j$, $|p_j-q_j|\le 1/300$. Therefore, $\{p_{(1)}, p_{(2)}\}=\{q_{(1)}, q_{(2)}\}$, which implies both Fact 1 and Fact 2. 
\end{proof}

\section{Interpretation and Estimation of Concentration Functions}\label{s.truncest}

Recall that $f_{\Dee}^{k_i}(n, \sigp, \beta)$ describes the concentration of $\widehatpi \sim \bersumPi{k_i}$ and is defined as,
\[\concentrationbound{k_i}{\Dee}{n}{\sigma_p^2}{\beta} = \arg\inf\{\alpha\;|\;\Pr_{\widehatp_1, \cdots, \widehatp_n\sim\bersumPi{k_i}  }\left(\max_i |\widehatp_i-p|\ge \alpha\right)\le \beta\}.\] In the main body of the paper, we assumed that this function was known to the analyst, even if the input value $\sigp$ was unknown and had to be estimated. In this appendix, we interpret the structure of this concentration function and show that even when this informational assumption is relaxed, our Algorithm \ref{alg.dp} can still be implemented with some minor modifications.

We start by introducing two additional functions: $f_{\Dee}(n, \sigp, \beta)$, which describes the concentration of $p_i \sim \Dee$, and $f_{\Bin}(k_i,p_i,\beta)$, which describes the high probability tail bound on the binomial $\Bin(k_i,p_i)$: 
\[ f_{\Dee}(n, \sigp, \beta) = \arg\inf\{\alpha\;|\; \Pr_{p_1, \cdots, p_n\sim \Dee}(\max_i|p-p_i|\ge \alpha)\le \beta\}. \]
\[ f_{\Bin}(k_i,p_i,\beta) = \arg\inf\{\alpha\;|\; \Pr_{x\sim \Bin(k_i,p_i)}(|\frac{1}{k_i}x-p_i|\ge \alpha)\le \beta\} \]
In this appendix, we will assume that only the function $f_{\Dee}(n,\cdot,\beta)$ is known to the analyst, but the input variance parameter $\sigp$ of the distribution is not known. For example, the analyst may know that $\Dee$ is Gaussian with unknown mean and variance, and thus she can express the concentration of $p_i$ as a function of the variance. Also note that for any values $k_i,p_i$ and $\beta$, we can empirically compute $f_{\Bin}(k_i,p_i,\beta)$. 

The following lemma shows how we can translate high probability bounds on $\Dee$ to high probability bounds on $\bersumP$, using this binomial tail bound of $\Bin(k_i,p_i)$. Specifically, it shows that our quantity of interest $f_{\Dee}^{k_i}(n, \sigp, \beta)$ of the $\widehat{p_i}$s can be upper and lower bounded by concentration of the $p_i$s (as described by $f_{\Dee}(n, \sigp, \beta)$) plus a binomial tail bound.

\begin{lemma}\label{concentrationimpliesconcentration}\label{anticoncentration} Suppose that $\Dee$ is supported on $[0,1/2]$. Given $k_i,n\in\mathbb{N}$, $\sigma_p^2$, and $\beta\in[0,1]$, define $\beta'=2\sqrt{1-\sqrt[n]{1-\beta}}=\Theta(\sqrt{\beta/n})$ and assume that \textcolor{black}{for all $p_i$ in the support of $\Dee$, \[\Pr_{\widehatp_i\sim\Bin(k_i,p_i)}(p_i-\widehatp_i\ge f_{\Bin}(k_i, p_i, \beta'))\ge \frac{1}{2}\beta' \;\;\text{ and }\;\; \Pr_{\widehatp_i\sim\Bin(k_i,p_i)}(\widehatp_i-p_i\ge f_{\Bin}(k_i, p_i, \beta'))\ge \frac{1}{4}\beta'.\]}
Then for all $\beta\in[0,1]$,
for all $i\in[n]$, 
\[\concentrationbound{k_i}{\Dee}{n}{\sigma_p^2}{\beta} \le f_{\Dee}(n, \sigp, \beta/2)+f_{\Bin}(k_i,p_{\max},\beta/n),\] where $p_{\max}=\min\{1/2, p+f_{\Dee}(n, \sigma_p, \beta/2)\}$.
Further, 
for all $i\in[n]$, 
\[\concentrationbound{k_i}{\Dee}{n}{\sigma_p^2}{\beta} \ge f_{\Dee}(1, \sigp, \beta')+f_{\Bin}(k_i, p_{\max}, \beta').\]
\end{lemma}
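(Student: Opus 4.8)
The plan is to prove the two inequalities separately, both resting on the pointwise decomposition $\widehat p_i-p=(\widehat p_i-p_i)+(p_i-p)$, where $p_i\sim\Dee$ and, conditionally on $p_i$, $k_i\widehat p_i\sim\Bin(k_i,p_i)$. The first summand is controlled by the binomial concentration function $f_{\Bin}(k_i,\cdot,\cdot)$ and the second by the meta-distribution concentration function $f_{\Dee}(\cdot,\sigp,\cdot)$. For the upper bound this decomposition plus a union bound suffices; for the lower bound I need an anti-concentration argument, and that is exactly where the explicit binomial tail hypotheses enter and where $\beta'=\Theta(\sqrt{\beta/n})$ is forced.

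\emph{Upper bound.} Draw $\widehat p_1,\dots,\widehat p_n\sim\bersumPi{k_i}$, equivalently $p_1,\dots,p_n\sim\Dee$ i.i.d.\ with $k_i\widehat p_j\sim\Bin(k_i,p_j)$ conditionally. Set $\alpha_1=f_{\Dee}(n,\sigp,\beta/2)$ and $\alpha_2=f_{\Bin}(k_i,p_{\max},\beta/n)$. By the definition of $f_{\Dee}$, with probability at least $1-\beta/2$ every $p_j$ satisfies $|p_j-p|\le\alpha_1$; since $\Dee$ is supported on $[0,1/2]$, on this event every $p_j$ lies in $[0,p_{\max}]$ with $p_{\max}\le 1/2$. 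Conditioning on such a realization of the $p_j$'s, I would use that $p\mapsto f_{\Bin}(k_i,p,\cdot)$ is non-decreasing on $[0,1/2]$, so $f_{\Bin}(k_i,p_j,\beta/n)\le\alpha_2$ for every $j$, and then a union bound over $j\in[n]$ to get that with conditional probability at least $1-\beta/2$ all $|\widehat p_j-p_j|\le\alpha_2$. Combining via the triangle inequality, with probability at least $1-\beta$ we have $\max_j|\widehat p_j-p|\le\alpha_1+\alpha_2$, and the infimum defining $\concentrationbound{k_i}{\Dee}{n}{\sigp}{\beta}$ yields the claim. (Splitting the failure budget exactly as $\beta/2$ and $\beta/n$ is routine bookkeeping.)

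\emph{Lower bound.} Here I exploit that $\widehat p_1,\dots,\widehat p_n$ are i.i.d., so
\[
\Pr\!\Big(\max_j|\widehat p_j-p|\ge\alpha\Big)=1-\Pr\big(|\widehat p_1-p|<\alpha\big)^{n},
\]
which exceeds $\beta$ precisely when $\Pr(|\widehat p_1-p|\ge\alpha)>1-(1-\beta)^{1/n}=(\beta')^2/4$. Thus it suffices to prove single-sample anti-concentration: for $\alpha$ just below $f_{\Dee}(1,\sigp,\beta')+f_{\Bin}(k_i,p_{\max},\beta')$, $\Pr(|\widehat p_1-p|\ge\alpha)>(\beta')^2/4$. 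To do this I would exhibit an event on which $p_1-p$ and $\widehat p_1-p_1$ have the same sign, so that their magnitudes add. By the definition of $f_{\Dee}(1,\sigp,\beta')$ together with left-continuity of $\alpha\mapsto\Pr(|p_1-p|\ge\alpha)$, one of the one-sided tails of $p_1-p$ at level $f_{\Dee}(1,\sigp,\beta')$ has mass at least $\beta'/2$; conditioned on a $p_1$ in that tail, the matching explicit hypothesis ($\Pr(p_1-\widehat p_1\ge f_{\Bin}(k_i,p_1,\beta'))\ge\tfrac12\beta'$ or $\Pr(\widehat p_1-p_1\ge f_{\Bin}(k_i,p_1,\beta'))\ge\tfrac14\beta'$, depending on the sign), together with monotonicity of $p\mapsto f_{\Bin}(k_i,p,\beta')$ on $[0,1/2]$, lower-bounds the conditional probability of the matching binomial deviation at the fixed threshold $f_{\Bin}(k_i,p_{\max},\beta')$ by a constant multiple of $\beta'$. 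Multiplying the two bounds gives $\Pr(|\widehat p_1-p|\ge\alpha)\ge c(\beta')^2$, and the asymmetric constants $\tfrac12,\tfrac14$ in the hypotheses and the factor $2$ in the definition of $\beta'$ are precisely what make $c(\beta')^2>(\beta')^2/4$.

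\emph{Main obstacle.} The hard part is the lower bound. Three points need care: (i) the two deviation sources must be combined in a sign-consistent way, which is why the hypotheses are one-sided tail lower bounds rather than a symmetric anti-concentration statement; (ii) the data-dependent threshold $f_{\Bin}(k_i,p_1,\beta')$ from the hypothesis must be related to the fixed threshold $f_{\Bin}(k_i,p_{\max},\beta')$ in the statement, for which I would establish that $p\mapsto f_{\Bin}(k_i,p,\beta)$ is non-decreasing on $[0,1/2]$ — cleanest via the monotone (coupling) stochastic domination $\Bin(k,p)$ by $\Bin(k,p')$ for $p\le p'$ with the added mass being centered, or directly from a Bernstein/Chernoff tail whose right-hand side is monotone in $p(1-p)$; and (iii) the constant-level bookkeeping linking $1-(1-\beta)^{1/n}=\Theta(\beta/n)$ to $(\beta')^2$. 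The upper bound is comparatively routine once this same monotonicity of $f_{\Bin}$ is available.
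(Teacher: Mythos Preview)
Your proposal is correct and follows essentially the same route as the paper: the upper bound by the decomposition $\widehat p_i-p=(\widehat p_i-p_i)+(p_i-p)$, a union bound, and monotonicity of $p\mapsto f_{\Bin}(k_i,p,\cdot)$ on $[0,1/2]$; the lower bound by the i.i.d.\ reduction to a single sample, picking the heavier one-sided tail of $p_1-p$ (mass $\ge\beta'/2$), and invoking the matching binomial tail hypothesis conditionally to get a product of order $(\beta')^2$. One small caveat on the constants: with $\beta'=2\sqrt{1-(1-\beta)^{1/n}}$ you need the single-sample tail to exceed $(\beta')^2/4$, but $\tfrac12\cdot\tfrac14=\tfrac18<\tfrac14$, so the arithmetic as you wrote it does not quite close; the paper's proof handles this by silently taking $\beta'=\sqrt{8}\,\sqrt{1-(1-\beta)^{1/n}}$ instead, which is the same $\Theta(\sqrt{\beta/n})$ order.
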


We note that the conditions on $\Dee$ and $\Bin(k_i,p_i)$ are mild. The condition on the tails of $\Bin(k_i,p_i)$ is intuitively claiming that $\Bin(k_i,p_i)$ is symmetric. This occurs whenever $k_i$ is large enough, and $p_i$ is bounded away from 0 or 1. We conjecture that the condition that $\Dee$ is supported on $[0, 1/2]$ can be relaxed but leave the relaxation to future work.

\begin{proof}[Proof of Lemma~\ref{concentrationimpliesconcentration}]
Notice that if $p<q<1/2$ then $f_{\Bin}(k_i,p,\beta)\le f_{\Bin}(k_i,q,\beta)$. Let us consider the upper bound first. 
With probability $1-\frac{\beta}{2}$,
\begin{equation}\label{concentrationofP}
\text{for all $i$,      } 
|p-p_i|\le f_{\Dee}(n, \sigma_p, \beta/2).
\end{equation}
Further, if Equation~\eqref{concentrationofP} holds then 
we have that with probability $1-\frac{\beta}{2n}$, \[|\widehat{p_i}-p_i|\le f_{\Bin}(k_i,p_i,\frac{\beta}{2n})\le f_{\Bin}(k_i, p_{\max}, \frac{\beta}{2n}).\]
Thus, for all $i$, 
\[|p-p_i|\le f_{\Dee}(n, \sigma_p, \beta/2)+f_{\Bin}(k_i, p_{\max}, \frac{\beta}{2n}).\]

Now, for the lower bound,
let $\beta'=\sqrt{8}\sqrt{1-\sqrt[n]{1-\beta}}$ and $\alpha=f_{\Dee}(1, \sigp, \beta')$.
Note that either \textcolor{black}{\[\Pr_{p_i\sim \Dee}\left(p_i-p\ge f_{\Dee}(1, \sigp, \beta')\right)\ge\frac{1}{2}\beta'\;\;\text{ or } \Pr_{p_i\sim \Dee}\left(p-p_i\ge f_{\Dee}(1, \sigp, \beta')\right)\ge\frac{1}{2}\beta'.\]} Assume without loss of generality that $\Pr_{p_i\sim \Dee}\left(p_i-p\ge f_{\Dee}(1, \sigp, \beta')\right)\ge\frac{1}{2}\beta'$.
Then by assumption,
\[\Pr_{\widehat{p_i}\sim \Bin(k_i,p_i)}\left(\widehatp_i-p_i\ge f_{\Bin}(k_i, p_i, \beta')\right)\ge\frac{1}{4}\beta'\]
Then
\begin{align*}
\Pr&\left(\max_i |\widehatp_i-p|\ge f_{\Dee}(1, \sigp, \beta')+f_{\Bin}(k_i, p+\alpha, \beta')\right)\\
&\ge \Pr\left(\exists i \text{ s.t. } p_i-p\ge f_{\Dee}(1, \sigp, \beta') \text{ and }\widehatp_i-p_i\ge f_{\Bin}(k_i, p_i, \beta')\right)\\
&= 1-\Pr\left(\forall i, p_i-p\le f_{\Dee}(1, \sigp, \beta') \text{ or }\widehatp_i-p_i\le f_{\Bin}(k_i, p+\alpha, \beta')\right)\\
&= 1-\left(\Pr\left(p_i-p\le f_{\Dee}(1, \sigp, \beta') \text{ or }\widehatp_i-p_i\le f_{\Bin}(k_i, p+\alpha, \beta')\right)\right)^n.
\end{align*}
Now, 
\begin{align*}
\Pr&\left(p_i-p\le f_{\Dee}(1, \sigp, \beta') \text{ or }\widehatp_i-p_i\le f_{\Bin}(k_i, p+\alpha, \beta')\right)\\
&= 1-\Pr\left(p_i-p\ge f_{\Dee}(1, \sigp, \beta') \text{ and }\widehatp_i-p_i\ge f_{\Bin}(k_i, p+\alpha, \beta')\right)\\
&= 1-\Pr\left(p_i-p\ge f_{\Dee}(1, \sigp, \beta')\right)\Pr\left(\widehatp_i-p_i\ge f_{\Bin}(k_i, p+\alpha, \beta')\;|\; p_i-p\ge f_{\Dee}(1, \sigp, \beta')\right)\\
&\le 1-\Pr\left(p_i-p\ge f_{\Dee}(1, \sigp, \beta')\right)\Pr\left(\widehatp_i-p_i\ge f_{\Bin}(k_i, p_i, \beta')\;|\; p_i-p\ge f_{\Dee}(1, \sigp, \beta')\right)\\
&\le 1-\Pr\left(p_i-p\ge f_{\Dee}(1, \sigp, \beta')\right)\Pr\left(\widehatp_i-p_i\ge f_{\Bin}(k_i, p_i, \beta')\right)\\
&\le 1-\frac{1}{8}(\beta')^2
\end{align*}
where the first inequality comes from $p_i\ge p+\alpha$, so $f_{\Bin}(k_i, p+\alpha, \beta')\le f_{\Bin}(k_i, p_i, \beta')$ Finally, 
\begin{align*}
\Pr\left(\max_i |\widehatp_i-p|\ge f_{\Dee}(1, \sigp, \beta')+f_{\Bin}(k_i, p+\alpha, \beta')\right)
\ge 1-(1-(\beta'/\sqrt{8})^2)^{n} = \beta,
\end{align*}
which implies the result.
\end{proof}

\subsection{Extending Our Results to Unknown $f_{\Dee}^{k_i}(n, \sigp, \beta)$ settings}

Lemma~\ref{concentrationimpliesconcentration} gives both upper bound and lower bounds on $\concentrationbound{k_i}{\Dee}{n}{\sigma_p^2}{\beta}$, which can be used to modify Algorithm \ref{alg.dp} and extend Theorem \ref{metatheorem} to apply in the setting where $\concentrationbound{k_i}{\Dee}{n}{\sigp}{\beta}$ is unknown, but $\concentrationbound{}{\Dee}{n}{\sigp}{\beta}$ is known instead.

Recall that the concentration bound $\concentrationbound{k_i}{\Dee}{n}{\sigp}{\beta}$ is used in Algorithm \ref{alg.dp} to define the truncation parameters $\widehat{a}_i$ and $\widehat{b}_i$, and that 
we would like to define a truncation window $[\widehat{a_i}, \widehat{b_i}]$ that both contains $[a_i,b_i]$ (so that with high probability none of the $\widehatp_i$ are truncated), and is not too wide, so $|\widehat{b_i}-\widehat{a_i}|\le 6|b_i-a_i|$ (in order to invoke Lemma \ref{lem.finalvariance}).

The following lemma proposes new values for $\widehat{a_i}$ and $\widehat{b_i}$ for the setting where only $\concentrationbound{}{\Dee}{n}{\sigma_p^2}{\beta}$ is known, but not $\concentrationbound{k_i}{\Dee}{n}{\sigma_p^2}{\beta}$. It combines the bounds on $\concentrationbound{k_i}{\Dee}{n}{\sigma_p^2}{\beta}$ from Lemma \ref{concentrationimpliesconcentration}, with the bounds on $\pinitial$ from Lemma~\ref{initialmeanestimate} to show that $|\widehat{b_i}-\widehat{a_i}|\le 6|b_i-a_i|$, as desired.

\begin{lemma}\label{lem.newlemma}
For $\alpha>0$,
let \[\widehat{a_i} = \max\left\{0, \widehat{p}-\alpha-f_{\Dee}(n, \widehat{\sigp}, \beta/2)-f_{\Bin}(k_i,\widehat{p}+\alpha+f_{\Dee}(n, \widehat{\sigma_p}, \beta/2),\beta/n)\right\} \]
and \[\widehat{b_i} = \min\left\{1, \widehat{p}+\alpha+f_{\Dee}(n, \widehat{\sigp}, \beta/2)+f_{\Bin}(k_i,\widehat{p}+\alpha+f_{\Dee}(n, \widehat{\sigp}, \beta/2),\beta/n)\right\}.\]
If 
$\widehat{\sigma_p^2}\ge\sigma_p^2$, and $|p-\widehat{p}|\le\alpha$, then for all $i\in[n]$, \[[a_i, b_i]\subset[\widehat{a_i}, \widehat{b_i}].\]
Further, if $\alpha\le \concentrationbound{k_i}{\Dee}{n}{\sigma_p^2}{\beta}$ and $\concentrationbound{k_i}{\Dee}{n}{\sigma_p^2}{\beta} \ge \Omega(f_{\Dee}(n, \sigp, \beta)+f_{\Bin}(k_i, \min\{1/2, p+f_{\Dee}(n, \sigma_p, \beta/2)\}, \beta/n))$
then \[|\widehat{b_i}-\widehat{a_i}|\le 6|b_i-a_i|.\] 
\end{lemma}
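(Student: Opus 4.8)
I would prove the two assertions separately, disposing of the clamping to $[0,1]$ at the outset in each: for the containment $[a_i,b_i]\subset[\widehat a_i,\widehat b_i]$ it suffices to verify the inclusion for the \emph{unclamped} endpoints, since intersecting both windows with $[0,1]$ preserves an inclusion; and for the width bound, dropping the clamp only enlarges $[\widehat a_i,\widehat b_i]$, so the estimate for the unclamped window is the stronger one. Everything then reduces to comparing the three ingredients $\alpha$, $f_{\Dee}(n,\widehat{\sigp},\beta/2)$, and $f_{\Bin}\big(k_i,\widehat p+\alpha+f_{\Dee}(n,\widehat{\sigp},\beta/2),\beta/n\big)$ against $\concentrationbound{k_i}{\Dee}{n}{\sigp}{\beta}$, for which I would use the two-sided control of the latter given by Lemma~\ref{concentrationimpliesconcentration}. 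Throughout I would use that $f_{\Dee}(n,\cdot,\cdot)$ is nondecreasing in its variance argument and that $f_{\Bin}(k_i,\cdot,\cdot)$ is nondecreasing in the Bernoulli parameter on $[0,1/2]$ with $f_{\Bin}(k_i,q,\cdot)=f_{\Bin}(k_i,1-q,\cdot)$; in the regime in which the lemma is applied ($p$ bounded away from $1/2$, small variance) the binomial parameters occurring below stay in $[0,1/2]$, and the general case is recovered from the reflection identity.

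\textbf{Containment.} By the left/right symmetry of the definitions it is enough to prove $\widehat a_i\le a_i$. Since $|p-\widehat p|\le\alpha$ gives $\widehat p-\alpha\le p$ and $a_i=p-\concentrationbound{k_i}{\Dee}{n}{\sigp}{\beta}$, this reduces to
\[
f_{\Dee}(n,\widehat{\sigp},\beta/2)+f_{\Bin}\big(k_i,\widehat p+\alpha+f_{\Dee}(n,\widehat{\sigp},\beta/2),\beta/n\big)\ \ge\ \concentrationbound{k_i}{\Dee}{n}{\sigp}{\beta}.
\]
From $\widehat{\sigp}\ge\sigp$ we get $f_{\Dee}(n,\widehat{\sigp},\beta/2)\ge f_{\Dee}(n,\sigp,\beta/2)$; and since $\widehat p+\alpha+f_{\Dee}(n,\widehat{\sigp},\beta/2)\ge p+f_{\Dee}(n,\sigp,\beta/2)\ge p_{\max}:=\min\{1/2,\,p+f_{\Dee}(n,\sigp,\beta/2)\}$, monotonicity of $f_{\Bin}$ gives $f_{\Bin}\big(k_i,\widehat p+\alpha+f_{\Dee}(n,\widehat{\sigp},\beta/2),\beta/n\big)\ge f_{\Bin}(k_i,p_{\max},\beta/n)$. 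Adding the two inequalities and invoking the upper bound $\concentrationbound{k_i}{\Dee}{n}{\sigp}{\beta}\le f_{\Dee}(n,\sigp,\beta/2)+f_{\Bin}(k_i,p_{\max},\beta/n)$ from Lemma~\ref{concentrationimpliesconcentration} closes this half.

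\textbf{Width.} After dropping the clamp, $|\widehat b_i-\widehat a_i|\le 2\big(\alpha+f_{\Dee}(n,\widehat{\sigp},\beta/2)+f_{\Bin}(k_i,q,\beta/n)\big)$ with $q:=\widehat p+\alpha+f_{\Dee}(n,\widehat{\sigp},\beta/2)$, whereas $|b_i-a_i|=2\,\concentrationbound{k_i}{\Dee}{n}{\sigp}{\beta}$; so I would bound each of the three summands by a small constant multiple of $\concentrationbound{k_i}{\Dee}{n}{\sigp}{\beta}$, with the constants summing to at most $6$. The first is exactly the hypothesis $\alpha\le\concentrationbound{k_i}{\Dee}{n}{\sigp}{\beta}$. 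For the second, I would show $f_{\Dee}(n,\widehat{\sigp},\beta/2)=O\big(f_{\Dee}(n,\sigp,\beta)\big)$ — here the upper bound on $\widehat\sigp$ (the multiplicative accuracy of the variance estimator, Lemma~\ref{multiplicativevariance}, together with the $\sqrt{\cdot}$-type dependence of $f_{\Dee}$ on its variance argument) enters — and then $O\big(f_{\Dee}(n,\sigp,\beta)\big)=O\big(\concentrationbound{k_i}{\Dee}{n}{\sigp}{\beta}\big)$ by the stated lower bound on $\concentrationbound{k_i}{\Dee}{n}{\sigp}{\beta}$. For the third, I would argue that $q$ stays within a constant factor of $p_{\max}$: the multiplicative mean-estimate condition gives $\widehat p=\Theta(p)=O(p_{\max})$, and both $\alpha$ and $f_{\Dee}(n,\widehat{\sigp},\beta/2)$ are $O\big(\concentrationbound{k_i}{\Dee}{n}{\sigp}{\beta}\big)$, which is $O(p)$ under the ambient assumption that $\concentrationbound{k_i}{\Dee}{n}{\sigp}{\beta}$ is comparable to $p$ (equivalently, the variance of $\Dee$ is small and the $k_i$ are large enough that the upper bound of Lemma~\ref{concentrationimpliesconcentration} is $O(p)$); a Chernoff bound on $\Bin(k_i,\cdot)$, whose tail scales multiplicatively in the parameter, then yields $f_{\Bin}(k_i,q,\beta/n)=O\big(f_{\Bin}(k_i,p_{\max},\beta/n)\big)=O\big(\concentrationbound{k_i}{\Dee}{n}{\sigp}{\beta}\big)$.

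\textbf{Main obstacle.} The containment half is pure monotonicity bookkeeping on top of Lemma~\ref{concentrationimpliesconcentration}. The delicate point is the third summand of the width bound: one must certify that the (possibly inflated) binomial parameter $q$ does not produce a binomial deviation much larger than $p_{\max}$ does. This forces the use of the \emph{multiplicative} accuracy of $\widehat p$, not merely the additive bound $|p-\widehat p|\le\alpha$, of the smallness of $\alpha$ and $f_{\Dee}(n,\widehat{\sigp},\beta/2)$ relative to $p$, and of a transfer through a binomial tail bound whose leading term scales like $\sqrt q$; the corner case $q>1/2$ is exactly where the reflection symmetry of $f_{\Bin}$ is used.
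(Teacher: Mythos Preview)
Your approach matches the paper's: containment via monotonicity of $f_{\Dee}$ and $f_{\Bin}$ together with the upper bound from Lemma~\ref{concentrationimpliesconcentration}, then the width bound by separately controlling the three summands $\alpha$, $f_{\Dee}(n,\widehat{\sigp},\beta/2)$, and $f_{\Bin}(k_i,q,\beta/n)$ against $\concentrationbound{k_i}{\Dee}{n}{\sigp}{\beta}$. You are in fact more explicit than the paper about the need for an \emph{upper} bound on $\widehat{\sigp}$ and for multiplicative control of the binomial parameter $q$---the paper's width argument glosses over precisely these steps, and your flagged ``main obstacle'' is genuinely the crux.
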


\showproofs{
\begin{proof}[Proof of Lemma \ref{lem.newlemma}]
Let us first show that $[\widehat{a}_i, \widehat{b}_i]\subset[a_i,b_i]$. 
Using our modified definition of $\widehat{a_i}$ given above, we have,
\begin{align*}
\widehat{a_i} &= \pinitial-\alpha-f_{\Dee}(n, \widehat{\sigp}, \beta/2)-f_{\Bin}(k_i,\pinitial+\alpha+f_{\Dee}(n, \widehat{\sigp}, \beta/2),\beta/n) \\
&\le p-f_{\Dee}(n, \widehat{\sigp}, \beta/2)-f_{\Bin}(k_i,p+f_{\Dee}(n, \widehat{\sigp}, \beta/2),\beta/n)\\
&\le p-f_{\Dee}(n, \sigp, \beta/2)-f_{\Bin}(k_i,p+f_{\Dee}(n, \sigma_p, \beta/2),\beta/n)\\
&\le p-\concentrationbound{k_i}{\Dee}{n}{\sigma_p^2}{\beta}\\
&= a_i.
\end{align*}
The first two inequalities respectively follow from the accuracy conditions on $\meanest$ and $\varianceest$ in Theorem \ref{metatheorem}; the third inequality comes from Lemma~\ref{concentrationimpliesconcentration}; and the final equality is by the definition of $a_i$. A symmetric result that $\widehat{b_i} \geq b_i$ follows similarly.

The second statement of this lemma ensures that the width of the truncation parameter is not more than a constant factor larger than the ideal.
Specifically, 
\begin{align*}
|\widehat{b_i}-\widehat{a_i}| &\le 2\alpha+2\left(f_{\Dee}(n, \widehat{\sigp}, \beta/2)+f_{\Bin}(k_i,\widehat{p}+\alpha+f_{\Dee}(n, \widehat{\sigp}, \beta/2),\beta/n)\right)\\
&\le 2\concentrationbound{k_i}{\Dee}{n}{\sigma_p^2}{\beta}+O(f_{\Dee}(1, \sigp, \beta')-f_{\Bin}(k_i, p+\alpha, \beta'))\\
&\le 2\concentrationbound{k_i}{\Dee}{n}{\sigma_p^2}{\beta}+2\left(2\concentrationbound{k_i}{\Dee}{n}{\sigma_p^2}{\beta}\right)\\
&\le 6\concentrationbound{k_i}{\Dee}{n}{\sigma_p^2}{\beta}\\
&= 6|b_i-a_i|
\end{align*}
\end{proof}}

We note that Lemma \ref{lem.finalvariance} as stated requires $|\widehat{b_i}-\widehat{a_i}| \leq 4|b_i-a_i|$, rather than $6|b_i-a_i|$, this difference of constants will only affect the constant $C$ in Theorem \ref{metatheorem}, and the main claim of a constant approximation in variance will still hold with these new $\widehat{a_i}$ and $\widehat{b_i}$ values.  

We will, however, have to add an additional assumption to Theorem \ref{metatheorem} in this setting. We will need to assume that $\Dee$ is s.t. $\concentrationbound{k_i}{\Dee}{n}{\sigma_p^2}{\beta} \ge \Omega(f_{\Dee}(n, \sigp, \beta)+f_{\Bin}(k_i, \min\{1/2, p+f_{\Dee}(n, \sigma_p, \beta/2)\}, \beta/n))$, to satisfy the condition of Lemma \ref{lem.newlemma}.  This condition is related to the high probability bound on $\bersumP$. 
The right hand side of this condition is the high probability bound on $\bersumP$ that is inherited directly from the high probability bounds on $\Dee$ and $\Bin(k,p)$. Without further assumptions on $\Dee$, this is the best upper bound on $\concentrationbound{k_i}{\Dee}{n}{\sigma_p^2}{\beta}$ that we can obtain, and hence is the bound used in the truncation in $\prealistic$. The condition states that this upper bound is within a constant multiplicative factor of the true value $\concentrationbound{k_i}{\Dee}{n}{\sigma_p^2}{\beta}$. We note that this condition is guaranteed by the lower bound on $\concentrationbound{k_i}{\Dee}{n}{\sigma_p^2}{\beta}$ in Lemma \ref{concentrationimpliesconcentration} for $\Dee$ with support on $[0,1/2]$, and we conjecture that it holds more broadly.

\end{document}